\theoremstyle{plain}
\newtheorem{theorem}{Theorem}[section]
\newtheorem{lemma}[theorem]{Lemma}
\newtheorem{corollary}[theorem]{Corollary}
\theoremstyle{definition}
\newtheorem{definition}[theorem]{Definition}
\theoremstyle{remark}
\newcommand{\matodo}[1]{\todo[inline, color=green!30]{{MA: #1}}}
\newcommand{\chtodo}[1]{\todo[inline, color=purple!30]{{CL: #1}}}
\newcommand{\algorithmnamelong}{Private Limit Adapted Noise (\algorithmname)\xspace}
\newcommand{\algorithmname}{\textsc{plan}\xspace}
\newcommand{\instanceoptimal}{\textsc{Instance-optimal mean estimation}\xspace}
\newcommand{\instanceoptimallong}{\textsc{Instance-optimal mean estimation} (\instanceoptimalshort)\xspace}
\newcommand{\instanceoptimalshort}{\textsc{iome}\xspace}
\newcommand{\coinpress}{\textsc{CoinPress}\xspace}
\newcommand{\zcdplong}{zero-Concentrated Differential Privacy (zCDP)\xspace}
\newcommand{\zcdp}{zCDP\xspace}
\newcommand{\universevariable}{\ensuremath{M}}
\newcommand{\casethem}{Gaussian A\xspace}
\newcommand{\caseplausible}{Gaussian B\xspace}
\newcommand{\caseshine}{Gaussian C\xspace}
\newcommand{\casebinary}{Binary\xspace}
\newcommand{\casekosarak}{Kosarak\xspace}
\newcommand{\casepos}{POS\xspace}
\newcommand\compactdots{\hbox to 1em{.\hss.\hss.}}
\newcommand{\scalingvec}{\vectorize{s}} 
\newcommand{\privatequantile}{\textsc{PrivQuantile}\xspace}
\newcommand{\elltwowellconcentrated}{$\vectorize{\sigma}$-well concentrated\xspace}
\newcommand{\probability}{\ensuremath{q}}
\newcommand{\ourparagraph}[2]{\noindent\emph{\textbf{#1}#2}.}
\crefname{lstlisting}{Listing}{Listings}
\definecolor{orangeish}{HTML}{f1a340} 
\definecolor{grayish}{HTML}{f7f7f7} 
\definecolor{purpleish}{HTML}{998ec3} 
\definecolor{blueish}{HTML}{004488}
\newcommand{\basicCodeStyle}{\ttfamily\small}
\newcommand{\keywordCodeStyle}{\bfseries}
\newcommand{\builtInCodeStyle}{\itshape\color{orangeish}}
\newcommand{\typesCodeStyle}{\ttfamily\small\color{blueish}}
\newcommand{\vectorize}[1]{\ensuremath{\boldsymbol{#1}}}
\newcommand{\pessimisticuniverse}{\ensuremath{100\sqrt{d}\max\left\{\vectorize{\sigma}\right\}}}
\lstdefinelanguage{plan}{
	classoffset=0,
	morekeywords={True, False, return, null, if, in, while, do, else, case, break, continue, for, def, log, sqrt, norm},
		keywordstyle=\keywordCodeStyle,
	classoffset=1,
	morekeywords={privQuant, recenter,  scale, clip, noise},
	keywordstyle=\typesCodeStyle,
	classoffset=2,
	morekeywords={estimateStd, rankError, divideBudget, size, scope, center},
	keywordstyle=\builtInCodeStyle,
	classoffset=0,
	morecomment=[s]{/*}{*/},
	morecomment=[l]//,
	morestring=[b]",
	morestring=[b]'
}
\begin{document}

\title{PLAN: Variance-Aware Private Mean Estimation}

\author{Martin Aumüller}
\affiliation{%
	\institution{IT University of Copenhagen}
	\city{Copenhagen}
	\state{}
	\country{Denmark}}
\email{maau@itu.dk}

\author{Christian Janos Lebeda}
\affiliation{%
	\institution{Basic Algorithms Research Copenhagen}
	\institution{IT University of Copenhagen}
	\state{}
	\city{Copenhagen}
	\country{Denmark}}
\email{christian.j.lebeda@gmail.com}

\author{Boel Nelson}
\affiliation{%
	\institution{Aarhus University}
	\state{}
	\city{Aarhus}
	\country{Denmark}}
	\email{boel@cs.au.dk}

\author{Rasmus Pagh}
	\affiliation{%
	\institution{Basic Algorithms Research Copenhagen}
	\institution{University of Copenhagen}
	\state{}
	\city{Copenhagen}
	\country{Denmark}}
	\email{pagh@di.ku.dk}

\begin{abstract}
Differentially private mean estimation is an important building block in privacy-preserving algorithms for data analysis and machine learning.
Though the trade-off between privacy and utility is well understood in the worst case, many datasets exhibit structure that could potentially be exploited to yield better algorithms.
In this paper we present \textit{\algorithmnamelong}, a family of differentially private algorithms for mean estimation in the setting where inputs are independently sampled from a distribution $\mathcal{D}$ over $\mathbf{R}^d$, with coordinate-wise standard deviations $\vectorize{\sigma} \in \mathbf{R}^d$.
Similar to mean estimation under Mahalanobis distance, \algorithmname tailors the shape of the noise to the shape of the data, but unlike previous algorithms the privacy budget is spent non-uniformly over the coordinates.
Under a concentration assumption on $\mathcal{D}$, we show how to exploit skew in the vector $\vectorize{\sigma}$, obtaining a (zero-concentrated) differentially private mean estimate with $\ell_2$ error proportional to $\|\vectorize{\sigma}\|_1$.
Previous work has either not taken $\vectorize{\sigma}$ into account, or measured error in Mahalanobis distance --- in both cases resulting in $\ell_2$ error proportional to $\sqrt{d}\|\vectorize{\sigma}\|_2$, which can be up to a factor $\sqrt{d}$ larger.
To verify the effectiveness of \algorithmname, we empirically evaluate accuracy on both synthetic and real-world data.
\end{abstract}

\keywords{differential privacy, mean estimation}

\maketitle

\section{Introduction}\label{sec:introduction}
Differentially private mean estimation is an important building block in many algorithms, notably in for example implementations of private stochastic gradient descent~\cite{abadi_deep_learning_2016, pichapati_adaclip_2019,denisov_improved_2022}.
While differential privacy is an effective and popular definition for privacy-preserving data processing, privacy comes at a cost of accuracy, and striking a good trade-off between utility and privacy can be challenging.
Achieving a good trade-off is especially hard for high-dimensional data, as the required \emph{noise} that ensures privacy increases with the number of dimensions.
Making matters worse, differential privacy operates on a worst-case basis.
Adding noise na\"ively may result in a \emph{one-size-fits-none} noise scale that, although private, fails to give meaningful utility for many datasets.

{\bf Motivating example.}
Consider the following toy example:
take a constant $\probability\in (0,1)$ and let $\lambda \ll 1/d$ be a value close to zero.
We consider a set of vectors $\vectorize{x}^{(1)},\dots,\vectorize{x}^{(n)} \in\mathbf{R}^d$ where, independently,
\begin{align*}
	\vectorize{x}^{(j)} =
\begin{cases}
	(1,\lambda,\dots,\lambda) & \text{with probability } \probability\\
	(0,0,\dots,0) & \text{with probability } 1-\probability \enspace .
\end{cases}
\end{align*}
Though the nominal dimension is $d$, the distribution is ``essentially 1-dimensional''.
Thus, we should be able to release a private estimate of the mean $(\probability, \lambda \probability, \dots ,\lambda \probability)$ with about the same precision as a scalar value with sensitivity~1.
However, previous private mean estimation techniques either:
\begin{itemize}
\item add the same amount of noise to all $d$ dimensions, making the $\ell_2$ norm of the error a factor $\Theta(\sqrt{d})$ larger, or
\item split the privacy budget evenly across the $d$ dimensions, making the noise added to the first coordinate a factor $\Theta(\sqrt{d})$ larger than in the scalar case.
\end{itemize}
Instead, we would like to adapt to the situation, and spend most of the privacy budget on the first coordinate while still keeping the noise on the remaining $d-1$ coordinates low.
Dimension reduction using private PCA~\citep{amin_differentially_2019,dwork_analyze_2014, hardt_noisy_2014} could be used, but we aim for something simpler and more general: to spend more budget on coordinates with higher variance.

We explore the design space of variance-aware budget allocations and show that, perhaps surprisingly, the intuitive approach of splitting the budget across coordinates proportional to their spread  is \emph{not optimal} for $\ell_2$ error, and $\ell_p$ error in more generality. This is in stark contrast to the case of measuring error in Mahalanobis distance for which this allocation is optimal~\citep{brown_covariance-aware_2021}.
Another surprise is that we achieve better bounds than the \emph{instance-optimal} algorithm proposed by Huang et al. in~\cite{huang_instance-optimal_2021} in the case where variances \emph{are skewed}.
Using the idea of variance-aware budgeting, we design a family of algorithms, \algorithmnamelong, that distributes the privacy budget optimally.
For the analysis, we assume that an estimate on the variances is known. 
Since this is often unrealistic in a practical setting, 
we propose and experimentally evaluate different methods to obtain a differentially private estimate on the variances. 
Even using such rough estimates, \algorithmname is competitive to, or better than the state-of-the-art solutions in practice.
The contributions of this work cover both theoretical and practical ground:
\begin{itemize}
	\item We present \algorithmnamelong (\Cref{sec:algorithm}), a family of algorithms for differentially private mean estimation for $\ell_2$ error, and its accompanying privacy and utility analysis (\Cref{sec:analysis}).
	\item We formalize, introduce, and exemplify the notion of \emph{\elltwowellconcentrated} distributions (\Cref{sec:analysis} \& \Cref{sec:well-concentrated}) that allow for the study of variance-aware algorithms for mean estimation.
	\item We generalize the analysis of \algorithmname to hold for arbitrary $\ell_p$ error (\Cref{sec:analysis-generalization}).
	\item We prove that \algorithmname matches the utility of the current state-of-the-art algorithm~\cite{huang_instance-optimal_2021} for differentially private mean estimation up to constant factors in expectation, even if no estimate on the variances is known or if there are correlations in the data. 
	This makes \algorithmname a suitable replacement for mean estimation in general settings without requiring computationally expensive random rotations (\Cref{sec:generic-bounds}).
	\item We implement two instantiations of \algorithmname, for $\ell_1$ and $\ell_2$ error respectively, and empirically evaluate the algorithms on synthetic and real-world datasets (\Cref{sec:experiments}), proposing and evaluating several techniques for differentially private variance estimation (\Cref{app:std-evaluation}).
\end{itemize}

\paragraph{Scope.}
We consider the setting where we have independent samples $\vectorize{x}^{(1)},\dots,\vectorize{x}^{(n)}$ from a distribution $\mathcal{D}$ over $\mathbf{R}^d$, and use $\vectorize{\sigma}\in\mathbf{R}^d$ to denote the vector where $\vectorize{\sigma}_i$ is the standard deviation of the $i$th coordinate of a sample from~$\mathcal{D}$.
Our objective is to estimate the mean $\vectorize{\mu}$ of $\mathcal{D}$ with small $\ell_p$ error.
This is a natural error measure in many settings. For example, $p=1$ is analogous to the mean absolute error, while $p=2$ represents the mean squared error.
Other interesting applications arise if input vectors:
\begin{itemize}
\item represent probability distributions over $\{1,\dots,d\}$, so that $\ell_1$ error corresponds to variation distance, or
\item are binary and model $d$ counting queries, so that small $\ell_1$ and $\ell_2$ error corresponds to small mean error and root mean square error, respectively.
\end{itemize}
Our exposition will focus primarily on the case of $\ell_2$ error, which is most easily compared to previous work. We remark that in the special case of  $\mathcal{D}$ being a Gaussian distribution, \emph{Mahalanobis distance} (\Cref{sec:distance-measures}) is a natural, well-understood~\citep{brown_covariance-aware_2021} error measure.

\algorithmname's privacy guarantees are expressed as $\rho$-zero-Concentrated Differential Privacy ($\rho$-zCDP), where similar results follow for other notions, such as approximate differential privacy (\Cref{lem:zcdp-eddp-conversion}).
We consider the number of inputs, $n$, fixed --- the neighboring relation is changing one vector among the inputs.

In the \emph{non-private} setting, the empirical mean $\tfrac{1}{n}\sum_j \vectorize{x}^{(j)}$ is known to yield the smallest $\ell_2$ error of size $O(\|\vectorize{\sigma}\|_2/\sqrt{n})$, which is also called the \emph{sampling error}. Since this sampling error is unavoidable, we are interested in the setting where the error introduced by differential privacy is larger than the sampling error.

\paragraph{Our results.}

We show, both theoretically and empirically, that a carefully chosen privacy budgeting can improve the $\ell_p$ error compared to existing methods when the vector \vectorize{\sigma} is skewed.
First, we provide a characterization of distributions $\mathcal{D}$ that are suitable for our budget allocation.
We say that $\mathcal{D}$ is \emph{$\vectorize{\sigma}$-well concentrated} if, roughly speaking, the norm of distance to the mean $\vectorize{\mu}$ of vectors sampled from $\mathcal{D}$ is unlikely to be much larger than vectors sampled from a multivariate exponential distribution with standard deviations given by $\vectorize{\sigma}$ or some root of $\vectorize{\sigma}$.
In the case $p=2$ our upper bound is particularly simple to state:
\begin{theorem}\label{thm:main} (simplified version)
	Suppose $\mathcal{D}$ is $\vectorize{\sigma}$-well concentrated and that we know $\vectorize{\hat{\sigma}}$ such that $\|\vectorize{\sigma} - \vectorize{\hat{\sigma}}\|_\infty < \|\vectorize{\sigma}\|_1/d$.
	Then for $n = \tilde{\Omega}\left(\max\left(\sqrt{d/\rho}, \rho^{-1}\right)\right)$ there is a mean estimation algorithm that satisfies $\rho$-zCDP and has expected $\ell_2$ error
	\[ \tilde{O}(1 + \|\vectorize{\sigma}\|_2/\sqrt{n} + \|\vectorize{\sigma}\|_1/(n\sqrt{\rho})), \]
	where $\tilde{O}$ suppresses polylogarithmic factors in $d$, $n$, and a bound on the $\ell_\infty$ norm of input vectors.
\end{theorem}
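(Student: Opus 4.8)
The plan is to analyze a three-phase algorithm: (i) a coarse, private \emph{recentering} that estimates each coordinate of $\vectorize{\mu}$ well enough to position a clipping interval; (ii) \emph{clipping} each coordinate $i$ to a window of radius $R_i = \Theta(\vectorize{\hat{\sigma}}_i \cdot \mathrm{polylog})$ around that center; and (iii) releasing the clipped empirical mean with \emph{coordinate-wise Gaussian noise} whose variance is shaped to $\vectorize{\hat{\sigma}}$. The final error then splits into three pieces: the non-private sampling error of the empirical mean, $O(\|\vectorize{\sigma}\|_2/\sqrt{n})$; the error injected by the privacy noise; and the bias from recentering and clipping, which I will absorb into the $\tilde O(1)$ term. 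The privacy budget $\rho$ is divided, by composition of \zcdp, between phase (i) and phase (iii), each receiving a constant fraction.

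The heart of the argument is the budget allocation in phase (iii). After clipping, changing one input changes coordinate $i$ of the empirical mean by at most $\Delta_i = O(R_i/n) = \tilde O(\vectorize{\hat{\sigma}}_i/n)$, so the worst-case neighboring change is a vector with $i$th entry bounded by $\Delta_i$. Adding independent noise $\eta_i \sim \mathcal{N}(0,\tau_i^2)$ satisfies $\rho'$-\zcdp provided $\tfrac12\sum_i \Delta_i^2/\tau_i^2 \le \rho'$, while its expected squared $\ell_2$ error is $\sum_i \tau_i^2$. I would minimize $\sum_i \tau_i^2$ subject to this constraint; Lagrange multipliers (equivalently Cauchy--Schwarz) place the optimum at $\tau_i^2 \propto \Delta_i$, i.e. noise \emph{standard deviation} scaling like $\sqrt{\vectorize{\hat{\sigma}}_i}$ rather than $\vectorize{\hat{\sigma}}_i$. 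Substituting back gives expected error
\[ \Big(\textstyle\sum_i \tau_i^2\Big)^{1/2} = \|\vectorize{\Delta}\|_1/\sqrt{2\rho'} = \tilde O\!\big(\|\vectorize{\hat{\sigma}}\|_1/(n\sqrt{\rho})\big) . \]
This is the crucial gain: the uniform-noise and the Mahalanobis-optimal ($\tau_i \propto \Delta_i$) allocations both pay $\sqrt{d}\,\|\vectorize{\Delta}\|_2/\sqrt{2\rho'}$, which by Cauchy--Schwarz can be a factor $\sqrt{d}$ worse than $\|\vectorize{\Delta}\|_1/\sqrt{2\rho'}$ when $\vectorize{\sigma}$ is skewed.

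Next I would control phases (i) and (ii). The \wellconcentrated assumption is exactly what bounds the clipping bias: with the window radius a polylog multiple of $\vectorize{\hat{\sigma}}_i$, it guarantees that only a tiny fraction of samples fall outside the window, so each coordinate's clipping bias is negligible and summing over coordinates contributes only $\tilde O(1)$ to the $\ell_2$ error. To replace $\vectorize{\hat{\sigma}}$ by $\vectorize{\sigma}$ in the noise term I use the hypothesis $\|\vectorize{\sigma}-\vectorize{\hat{\sigma}}\|_\infty < 1/d$, which yields $\|\vectorize{\hat{\sigma}}\|_1 \le \|\vectorize{\sigma}\|_1 + 1$; since $n = \tilde\Omega(\rho^{-1})$ forces $1/(n\sqrt{\rho}) = \tilde O(1)$, the discrepancy folds cleanly into the additive $\tilde O(1)$.

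The main obstacle is phase (i): obtaining, with only a constant fraction of the budget, a per-coordinate center accurate to within $O(\vectorize{\sigma}_i)$, so that clipping at radius $\Theta(\vectorize{\hat{\sigma}}_i\cdot\mathrm{polylog})$ neither clips too much mass nor biases the estimate. I would realize it with \privatequantile, whose rank error is \worstcaserankerror; translating a rank error into a value error of size $O(\vectorize{\sigma}_i)$ requires the rank error to be a small fraction of $n$, which is precisely where $n = \tilde\Omega(\sqrt{d/\rho})$ enters. The condition $n = \tilde\Omega(\rho^{-1})$ additionally keeps the shaped per-coordinate noise within a constant factor of $\vectorize{\sigma}_i$, so the concentration bounds remain valid and the clipping window is not swamped by noise. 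Finally I would union-bound the sampling deviation against the Gaussian tails and compose the per-phase \zcdp guarantees to conclude the stated high-probability $\ell_2$ bound.
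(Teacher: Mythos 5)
Your noise-calibration step is exactly the paper's key insight: in \Cref{app:calibration} the paper carries out the same Lagrange-multiplier optimization and arrives at the same conclusion, namely that the optimal per-coordinate noise \emph{variance} is proportional to the per-coordinate sensitivity (standard deviation $\propto \sqrt{\hat{\sigma}_i}$ rather than $\propto \hat{\sigma}_i$), giving total squared noise $\|\vectorize{\Delta}\|_1^2/(2\rho')$ and hence the $\|\vectorize{\sigma}\|_1/(n\sqrt{\rho})$ term; your recentering phase likewise matches \Cref{lemma:coordinatewise:median}. The genuine gap is in your clipping step. You clip each coordinate to a box of radius $R_i=\tilde{\Theta}(\hat{\sigma}_i)$ and assert that the \wellconcentrated hypothesis ``guarantees that only a tiny fraction of samples fall outside the window,'' so that the bias folds into $\tilde{O}(1)$. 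It does not: \Cref{def:well-concentrated} constrains only the tails of the \emph{norms} $\sum_i(X_i-\mu_i)^2$ and $\sum_i(X_i-\mu_i)^2/\hat{\sigma}_i$, not per-coordinate tails at scale $\sigma_i$. A distribution can be $\vectorize{\sigma}$-well concentrated while coordinate $i$ places mass $1/\log d$ at distance $\sigma_i\sqrt{\log d}$ above $\mu_i$ (a skewed two-point marginal); box clipping at radius $O(\sigma_i \cdot \mathrm{polyloglog})$ then incurs per-coordinate bias $\Omega(\sigma_i/\sqrt{\log d})$, i.e.\ an $\ell_2$ bias of order $\|\vectorize{\sigma}\|_2/\sqrt{\log d}$ --- far larger than the claimed $\tilde{O}(1)+\|\vectorize{\sigma}\|_2/\sqrt{n}$. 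Even the generic Chebyshev--Cauchy--Schwarz bound only gives clipping bias $\|\vectorize{\sigma}\|_2/\mathrm{polylog}$, which does not vanish into the $\tilde{O}(1)$ term.

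The paper avoids this by clipping to an ellipsoid rather than a box: after rescaling by $\hat{\Sigma}^{-1/4}$ it clips vectors to $\ell_2$ norm at most $C$, where $C$ is a \emph{privately chosen empirical quantile} of the rescaled norms (\Cref{alg:clipping}) tuned so that only about $k+\sqrt{1/\rho}=\tilde{O}(\sqrt{n}+\sqrt{1/\rho})$ points are clipped at all. Each clipped point contributes $\tilde{O}(\|\vectorize{\sigma}\|_2)$ to the error by the first concentration condition, so the total clipping error is $\tilde{O}(\|\vectorize{\sigma}\|_2/\sqrt{n})$, matching the sampling error; the second concentration condition is then used only to bound $C^2=\tilde{O}(\|\vectorize{\sigma}\|_1)$, which is what controls the noise magnitude. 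To repair your argument you would need either to impose per-coordinate subexponential tail assumptions (strictly stronger than the paper's hypothesis), or to replace the fixed per-coordinate box by a data-dependent norm threshold and bound the clipping error by counting clipped points rather than by summing per-coordinate biases.
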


Comparing \Cref{thm:main} to applying the Gaussian mechanism directly provides a useful example.
Given the vectors $\vectorize{x}^{(1)},\dots,\vectorize{x}^{(n)}$ sampled from $\mathcal{N}(\vectorize{\mu}, \Sigma)$ for $\Sigma = \text{diag}(\vectorize{\sigma^2})$ where $\|\vectorize{\mu}\|_2 = O(\|\vectorize{\sigma}\|_2)$, and clipping at $C = \Theta(\|\vectorize{\sigma}\|_2)$, would give an $\ell_2$ error bound of
\[
\tilde{O}(\|\vectorize{\sigma}\|_2/\sqrt{n} + \sqrt{d} \|\vectorize{\sigma}\|_2/(n\sqrt{\rho})) \enspace .
\]
Crucially, \Cref{thm:main} is never worse, but can be better than applying the Gaussian mechanism directly.
Instance-optimal mean estimation~\citep{huang_instance-optimal_2021} matches the bound achieved by the Gaussian mechanism in this case since the diameter of the dataset is $\Theta(\|\vectorize{\sigma}\|_2)$.
The value of $\|\vectorize{\sigma}\|_1$ is in the interval $[\|\vectorize{\sigma}\|_2, \sqrt{d} \|\vectorize{\sigma}\|_2]$, where a smaller value of $\|\vectorize{\sigma}\|_1$ indicates larger skew.
Thus, \Cref{thm:main} is strongest when \vectorize{\sigma} has a skewed distribution such that $\|\vectorize{\sigma}\|_1$ is close to the lower bound of $\|\vectorize{\sigma}\|_2$, assuming that the privacy parameter $\rho$ is small enough that the error due to privacy dominates the sampling error.

When no estimate on the variance is known, \algorithmname still achieves compelling worst-case guarantees on the $\ell_2$ error. Theorem~\ref{thm:diameter:bound} showcases that skipping the scaling step, we recover the worst-case guarantees of instance-optimal mean estimating~\cite{huang_instance-optimal_2021}, without the need for an expensive random rotation of the input data. 
Our experimental evaluation in Section~\ref{sec:experiments} provides further support that even rough estimates on the variances suffice to achieve an advantage over state-of-the-art competitors, even if data is correlated. 
This makes \algorithmname a drop-in replacement for applications using differentially private statistical mean estimation.

\section{Preliminaries}\label{sec:background}
We provide privacy guarantees via \zcdplong in the \emph{bounded setting}, where the dataset has a fixed size, as defined in this section.
Since previous work measure error using different distance measures, we give both the definition for $\ell_p$ error (which we use), as well as Mahalanobis distance.
Lastly, we give an overview of private quantile estimation which is a central building block of our algorithm.

\subsection{Differential privacy}\label{sec:dp}
Differential privacy~\cite{dwork_calibrating_2006} is a statistical property of an algorithm that limits information leakage by introducing controlled randomness to the algorithm.
Formally, differential privacy restricts how much the output distributions can differ between any neighboring datasets.
We say that a pair of datasets are neighboring, denoted $x \sim x'$, if and only if there exists an $j \in [n]$ such that $\vectorize{x}^{(i)} = \vectorize{x}'^{(i)}$ for all $i \neq j$.

\chtodo{Do we need the definition of approximate DP? I don't think we use it anywhere.}

\begin{definition}[\cite{dwork_calibrating_2006} ($\varepsilon$, $\delta$)-Differential Privacy]\label{def:approximate-dp}
A randomized mechanism $\mathcal{M}$ satisfies ($\varepsilon$, $\delta$)-DP if and only if for all pairs of neighboring datasets $x \sim x'$ and all set of outputs $Z$ we have
$$\Pr[\mathcal{M}(x) \in Z] \leq e^\varepsilon \Pr[\mathcal{M}(x') \in Z] + \delta$$
\end{definition}

\begin{definition}[\cite{bun_concentrated_2016} \zcdplong]\label{def:zcdp}
Let $\mathcal{M}$ denote a randomized mechanism satisfying $\rho$-zCDP for any $\rho > 0$. Then for all $\alpha > 1$ and all pairs of neighboring datasets $x \sim x'$ we have
$$D_\alpha\left(\mathcal{M}(x) \vert \vert \mathcal{M}(x')\right) \leq \rho \alpha ,$$
where $D_\alpha(X \vert \vert Y)$ denotes the $\alpha$-Rényi divergence between two distributions $X$ and $Y$.
\end{definition}

\begin{lemma}[\cite{bun_concentrated_2016} \zcdp to $(\varepsilon, \delta)$-DP conversion]\label{lem:zcdp-eddp-conversion}
    If $\mathcal{M}$ satisfies $\rho$-zCDP, then $\mathcal{M}$ is $(\varepsilon, \delta)$-DP for any $\delta > 0$ and $\varepsilon=\rho + 2\sqrt{\rho \log(1/\delta)}$.
\end{lemma}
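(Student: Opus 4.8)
The plan is to prove this as the standard conversion from the Rényi-divergence formulation of zCDP to a tail bound on the privacy loss, following \cite{bun_concentrated_2016}. Fix neighboring datasets $x \sim x'$, write $P = \mathcal{M}(x)$ and $Q = \mathcal{M}(x')$, and define the privacy loss random variable $L = \log\frac{dP}{dQ}(o)$ for $o \sim P$. The first step is to recall that the $\alpha$-Rényi divergence is exactly the scaled cumulant generating function of $L$, namely $D_\alpha(P \| Q) = \frac{1}{\alpha-1}\log \mathbf{E}_{o \sim P}\!\left[e^{(\alpha-1)L}\right]$, so that the zCDP hypothesis $D_\alpha(P\|Q) \le \rho\alpha$ from \Cref{def:zcdp} is equivalent to the moment bound $\mathbf{E}_{o\sim P}[e^{(\alpha-1)L}] \le e^{(\alpha-1)\alpha\rho}$ for every $\alpha > 1$.

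Next I would turn this moment bound into a tail bound by a Chernoff argument. Writing $t = \alpha - 1 > 0$ and applying Markov's inequality to $e^{tL}$ gives, for any target $\varepsilon$,
\[ \Pr_{o\sim P}[L > \varepsilon] \le e^{-t\varepsilon}\,\mathbf{E}_{o\sim P}[e^{tL}] \le \exp\bigl(t(t+1)\rho - t\varepsilon\bigr). \]
Optimizing the exponent $\rho t^2 + (\rho - \varepsilon)t$ over $t > 0$ (the minimizer is $t = (\varepsilon-\rho)/(2\rho)$, valid whenever $\varepsilon > \rho$) yields the sub-Gaussian tail $\Pr_{o\sim P}[L > \varepsilon] \le \exp\bigl(-(\varepsilon-\rho)^2/(4\rho)\bigr)$. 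Setting the right-hand side equal to $\delta$ and solving for $\varepsilon$ gives precisely $\varepsilon = \rho + 2\sqrt{\rho\log(1/\delta)}$, the value claimed in the statement.

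The final step is to convert the privacy-loss tail bound into the $(\varepsilon,\delta)$ guarantee of \Cref{def:approximate-dp}. For an arbitrary output set $S$, I would split it according to the privacy loss: on the region $\{L \le \varepsilon\}$ the density ratio satisfies $dP \le e^\varepsilon\,dQ$, so that part contributes at most $e^\varepsilon Q(S)$, while the region $\{L > \varepsilon\}$ contributes at most $\Pr_{o\sim P}[L>\varepsilon] \le \delta$ by the tail bound. Adding the two pieces gives $P(S) \le e^\varepsilon Q(S) + \delta$; since the neighboring relation $x \sim x'$ is symmetric, the same argument with the roles of $x$ and $x'$ exchanged gives the reverse inequality, establishing $(\varepsilon,\delta)$-DP.

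The calculations here are routine, so the only genuinely delicate points are (i) the optimization over the Rényi order $\alpha$, which is what produces the characteristic $2\sqrt{\rho\log(1/\delta)}$ term and must be checked to be feasible ($\varepsilon > \rho$, i.e.\ $\alpha > 1$), and (ii) the measure-theoretic care needed to define $L$ and the density ratio $dP/dQ$ cleanly when $P$ and $Q$ are not mutually absolutely continuous, since handling the singular part correctly is what makes the splitting argument rigorous. As the result is standard and already attributed to \cite{bun_concentrated_2016}, within the paper it suffices to cite it, but the sketch above is the self-contained argument.
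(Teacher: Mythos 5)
Your sketch is correct and is exactly the standard Bun--Steinke argument (Chernoff bound on the privacy-loss random variable via the R\'enyi moment bound, optimized at $\alpha = (\varepsilon+\rho)/(2\rho)$, followed by splitting an output event on $\{L \le \varepsilon\}$); the paper itself gives no proof and simply cites \cite{bun_concentrated_2016}, so there is nothing to diverge from. One small simplification you could note: if $\mathcal{M}(x)$ is not absolutely continuous with respect to $\mathcal{M}(x')$ then $D_\alpha$ is infinite for $\alpha>1$, so the zCDP hypothesis already rules out the singular case you were worried about.
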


\begin{lemma}[\cite{bun_concentrated_2016} Composition]\label{lem:composition-zcdf}
    If $M_1$ and $M_2$ satisfy $\rho_1$-zCDP and $\rho_2$-zCDP, respectively.
    Then $M=(M_1, M_2)$ satisfies $(\rho_1 + \rho_2)$-zCDP.
\end{lemma}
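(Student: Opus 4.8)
The plan is to work directly from the definition of $\rho$-zCDP via the Rényi-divergence bound in \Cref{def:zcdp}, exploiting that $\alpha$-Rényi divergence behaves additively under composition. I would prove the slightly more general \emph{adaptive} version, in which $M_2$ may depend on the output of $M_1$; writing $M=(M_1,M_2)$ as a pair subsumes the non-adaptive case and the argument is identical. Writing $p_x(\cdot)$ for the density of $M_1(x)$ and $q_{x,y_1}(\cdot)$ for the density of $M_2$ on input $x$ after observing $y_1$, the joint output density of $M(x)$ factorizes as $p_x(y_1)\,q_{x,y_1}(y_2)$. Substituting into the definition of $\alpha$-Rényi divergence and separating the double integral gives
\[ \exp\!\big((\alpha-1)D_\alpha(M(x)\|M(x'))\big) = \int p_x(y_1)^\alpha p_{x'}(y_1)^{1-\alpha}\left(\int q_{x,y_1}(y_2)^\alpha q_{x',y_1}(y_2)^{1-\alpha}\,dy_2\right)dy_1 . \]

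Next I would bound the two pieces in turn. For each fixed $y_1$, the inner integral equals $\exp\!\big((\alpha-1)D_\alpha(M_2(x,y_1)\|M_2(x',y_1))\big)$, which by the $\rho_2$-zCDP guarantee of $M_2$ is at most $\exp\!\big((\alpha-1)\rho_2\alpha\big)$ \emph{uniformly in} $y_1$. Pulling this constant factor out of the outer integral leaves exactly $\exp\!\big((\alpha-1)D_\alpha(M_1(x)\|M_1(x'))\big)$, which by the $\rho_1$-zCDP guarantee of $M_1$ is at most $\exp\!\big((\alpha-1)\rho_1\alpha\big)$. Taking logarithms and dividing by $\alpha-1>0$ yields $D_\alpha(M(x)\|M(x'))\leq(\rho_1+\rho_2)\alpha$ for every $\alpha>1$ and every neighboring pair $x\sim x'$, which is precisely the definition of $(\rho_1+\rho_2)$-zCDP.

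The one step I would be most careful with is the uniformity of the inner bound over $y_1$: because $M_2$ satisfies zCDP against the neighboring pair $x\sim x'$ for \emph{every} value it might condition on, the factor $\exp((\alpha-1)\rho_2\alpha)$ comes out of the outer integral regardless of $y_1$, which is exactly what makes the two divergences add rather than interact. Everything else is bookkeeping with the exponents and monotonicity of $\log$; no inequality beyond the two zCDP hypotheses is invoked, and the non-adaptive product case is recovered by taking $q_{x,y_1}$ independent of $y_1$. (For continuous output spaces the integrals are with respect to a common dominating measure; for discrete outputs they are sums, and the argument is unchanged.)
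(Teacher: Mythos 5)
The paper does not prove this lemma itself --- it is imported verbatim from Bun and Steinke \cite{bun_concentrated_2016} --- and your argument is precisely the standard proof of that result: factor the joint density, bound the inner Rényi integral uniformly in $y_1$ using the zCDP guarantee of $M_2$, and then apply the guarantee of $M_1$ to what remains. The proof is correct (including the correct handling of the sign of $\alpha-1$ and the uniformity over the conditioning value, which is the one place such arguments usually go wrong), and it even establishes the stronger adaptive version, so there is nothing to fix.
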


The Gaussian Mechanism adds noise from a Gaussian distribution independently to each coordinate of a real-valued query output.
The scale of the noise depends on the privacy parameter and the $\ell_2$-sensitivity denoted $\Delta_2$.
A query $q$ has sensitivity $\Delta_2$ if for all $x \sim x'$ we have $\| q(x) - q(x') \|_2 \leq \Delta_2$.

\begin{lemma}[\cite{bun_concentrated_2016} The Gaussian Mechanism]\label{lem:gauss-zcdp}
    If $q:\mathbf{R}^{n \times d} \rightarrow \mathbf{R}^d$ is a query with sensitivity $\Delta_2$ then releasing $\mathcal{N}(q(x),\frac{\Delta_2^2}{2\rho}\mathbf{I})$ satisfies $\rho$-zCDP.
\end{lemma}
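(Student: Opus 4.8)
The plan is to reduce the statement directly to the definition of $\rho$-\zcdp in \Cref{def:zcdp}, which requires bounding the $\alpha$-Rényi divergence $D_\alpha(\mathcal{M}(x) \,\|\, \mathcal{M}(x'))$ by $\rho\alpha$ for every $\alpha > 1$ and every neighboring pair $x \sim x'$. Writing $\sigma^2 = \Delta_2^2/(2\rho)$, the output $\mathcal{M}(x)$ is the multivariate Gaussian $\mathcal{N}(q(x), \sigma^2 \mathbf{I})$, so the two distributions being compared are Gaussians that share the isotropic covariance $\sigma^2 \mathbf{I}$ and differ only in their means $q(x)$ and $q(x')$. The entire argument therefore hinges on knowing the Rényi divergence between two same-covariance Gaussians.

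First I would establish the closed form
\[
D_\alpha\!\left(\mathcal{N}(\vectorize{\mu}_1, \sigma^2 \mathbf{I}) \,\|\, \mathcal{N}(\vectorize{\mu}_2, \sigma^2 \mathbf{I})\right) = \frac{\alpha \, \|\vectorize{\mu}_1 - \vectorize{\mu}_2\|_2^2}{2\sigma^2} \enspace .
\]
Because the covariance is diagonal and the coordinates are independent, the divergence of the product distribution factors as a sum of one-dimensional divergences, each scalar Gaussian pair contributing $\alpha (\vectorize{\mu}_{1,i} - \vectorize{\mu}_{2,i})^2/(2\sigma^2)$; summing over $i$ yields the displayed quadratic form. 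The cleanest route is to evaluate the defining integral $\tfrac{1}{\alpha - 1}\log \int p^\alpha q^{1-\alpha}$ for the scalar Gaussian density and then sum. This is the one genuinely computational step, and the main (mild) obstacle is carrying out the Gaussian integral and simplifying the exponent to the quadratic form above.

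With the formula in hand, the remainder is a substitution. The sensitivity hypothesis gives $\|q(x) - q(x')\|_2 \leq \Delta_2$ for all neighboring $x \sim x'$, so
\[
D_\alpha(\mathcal{M}(x) \,\|\, \mathcal{M}(x')) = \frac{\alpha \, \|q(x) - q(x')\|_2^2}{2\sigma^2} \leq \frac{\alpha \, \Delta_2^2}{2\sigma^2} = \frac{\alpha \, \Delta_2^2}{2 \cdot \Delta_2^2/(2\rho)} = \rho \alpha \enspace .
\]
Since this bound holds uniformly over all $\alpha > 1$ and all neighboring pairs, the condition of \Cref{def:zcdp} is satisfied and the mechanism is $\rho$-\zcdp. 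I would note in passing that no symmetrization or case analysis is needed, since the divergence depends on the two datasets only through $\|q(x) - q(x')\|_2$, which the sensitivity bound controls symmetrically.
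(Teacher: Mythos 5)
Your proof is correct and is the standard argument: the paper itself gives no proof of \Cref{lem:gauss-zcdp}, importing it directly from \citet{bun_concentrated_2016}, and your derivation --- computing $D_\alpha\bigl(\mathcal{N}(\vectorize{\mu}_1,\sigma^2\mathbf{I})\,\|\,\mathcal{N}(\vectorize{\mu}_2,\sigma^2\mathbf{I})\bigr) = \alpha\|\vectorize{\mu}_1-\vectorize{\mu}_2\|_2^2/(2\sigma^2)$ coordinate-wise and substituting $\sigma^2 = \Delta_2^2/(2\rho)$ together with the sensitivity bound --- is exactly the proof given in that reference. The arithmetic checks out ($\alpha\Delta_2^2 / (2\cdot\Delta_2^2/(2\rho)) = \rho\alpha$), so there is nothing to add.
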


\subsection{Distance measures}\label{sec:distance-measures}

Here we introduce the distance measures for the error of a mean estimate.
Let $\vectorize{\tilde{\mu}} = \mathcal{M}(x)$ denote the mean estimate of a distribution with mean $\vectorize{\mu}$.
We measure the utility of our mechanism in terms of expected $\ell_p$ error as defined below.
Here $p$ is a parameter of our mechanism where the most common values for $p$ are $1$ (Manhattan distance), and $2$ (Euclidean distance).

\begin{definition}[$\ell_p$ error] \label{def:lp-error}
    For any real value $p \geq 1$ the $\ell_p$ error is
    \[
        \| \vectorize{\tilde{\mu}} - \vectorize{\mu} \|_p = \left(\sum_{i=1}^d \vert \tilde{\mu}_i - \mu_i \vert^p \right)^{1/p}
    \]
\end{definition}

Note that we sometimes present error guarantees in the form of the $p$th moment, i.e. $\mathbf{E}[\| \vectorize{\tilde{\mu}} - \vectorize{\mu} \|_p^p]$, when it follows naturally from the analysis.
Notice that the $p$th moment bounds the expected $\ell_p$ error for any $p \geq 1$ as $\mathbf{E}[\|\vectorize{\tilde{\mu}} - \vectorize{\mu}\|_p] \leq (\mathbf{E}[\|\vectorize{\tilde{\mu}} - \vectorize{\mu}\|_p^p])^{1/p}$.

As stated in Section~\ref{sec:related-work} several previous work on mean estimates measure error in Mahalanobis distance.
Although this is not the focus of our work we include the definition here for completeness.
A key difference between Mahalanobis distance measure and $\ell_p$ error is that the directions are weighted by the uncertainty of the underlying data.
We weight the error in all coordinates equally.

\begin{definition}[Mahalanobis distance]\label{def:mahalanobis}
    The error in Mahalanobis distance of a mean estimate
    for a distribution with covariance matrix $\Sigma$ is defined as
    \[
        \|\vectorize{\tilde{\mu}} - \vectorize{\mu}\|_{\Sigma} = \| \Sigma^{-1/2} (\vectorize{\tilde{\mu}}-\vectorize{\mu}) \|_2
    \]
\end{definition}

\subsection{Private quantile estimation}\label{sec:private-quantiles}

Our work builds on using differentially private quantiles, whose rank error (i.e. how many elements away from the desired quantile the output is) we will define for our utility analysis.
When estimating the $q$'th quantiles of a sequence $\vectorize{z}^{(1)} \dots \vectorize{z}^{(n)}$ with $\vectorize{z}^{(j)} \in \mathbf{R}^{d}$, we ideally return a vector $\vectorize{\tilde{z}} \in \mathbf{R}^{d}$ such that for each coordinate $i$ we have $\vert \{j \in [n] : \vectorize{z}_i^{(j)} \leq \vectorize{\tilde{z}}_i\}\vert / n = q$.
We use the notation $\privatequantile_{\rho}^M(\vectorize{z}^{(1)} \dots \vectorize{z}^{(n)}, q)$ to denote a $\rho$-zCDP mechanism that estimates the $q$'th quantiles of $\vectorize{z}^{(1)} \dots \vectorize{z}^{(n)}$ where each coordinate is bounded to $[-M,M]$.

There are multiple applicable choices for the instantiation of \privatequantile from the literature, e.g. \cite{smith_privacy-preserving_2011,huang_instance-optimal_2021, kaplan_differentially_2022}.
In this paper we will use the binary search based quantile \cite{smith_privacy-preserving_2011,huang_instance-optimal_2021} for our utility analysis (\Cref{sec:analysis}), and the exponential mechanism based quantile by \citet{kaplan_differentially_2022} in our empirical evaluation (\Cref{sec:experiments}).
Note that the choice of instantiation of \privatequantile has no effect on the privacy analysis of \algorithmname.
The binary search based method gives us cleaner theoretical results because the error guarantee of the exponential mechanism based technique is highly data-dependent, and the error is large for worst-case input.
Empirically, however, the exponential mechanism based quantile is more robust, which is why we use it in our experiments.

The binary search based quantile subroutine performs a binary search over the interval $[-M, M]$.
At each step of the search, branching is based on a quantile estimate for the midpoint of the current interval.
Since branching must be performed privately the privacy budget needs to be partitioned in advance, this limits the binary search to $T$ iterations.
\citet{huang_instance-optimal_2021} describes the algorithm for discrete input where $T$ is the base 2 logarithm of the size of the input domain.
Treating $T$ as a parameter of the algorithm gives us the following guarantees for $1$-dimensional input.

\begin{lemma}[Follows from~{\citet[Theorem~1]{huang_instance-optimal_2021}}]\label{lemma:privQuant-guarantees}
    \privatequantile satisfies $\rho'$-zCDP and with probability at least $1-\beta$ it returns an interval containing at least one point with rank error bounded by $\sqrt{T \log(T/\beta)/(2\rho')}$.
\end{lemma}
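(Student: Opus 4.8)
The plan is to analyze the binary-search subroutine underlying \privatequantile one round at a time, viewing it as an adaptive composition of $T$ private comparison steps over the interval $[-M,M]$. In each round we maintain a candidate subinterval, query the noisy rank of its midpoint $m$ among $\vectorize{z}^{(1)},\dots,\vectorize{z}^{(n)}$, and branch left or right according to whether this noisy rank lies above or below the target $qn$. The only primitive is thus a thresholded count $r(m) = |\{j : \vectorize{z}^{(j)} \leq m\}|$, to which we add Gaussian noise; the total budget $\rho'$ is split evenly, so each of the $T$ rounds receives budget $\rho'/T$.

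First I would establish privacy, which is routine. The count $r(m)$ has $\ell_2$-sensitivity $1$, since replacing a single input changes it by at most one. By \Cref{lem:gauss-zcdp}, releasing $r(m) + \mathcal{N}(0, T/(2\rho'))$ satisfies $(\rho'/T)$-zCDP in one round. Since the branching in each round depends only on that round's noisy count, the whole search is an adaptive composition of $T$ such releases, and \Cref{lem:composition-zcdf} gives $T \cdot (\rho'/T) = \rho'$-zCDP.

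The substance is the rank-error guarantee. Write $\sigma = \sqrt{T/(2\rho')}$ for the per-round noise scale and let $Z_1,\dots,Z_T$ be the realized noise terms. I would first prove a deterministic statement: conditioned on $\max_t |Z_t| \leq \tau$, the returned interval contains a point whose rank differs from $qn$ by at most $\tau$. This uses monotonicity of $r(\cdot)$: any midpoint with true rank below $qn - \tau$ has noisy rank below $qn$ and is correctly sent right, and symmetrically for rank above $qn + \tau$; hence the search never discards the band of points with rank within $\tau$ of $qn$, and the final interval must meet this band. It then remains to control $\max_t |Z_t|$ by a Gaussian tail bound together with a union bound over the $T$ rounds, giving $\max_t |Z_t| = O(\sigma\sqrt{\log(T/\beta)}) = O(\sqrt{T\log(T/\beta)/(2\rho')})$ with probability at least $1-\beta$, which matches the claimed bound.

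The main obstacle is this deterministic coupling between noise magnitude and rank error, namely verifying that no single mis-branch can push the search irrecoverably past the target band, so that the error is governed by the single worst round rather than accumulating across rounds. Pinning down the exact constant (as opposed to the $O(\cdot)$ estimate above) is where I would defer to the precise accounting of \citet[Theorem~1]{huang_instance-optimal_2021}, from which the clean factor $\sqrt{T\log(T/\beta)/(2\rho')}$ follows; the remaining work is bookkeeping on the discretization of $[-M,M]$ into $T$ levels and the translation between ``interval containing a good point'' and the target quantile $q$.
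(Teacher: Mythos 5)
The paper does not actually prove this lemma --- it is imported verbatim from \citet[Theorem~1]{huang_instance-optimal_2021}, so there is no in-paper argument to compare against. Your reconstruction of the underlying proof is essentially correct and matches the standard argument behind that cited result: splitting $\rho'$ evenly over the $T$ rounds, observing that the rank count $r(m)$ has sensitivity $1$ under the swap-one-record neighboring relation so that per-round noise has variance $T/(2\rho')$ by \Cref{lem:gauss-zcdp}, composing via \Cref{lem:composition-zcdf}, and then bounding the rank error by $\max_t |Z_t|$ through the binary-search invariant (the current interval $[a,b]$ always satisfies $r(a) \leq qn + \tau$ and $r(b) \geq qn - \tau$ when all noise magnitudes are at most $\tau$), finished with a Gaussian tail plus union bound over the $T$ rounds. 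The one place where your sketch is looser than the stated bound is the constant: a union bound over $T$ two-sided Gaussian tails with per-round standard deviation $\sqrt{T/(2\rho')}$ naturally yields $\sqrt{T\log(T/\beta)/\rho'}$ rather than $\sqrt{T\log(T/\beta)/(2\rho')}$, i.e., a $\sqrt{2}$ gap; you correctly flag that you are deferring the exact accounting to the cited theorem, which is the same thing the paper does, so this is not a substantive gap.
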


When estimating the quantiles of multiple dimensions, we split the privacy budget evenly across each dimension such that $\rho'=\rho/d$ for each invocation of \privatequantile.
By composition releasing all quantiles satisfies $\rho$-zCDP.

\begin{lemma}\label{lemma:approximate-quantile}
    With probability at least $1-\beta$, $\privatequantile_\rho^M$ returns a point that for all coordinates is within a distance of $M2^{-T}$ of a point with rank error at most $\sqrt{d T \log(T d /\beta)/2\rho}$ for the desired quantile.
\end{lemma}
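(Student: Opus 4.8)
The plan is to lift the one-dimensional guarantee of \Cref{lemma:privQuant-guarantees} to $d$ dimensions by combining a per-coordinate application with a union bound, while tracking the geometric contraction of the binary search to establish the $M2^{-T}$ distance claim. Privacy is inherited directly: since $\privatequantile_\rho^M$ is defined via composition of $d$ subroutines each allotted budget $\rho/d$ and each satisfying $(\rho/d)$-zCDP, \Cref{lem:composition-zcdf} already yields the required $\rho$-zCDP, so no additional privacy argument is needed here.

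First I would handle the rank error. I would invoke \Cref{lemma:privQuant-guarantees} on a single coordinate, but with its failure probability set to $\beta/d$ rather than $\beta$, and with budget $\rho' = \rho/d$. Substituting $\rho' = \rho/d$ and $\beta' = \beta/d$ into the one-dimensional rank-error bound $\sqrt{T\log(T/\beta')/(2\rho')}$ yields exactly $\sqrt{dT\log(Td/\beta)/(2\rho)}$ for that coordinate, matching the claimed bound. A union bound over the $d$ coordinates then shows that the probability that any coordinate violates this bound is at most $d \cdot (\beta/d) = \beta$, so all coordinates simultaneously satisfy the rank-error bound with probability at least $1-\beta$.

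Second I would establish the distance claim. The binary search starts on the interval $[-M,M]$ of width $2M$ and halves it $T$ times, so the final interval has width $2M \cdot 2^{-T}$. \Cref{lemma:privQuant-guarantees} guarantees that this final interval contains a point whose rank error obeys the bound above; returning the midpoint then places the output within half the interval width, i.e. within $M2^{-T}$, of that good point. Applying this coordinate-wise delivers the stated distance guarantee for all coordinates simultaneously on the same high-probability event.

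The argument is essentially a bookkeeping exercise, so I do not expect a genuine obstacle; the one point requiring care is the choice of per-coordinate failure probability. It must be set to $\beta/d$ (not $\beta$) so that the union bound closes at $\beta$, and this is precisely what introduces the extra factor of $d$ inside the logarithm and converts the one-dimensional $\sqrt{T\log(T/\beta)/(2\rho')}$ into the $d$-dimensional $\sqrt{dT\log(Td/\beta)/(2\rho)}$.
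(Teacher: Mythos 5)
Your proposal is correct and follows essentially the same route as the paper's proof: a per-coordinate application of \Cref{lemma:privQuant-guarantees} with budget $\rho/d$ and failure probability $\beta/d$, a union bound over the $d$ coordinates, and the observation that $T$ halvings of $[-M,M]$ leave intervals of width $2M\cdot 2^{-T}$ whose midpoints are within $M2^{-T}$ of the guaranteed point. The paper states this much more tersely; your version simply makes the bookkeeping (the $\beta/d$ choice and the parameter substitution yielding the factor $d$ inside the square root and logarithm) explicit.
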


\begin{proof}
    Running binary search for $T$ iterations splits up the range $[-M,M]$ in $2^T$ evenly sized intervals.
    By a union bound there is a point with claimed rank error in each of the intervals returned by \privatequantile.
    Returning the midpoint of each interval ensures we are at most distance $M2^{-T}$ from said point.
\end{proof}

Throughout this paper, we set $T=\log_2(M)$ unless otherwise specified such that the error distance of Lemma~\ref{lemma:approximate-quantile} is $1$.

\section{Algorithm}\label{sec:algorithm}

\begin{figure*}[t]
	\centering
	\includegraphics[width=\linewidth]{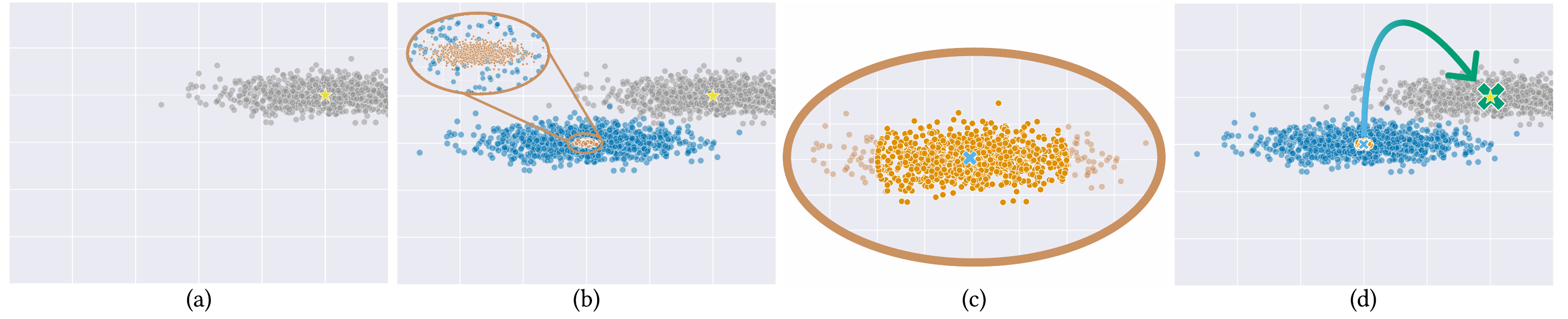}
	\caption{Step-by-step illustration of \algorithmname (\Cref{alg:our-algorithm}):
		 (a) Raw data, with statistical mean (yellow star),
		(b) Recentering (blue) and scaling (orange) corresponding to \Cref{alg:recenter-and-scale},
		(c) Clipping, as determined by \Cref{alg:estimate:clipping},
		(d) Private mean (green cross).
	}
	\label{fig:algorithm-illustrated}
\end{figure*}

Conceptually, \algorithmnamelong is a data-aware family of algorithms that exploits variance in the input data to tailor noise to the specific data at hand.
Since it is a family of algorithms, a \algorithmname needs to be instantiated for a given problem domain, i.e. for a specific $\ell_p$ error and data distribution.
In \Cref{sec:experiments} we showcase two such instantiations, using $\ell_1$ error for binary data, and using $\ell_2$ for Gaussian data.
Pseudocode for \algorithmname is shown in \Cref{alg:our-algorithm}, and a step-by-step illustration of \algorithmname is provided in \Cref{fig:algorithm-illustrated}.

\subsection{\algorithmname overview}\label{sec:plan-overview}

Based on the input data (\Cref{alg:input}, \Cref{fig:algorithm-illustrated}~(a)) \algorithmname first computes a private, rough, approximate mean $\vectorize{\tilde{\mu}}$ (\Cref{alg:center-prediction}).
\algorithmname then recenters the data around $\vectorize{\tilde{\mu}}$, and scales the data (\Cref{alg:recenter-and-scale}, \Cref{fig:algorithm-illustrated}~(b)) according to an estimate on the variance $\vectorize{\hat{\sigma}}^2$:
this scaling allows the privacy budget to be spent unevenly across dimensions.
Next, \algorithmname privately estimates a clipping threshold that contains most of the scaled data points (\Cref{alg:estimate:clipping}) 
and samples sufficient Gaussian noise for the privacy budget (\Cref{alg:draw-noise}).
Using the clipping threshold and the noise vector, \Cref{alg:plan-mean-estimation} in \algorithmname clips the scaled input  centered around~$\vectorize{\tilde{\mu}}$ (\Cref{fig:algorithm-illustrated}~(c)),
adds noise to the clipped data, and finally inverts the scaling and the recentering (\Cref{fig:algorithm-illustrated}~(d)). 
Each of these techniques has been used for private mean estimation before; we show that scaling data differently can lead to smaller $\ell_p$ error.

\begin{algorithm}[htb]
   \caption{\algorithmname}
   \label{alg:our-algorithm}
\begin{algorithmic}[1]
   \State \textbf{Input:} $\vectorize{x}^{(1)},\dots,\vectorize{x}^{(n)}\in \mathbf{R}^d$, estimate $\vectorize{\hat{\sigma}}^2 \in \mathbf{R}^d$  
   \label{alg:input}

   \State \begin{varwidth}[t]{\linewidth}
      \textbf{Parameters:} $M$ (\text{coordinate bound}), $p$ (\text{$\ell_p$ error norm}), \par
      \phantom{m}$\rho_1$ (\text{recentering budget}), $\rho_2$ (\text{threshold b.}), $\rho_3$ (\text{noise b.})
      \end{varwidth}
   
   \label{alg:parameters}

   \State $\hat{\Sigma} \gets \text{diag}(\vectorize{\hat{\sigma}}^2)$; $k \leftarrow \sqrt{n}+\Theta\left(\sqrt{1/\rho_2}\right)$
   \label{algo:set:parameters}

   \State $\vectorize{\tilde{\mu}} \leftarrow \privatequantile_{\rho_1}^M(\vectorize{x}^{(1)},\dots, \vectorize{x}^{(n)}, 1/2)$
   \label{alg:center-prediction}

   \State $\vectorize{y}^{(j)} \leftarrow (\vectorize{x}^{(j)} - \vectorize{\tilde{\mu}})\, \hat{\Sigma}^{-1/(p+2)}$ for $j=1,\dots,n$
   \label{alg:recenter-and-scale}

   \State $C \leftarrow \privatequantile_{\rho_2}^{M\sqrt{d}}(\|\vectorize{y}^{(1)}\|_2, \dots, \|\vectorize{y}^{(n)}\|_2, \tfrac{n-k}{n})$
   \label{alg:estimate:clipping}

   \State sample $\vectorize{\eta} \sim \mathcal{N}(\mathbf{0},\frac{2C^2}{\rho_3}\mathbf{I})$\label{alg:draw-noise}

   \State\textbf{return} $\vectorize{\tilde{\mu}}+\tfrac{1}{n} \left(\sum_j \min\left\{\frac{C}{\|\vectorize{y}^{(j)}\|_2}, 1\right\} \vectorize{y}^{(j)} + \vectorize{\eta}\right) \hat{\Sigma}^{1/(p+2)}$
   \label{alg:plan-mean-estimation}
\end{algorithmic}
\end{algorithm}

\subsection{\algorithmname building blocks}\label{sec:plan-details}

\chtodo{$\text{PrivQuantile}^M$ has been defined to take input from [-M,M]. But on line 5 input is always positive. It is not important but it does mean we are "wasting" one iteration if we run the algorithm as described.}
\matodo{Make it clear what the input/output of PrivQuantile is.}

Like \instanceoptimallong~\cite{huang_instance-optimal_2021}, our algorithm computes a rough estimate of the mean as a private estimate $\vectorize{\tilde{\mu}}\in\mathbf{R}^d$ of the coordinate-wise median.
This step uses the assumption that all coordinates are in the interval $[-M,M]$. %
It is known that a finite output domain is needed for private quantile selection to be possible~\cite{bun_differentially_2015}. %
Suppose $\ell_p$ is the error measure we are aiming to minimize.
The next step is to scale coordinates by multiplying with the diagonal matrix $\hat{\Sigma}^{-1/(p+2)}$, where $\hat{\Sigma} = \text{diag}(\vectorize{\hat{\sigma}}^2)$ is the diagonal matrix of variance estimates $\vectorize{\hat{\sigma}}^2$.
Note that since $\hat{\Sigma}$ is a diagonal matrix, it is not necessarily close to the covariance matrix of $\mathcal{D}$ outside of the diagonal. 

Note that for $p=2$ the exponent of $\hat{\Sigma}$ is $-1/4$, which is different from the exponent of $-1/2$ that would be used in order to compute an estimate with small Mahalanobis distance~(See e.g.~\cite{brown_covariance-aware_2021}).
In fact, our choice of scaling lies between two natural choices. 
If we removed scaling the magnitude of noise would be the same across all coordinates which corresponds to the standard Gaussian mechanism. 
This is problematic if many coordinates have small variance compared to others as we would still add a lot of noise to the low variance coordinates.
On the other hand, scaling by $\hat{\Sigma}^{-1/2}$ results in noise at each coordinate with magnitude linear proportional to $\hat{\sigma}_i$. 
That approach adds excessive noise to the high variance coordinates.
We strike a middle ground when scaling by $\hat{\Sigma}^{-1/4}$.
We add less noise than the first approach for low variance coordinates, and less noise than the second approach for high variance coordinates.
Our choice of scaling is one of the main contributions of our work. 
A key feature of the scaling is that we adjust to the error measure.
We discuss scaling further in \Cref*{app:calibration} where we show that our choice of scaling is optimal for mean estimation in a simpler setting. 

The scaling stretches the $i$th coordinate by a factor $\hat{\sigma}_i^{-2/(p+2)}$.
Since $\hat{\sigma}^2_i \approx \sigma^2_i$ this changes the standard deviation on the $i$th coordinate from $\sigma_i$ to roughly $\sigma_i^{p/(p+2)}$.
Next, we compute vectors $\vectorize{y}^{(j)}$ that represent the (stretched) differences $\vectorize{x}^{(j)}-\vectorize{\tilde{\mu}}$.
Conceptually, we now want to estimate the mean of $\vectorize{y}^{(1)},\dots,\vectorize{y}^{(n)}$, which in turn implies an estimate of $(\tfrac{1}{n}\sum_j \vectorize{x}^{(j)}) - \vectorize{\tilde{\mu}}$.
The mean of $\vectorize{y}^{(1)},\dots,\vectorize{y}^{(n)}$ is estimated using the Gaussian mechanism.
In order to find a suitable scaling of the noise we privately find a quantile $C$ of the lengths $\|\vectorize{y}^{(1)}\|_2,\dots,\|\vectorize{y}^{(n)}\|_2$ (all shorter than $M\sqrt{d}$ by assumption)
such that approximately $k$ vectors have length larger than $C$, and clip vectors to length at most $C$.
Clipping, private mean estimation, scaling, and adding back $\vectorize{\tilde{\mu}}$ are all condensed in the estimator in \Cref{alg:plan-mean-estimation} of \Cref{alg:our-algorithm}.

\section{Analysis}\label{sec:analysis}

We will show that \algorithmname returns a private mean estimate that---assuming $\mathcal{D}$ is well behaved---has small expected $\ell_p$ error with probability at least $1-\beta$.
All probabilities are over the joint distribution of the input samples and the randomness of the \algorithmname mechanism.
For simplicity we focus on the case $p=2$, but the analysis extends to any $p\geq 1$ as we will show at the end of this section.

We make the following assumptions:
\begin{itemize}
    \item We assume that for a known parameter $M$, all inputs are in $\vectorize{x}^{(j)}\in [-M, M]^d$ for $j=1,\dots,n$. (If this is not the case, the algorithm will clip inputs to this cube, introducing additional clipping error.)
	Also, we assume that data has been scaled sufficiently such that $\sigma_i \geq 1$ for $i=1,\dots,d$. (We discuss what to do if this is not the case in Section~\ref{sec:low-variance}). 
    \item We are given a vector $\vectorize{\hat{\sigma}}\in\mathbf{R}^d$ such that
	\begin{align}\label{assumption:variance-estimates}
		\sigma_i \leq \hat{\sigma}_i <  \sigma_i + \|\vectorize{\sigma}\|_1/d \enspace .
	\end{align}
	If no such vector is known we will have to compute it, spending part of the privacy budget. We consider this a separate question and evaluate possible techniques in Section~\ref{sec:experiments}.
\end{itemize}

\begin{definition}
	\label{def:well-concentrated}
	Consider a distribution $\mathcal{D}$ over $\mathbf{R}^d$, denote the mean and standard deviation of the $i$th coordinate by $\mu_i$ and $\sigma_i$, respectively.
	We say that $\mathcal{D}$ is \emph{$\vectorize{\sigma}$-well concentrated} if for any vector $\vectorize{\hat{\sigma}}\in \mathbf{R}^d$ with	$\sigma_i \leq \hat{\sigma}_i <  \sigma_i + \|\vectorize{\sigma}\|_1/d$, the following holds for $t > \ln d$
	\begin{align}
    & \Pr_{X \sim \mathcal{D}}\left[ \sum_{i=1}^d (X_i - \mu_i)^2 > t \|\vectorize{\sigma}\|_2^2 \right] = \exp(-{\Omega}(t))\label{assumption:concentrated} \\
	& \Pr_{X \sim \mathcal{D}}\left[ \sum_{i=1}^d (X_i - \mu_i)^2/\hat{\sigma}_i > t \|\vectorize{\sigma}\|_1 \right] = \exp(-{\Omega}(t)) \enspace \label{assumption:scaled_concentrated}
	\end{align}
\end{definition}
Intuitively, these assumptions require concentration of measure of the norms before and after scaling.

\subsection*{Analysis outline}\label{sec:analysis-outline}

\paragraph{Privacy.}
It is not hard to see that \algorithmname satisfies $\rho$-zCDP with $\rho = \rho_1+\rho_2+\rho_3$:
The computation of $\vectorize{\tilde{\mu}}$ satisfies $\rho_1$-zCDP and the computation of $C$ satisfies $\rho_2$-zCDP by definition of \privatequantile.
Finally, given the values $C$ and $\vectorize{\tilde{\mu}}$ the $\ell_2$-sensitivity of $\sum_j \min\left\{\frac{C}{\|\vectorize{y}^{(j)}\|_2}, 1\right\} \vectorize{y}^{(j)}$ with respect to an input $\vectorize{x}^{(j)}$ is~$2C$, so adding Gaussian noise with variance $2C^2/\rho_3$ gives a $\rho_3$-zCDP mean estimate.
By composition, and since the returned estimator is a post-processing of these private values, the estimator is $\rho$-zCDP with $\rho = \rho_1+\rho_2+\rho_3$.

\paragraph{Utility.}
\cite{huang_instance-optimal_2021,kamath_privately_2019} present multiple lower bounds that show that there exist inputs such that mean estimation in $\mathbf{R}^d$ requires $n=\Omega(\sqrt{d/\rho})$ samples to achieve meaningful utility.
In our analysis, we need the same assumption and require $n = \tilde{\Omega}(\sqrt{d/\rho})$, where the $\tilde{\Omega}$ notation hides polylogarithmic factors in $d$, $\beta$, and $M$.
Let $\hat{\Sigma}$ denote the scaling matrix $\text{diag}(\vectorize{\hat{\sigma}}^2)$.
In the following, we assume that $\rho_1,\rho_2,\rho_3$ are fixed fractions of $\rho$.

The utility analysis has three parts:
\begin{enumerate}
	\item First, we argue that $|\tilde{\mu}_i - \mu_i| < 3\sigma_i$ for all $i=1,\dots,d$ with high probability.
	\item Let $J$ denote the set of indices for which $\|\vectorize{y}^{(j)}\|_2 > C$.
	We argue that with high probability,
	\[\sum_{j\in J} \| \vectorize{x}^{(j)} - \vectorize{\tilde{\mu}} \|_2 = \tilde{O}\left(\left(k + \sqrt{\tfrac{1}{\rho}}\right) \|\vectorize{\sigma}\|_2\right),\]
	bounding the error due to clipping.
	\item Finally, we argue that with high probability
	\[\left\| \vectorize{\eta}\, \hat{\Sigma}^{\tfrac{1}{p+2}} \right\|^2_2 = \tilde{O}(\|\vectorize{\sigma}\|_1^2 / \rho),\]
	bounding the error due to Gaussian noise.
\end{enumerate}

\subsection{Part 1: Bounding $|\vectorize{\tilde{\mu}} - \vectorize{\mu}|$}\label{sec:analysis-bounding-mu}

In the following $K$ is a universal constant chosen to be sufficiently large.

\begin{lemma}\label{lemma:coordinatewise:median}
    Assuming $n > K \max ( \sqrt{d \log(M) \log( \log(M) d/\beta) / \rho},\allowbreak \log(d/\beta) )$ then with probability $1-\beta$, the coordinate-wise median ${\tilde{\mu}_i}$ satisfies ${\tilde{\mu}}_i \in [{\mu}_i - 3{\sigma}_i, {\mu}_i + 3{\sigma}_i]$ for all $i=1,\dots,d$.
\end{lemma}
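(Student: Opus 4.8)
The plan is to separate the two sources of randomness: the privacy noise inside the \privatequantile subroutine and the sampling of the $\vectorize{x}^{(j)}$ from $\mathcal{D}$. For the first, I would invoke \Cref{lemma:approximate-quantile} with budget $\rho_1 = \Theta(\rho)$ and $T = \log_2(M)$, so that the returned coordinate $\tilde{\mu}_i$ lies within distance $M 2^{-T} = 1$ of some value $z_i$ whose empirical rank differs from $n/2$ by at most $r := \sqrt{d\,T\log(Td/\beta)/(2\rho_1)}$. Since $\sigma_i \ge 1$ by assumption, the distance-$1$ slack is at most $\sigma_i$, so it suffices to show that every $z_i$ of empirical rank within $r$ of the median satisfies $z_i \in [\mu_i - 2\sigma_i,\, \mu_i + 2\sigma_i]$; this yields $\tilde\mu_i \in [\mu_i - 3\sigma_i, \mu_i + 3\sigma_i]$.

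To convert a rank guarantee into a value guarantee I would pin down where the population and empirical quantiles can sit. By Chebyshev's inequality the CDF $F_i$ of coordinate $i$ satisfies $F_i(\mu_i - 2\sigma_i) \le 1/4$ and $F_i(\mu_i + 2\sigma_i) \ge 3/4$. A Hoeffding bound then controls the empirical CDF: for a fixed threshold the number of samples below it concentrates around $n F_i(\cdot)$ up to $O(\sqrt{n\log(d/\beta)})$ with probability $1 - \beta/(2d)$ per event. Hence, union-bounding over the two thresholds $\mu_i \pm 2\sigma_i$ and the $d$ coordinates, with probability $1 - \beta/2$ the count of samples at most $\mu_i + 2\sigma_i$ is at least $3n/4 - O(\sqrt{n\log(d/\beta)})$ while the count at most $\mu_i - 2\sigma_i$ is at most $n/4 + O(\sqrt{n\log(d/\beta)})$. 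By monotonicity of the empirical CDF, any $z_i > \mu_i + 2\sigma_i$ then has rank exceeding $n/2 + r$, and any $z_i < \mu_i - 2\sigma_i$ has rank below $n/2 - r$, contradicting the rank-error guarantee --- provided the slack condition $n/4 \ge r + O(\sqrt{n\log(d/\beta)})$ holds.

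It then remains to verify this last inequality from the hypothesis $n > K\sqrt{d/\rho}\log(d/\beta)\log(M)$. Substituting $T = \log_2(M)$ and $\rho_1 = \Theta(\rho)$ gives $r = \Theta\bigl(\sqrt{d/\rho}\,\sqrt{\log(M)\log(Td/\beta)}\bigr)$, whence $n/r = \Omega\bigl(\log(d/\beta)\sqrt{\log(M)}/\sqrt{\log(Td/\beta)}\bigr)$, which exceeds $8$ once $K$ is a sufficiently large universal constant; the same lower bound on $n$ makes $O(\sqrt{n\log(d/\beta)}) \le n/8$, so $r + O(\sqrt{n\log(d/\beta)}) \le n/4$ as required. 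A final union bound combining the rank-error event (probability $1 - \beta/2$) with the sampling event (probability $1 - \beta/2$) gives the claim with probability $1 - \beta$.

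I expect the main obstacle to be bookkeeping rather than any deep idea: the private quantile only promises small \emph{rank} error, so the crux is the two-step translation of rank error into a bound measured in units of $\sigma_i$ --- Chebyshev pins the population quantiles to $\mu_i \pm 2\sigma_i$, and Hoeffding certifies that the empirical ranks at those points are bounded away from $n/2$ by more than $r$. The delicate part is making the slack $n/4$ dominate both $r$ and the Hoeffding deviation \emph{simultaneously}, uniformly over all $d$ coordinates, which is exactly what the three logarithmic factors in the assumed lower bound on $n$ are there to guarantee.
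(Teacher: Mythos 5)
Your proof is correct and follows essentially the same route as the paper's: Chebyshev pins at least $3/4$ of the population mass to $[\mu_i - 2\sigma_i, \mu_i + 2\sigma_i]$, a Chernoff/Hoeffding bound transfers this to the empirical ranks uniformly over the $d$ coordinates, and the rank-error guarantee of Lemma~\ref{lemma:approximate-quantile} together with $\sigma_i \ge 1$ absorbs both the rank error and the unit discretization slack into the extra $\sigma_i$, yielding the $3\sigma_i$ window. Your bookkeeping of the slack condition is more explicit than the paper's (which simply conditions on $2n/3$ good samples and requires rank error below $n/6$), but the argument is the same.
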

\begin{proof}
	Sample $X=(X_1, \ldots, X_d) \sim \mathcal{D}$. By Chebychev's inequality, for each $i$ with $1 \leq i \leq d$:
	\[\Pr[|X_i- {\mu}_i| \leq 2{\sigma}_i] \geq 1 - \frac{{\sigma}_i^2}{(2 {\sigma}_i)^2} = 3/4 \enspace . \]
	Fixing $i$, these events are independent for $j=1,\dots,n$ so by a Chernoff bound with probability $1-\exp(-\Omega(n))$ there are at least $\tfrac{2}{3} n$ indices $j$ such that $|X^{(j)}_i-{\mu}_i| \leq 2{\sigma}_i$.
	Condition on the event that this is true for $i=1,\dots,d$.

	If the rank error of the median estimate ${\tilde{\mu}}_i$ is smaller than $n/6$, ${\tilde{\mu}}_i \in [{\mu}_i - 3{\sigma}_i, {\mu}_i + 3{\sigma}_i]$, using the assumption that ${\sigma}_i \geq 1$.
	By Lemma~\ref{lemma:approximate-quantile} this is true for every $i$ with probability at least $1-\beta$, given our assumption on $n$ and $K$ if $K$ is a sufficiently large constant.
\end{proof}

\subsection{Part 2: Bounding clipping error}\label{sec:clipping:error}

Let $J = \{ j\in \{1,\dots,n\} \; | \; \|\vectorize{y}^{(j)}\|_2 > C \}$ denote the set of indices of vectors affected by clipping in \algorithmname.
By the triangle inequality and assuming the bound of part 1 holds,
\begin{align*}
	\sum_{j\in J} \left\|(\vectorize{x}^{(j)} - \vectorize{\tilde{\mu}}) \right\|_2
	&\leq \sum_{j\in J} \left\|(\vectorize{x}^{(j)} -\vectorize{\mu}) \right\|_2 + |J|\cdot\left\|\vectorize{\mu} - \vectorize{\tilde{\mu}} \right\|_2\\
	&\leq \sum_{j\in J} \left\|(\vectorize{x}^{(j)} -\vectorize{\mu}) \right\|_2 + 3 |J|\cdot \|\vectorize{\sigma}\|_2 \enspace .
\end{align*}

By assumption~(\ref{assumption:concentrated}) the probability that $\| \vectorize{x}^{(j)} - \vectorize{\mu} \|_2^2 > t \|\vectorize{\sigma}\|_2^2$ is exponentially decreasing in $\Omega(t)$ for $t \geq \ln d$, so setting $t  = \Omega\left(\max\{\ln d, \ln(n/\beta)\}\right)$ 
we have $\| \vectorize{x}^{(j)} - \vectorize{\mu} \|_2 = \tilde{O}(\|\vectorize{\sigma}\|_2)$ for all $j=1,\dots,n$ with probability at least $1-\beta$.
Using the triangle inequality again we can now bound
\begin{align*}
\sum_{j\in J} \| \vectorize{x}^{(j)} - \vectorize{\mu} \|_2 = \tilde{O}(|J|\cdot\|\vectorize{\sigma}\|_2) \enspace .
\end{align*}
For this part of the analysis, we assume that \privatequantile is instantiated as in Lemma~\ref{lemma:approximate-quantile}, but we return the maximum of the interval instead of the midpoint.
This ensures that no vectors whose length fall in the returned interval are clipped.
This choice increases the value of $C$ by $1$ which slightly increases the magnitude of noise in part 3 of the analysis.

With probability at least $1-\beta$ (over the random choices of the private quantile selection algorithm) there are at most $\tilde{O}\left(k+\sqrt{\tfrac{1}{\rho}}\right)$ vectors $\vectorize{y}^{(j)}$ for which $\|\vectorize{y}^{(j)}\|_2 > C$ (assuming $T=\log_2(M\sqrt{d})$), cf. Lemma~\ref{lemma:approximate-quantile} and using that $\left\|\vectorize{y}^{(j)}\right\|_2 \leq M\sqrt{d}$.
That is, we have $|J| = \tilde{O}\left(k + \sqrt{\tfrac{1}{\rho}}\right)$, and using the bounds above we get
\begin{align*}
	\sum_{j\in J} \| \vectorize{x}^{(j)} - \vectorize{\tilde{\mu}} \|_2 = \tilde{O}(|J|\cdot \|\vectorize{\sigma}\|_2) = \tilde{O}\left(\left(k + \sqrt{\tfrac{1}{\rho}}\right) \|\vectorize{\sigma}\|_2\right) \enspace .
\end{align*}

\subsection{Part 3: Bounding noise}\label{sec:noise:error}

By triangle inequality:
\begin{align*}
\|\vectorize{y}^{(j)}\|_2
& = \left\| (\vectorize{x}^{(j)} - \vectorize{\tilde{\mu}})\, \hat{\Sigma}^{-\tfrac{1}{p+2}} \right\|_2 \\
& \leq \left\| (\vectorize{x}^{(j)} - \vectorize{\mu})\, \hat{\Sigma}^{-\tfrac{1}{p+2}} \right\|_2 + \left\| (\vectorize{\mu} - \vectorize{\tilde{\mu}})\, \hat{\Sigma}^{-\tfrac{1}{p+2}} \right\|_2
\end{align*}

For $p=2$, the $i$th coordinate of $(\vectorize{x}^{(j)} - \vectorize{\mu})\, \hat{\Sigma}^{-1/(p+2)}$ equals $({x}^{(j)}_i - {\mu}_i)/\sqrt{{\hat{\sigma}}_i}$, so assumption (\ref{assumption:scaled_concentrated}) implies that the length of the first vector is $\tilde{O}(\sqrt{\|\vectorize{\sigma}\|_1})$ with probability at least $1-\beta/2$.
From part 1 we know (again for $p=2$) that the $i$th coordinate of $(\vectorize{\mu} - \vectorize{\tilde{\mu}})\, \hat{\Sigma}^{-1/(p+2)}$ has absolute value at most
\[ 3{\sigma}_i/\sqrt{{\hat{\sigma}}_i} \leq 3\sqrt{{\sigma}_i}, \]
where the inequality uses the lower bound on ${\hat{\sigma}}_i$ in assumption (\ref{assumption:variance-estimates}).
So $\|(\vectorize{\mu} - \vectorize{\tilde{\mu}})\, \hat{\Sigma}^{-1/(p+2)}\|_2 \leq 3\sqrt{\|\vectorize{\sigma}\|_1}$, and we get that $\|\vectorize{y}^{(j)}\|_2 = \tilde{O}(\sqrt{\|\vectorize{\sigma}\|_1})$ for all $j$ with probability at least $1-\beta/2$.

The value of $C$ is bounded by the maximum length of a vector $\|\vectorize{y}^{(j)}\|_2 + 1$ (accounting for the interval size of \privatequantile in part 2), and thus by a union bound, with probability at least $1-\beta$,  $C^2 = \tilde{O}(\|\vectorize{\sigma}\|_1)$.
The scaled noise vector $\vectorize{\eta}\, \hat{\Sigma}^{1/(p+2)}$ has distribution $\mathcal{N}(\vectorize{0},\frac{2C^2}{\rho_3}\hat{\Sigma}^{1/(p+2)})$, so using ${\hat{\sigma}}_i \leq {\sigma}_i + \|\vectorize{\sigma}\|_1/d$ from assumption (\ref{assumption:variance-estimates}), for $p=2$:
\begin{align*}
	\mathbf{E}[\|\vectorize{\eta}\, \hat{\Sigma}^{1/(p+2)}\|_2^2]
	& = \tfrac{2C^2}{\rho_3} \sum_{i=1}^d \hat{\Sigma}_{ii}^{2/(p+2)}
	= \tfrac{2C^2}{\rho_3} \sum_{i=1}^d \hat{{\sigma}}_i\\
	& \leq \tfrac{2C^2}{\rho_3} \sum_{i=1}^d ({\sigma}_i + \|\vectorize{\sigma}\|_1/d)
	= \tilde{O}(\|\vectorize{\sigma}\|_1^2 / \rho_3) \enspace .
\end{align*}

\subsection{Proof of Theorem~\ref{thm:main}}\label{sec:analysis-proof}

The output of \algorithmname can be written as
\begin{align*}
\tfrac{1}{n} \sum_{j=1}^n \vectorize{x}^{(j)} - \tfrac{1}{n} \sum_{j\in J} \tfrac{\|\vectorize{y}^{(j)}\|_2-C}{\|\vectorize{y}^{(j)}\|_2}\, (\vectorize{x}^{(j)}-\vectorize{\tilde{\mu}}) + \tfrac{1}{n} \vectorize{\eta} \, \hat{\Sigma}^{1/(p+2)}
\end{align*}

Using that $\tfrac{\|\vectorize{y}^{(j)}\|_2-C}{\|\vectorize{y}^{(j)}\|_2} < 1$ for $j\in J$ and the triangle inequality, the $\ell_2$ estimation error of \algorithmname can thus be bounded as
\begin{align*}
\left\|\tfrac{1}{n} \sum_{j=1}^n \vectorize{x}^{(j)} - \vectorize{\mu}\right\|_2 + \tfrac{1}{n} \sum_{j\in J} \|\vectorize{x}^{(j)}-\vectorize{\tilde{\mu}}\|_2 + \tfrac{1}{n} \|\vectorize{\eta} \, \hat{\Sigma}^{1/(p+2)}\|_2
\end{align*}
The first term is the sampling error, which is $\tilde{O}(\|\vectorize{\sigma}\|_2/\sqrt{n})$ with probability at least $1-\beta$.
The sum over $J$ was shown in part 2 to be $\tilde{O}\left(\left(k + \sqrt{\tfrac{1}{\rho}}\right)\|\vectorize{\sigma}\|_2\right)$, so for $k\leq \sqrt{n}$ and using the assumption $n = \tilde{\Omega}(\max(\sqrt{d/\rho}, \rho^{-1}))$, this term is bounded by $\tilde{O}(\|\vectorize{\sigma}\|_2/\sqrt{n})$.
Finally, the noise term in part~3 is $\tilde{O}(\|\vectorize{\sigma}\|_1/(n\sqrt{\rho}))$.

We are now ready to state a more detailed version of Theorem~\ref{thm:main}.
\begin{theorem}\label{thm:main-detailed}
	For sufficiently large $n = \tilde{\Omega}(\sqrt{d/\rho})$, \algorithmname with parameters $k=\sqrt{n}$ and $\rho_1 = \rho_2 = \rho_3 = \rho/3$ is $\rho$-zCDP.
	If inputs are independently sampled from a $\vectorize{\sigma}$-well concentrated distribution, the mean estimate has expected $\ell_2$ error $$\tilde{O}(1 + ||\vectorize{\sigma}||_2/\sqrt{n} + ||\vectorize{\sigma}||_1/(n\sqrt{\rho})),$$  where $\tilde{O}$ suppresses polylogarithmic dependencies on $n$, $d$, and the bound $M$ on the $\ell_\infty$ norm of inputs.
\end{theorem}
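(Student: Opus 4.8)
The plan is to assemble the theorem from the privacy argument and the three-part utility analysis already established, since the statement is essentially the quantitative conclusion of the preceding subsections specialized to $p=2$, $k=\sqrt{n}$, and $\rho_1=\rho_2=\rho_3=\rho/3$.

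First I would dispatch privacy. \algorithmname releases three private quantities in sequence: the coordinate-wise median $\vectorize{\tilde{\mu}}$ ($\rho_1$-zCDP), the clipping threshold $C$ ($\rho_2$-zCDP), and the noisy clipped sum ($\rho_3$-zCDP). The only nontrivial point is the sensitivity of the third release. Conditioning on the already-released values $\vectorize{\tilde{\mu}}$ and $C$ and on the input matrix $\hat{\Sigma}$, the map $\vectorize{x}^{(j)} \mapsto \vectorize{y}^{(j)} = (\vectorize{x}^{(j)}-\vectorize{\tilde{\mu}})\hat{\Sigma}^{-1/(p+2)}$ is identical across neighbors except at the changed index, and each clipped contribution $\min\{C/\|\vectorize{y}^{(j)}\|_2,1\}\vectorize{y}^{(j)}$ has $\ell_2$ norm at most $C$; hence changing one input alters $\sum_j \min\{C/\|\vectorize{y}^{(j)}\|_2,1\}\vectorize{y}^{(j)}$ by at most $2C$ in $\ell_2$. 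The Gaussian mechanism (\Cref{lem:gauss-zcdp}) with variance $2C^2/\rho_3$ then yields $\rho_3$-zCDP, composition (\Cref{lem:composition-zcdf}) gives $(\rho_1+\rho_2+\rho_3)=\rho$-zCDP, and the final rescaling by $\hat{\Sigma}^{1/(p+2)}$ and translation by $\vectorize{\tilde{\mu}}$ are post-processing.

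For utility I would first rewrite the estimator in the decomposed form derived in the proof of \Cref{thm:main}: using $\vectorize{y}^{(j)}\hat{\Sigma}^{1/(p+2)} = \vectorize{x}^{(j)}-\vectorize{\tilde{\mu}}$, the output equals the empirical mean minus a clipping correction supported on $J$ plus the scaled noise $\tfrac{1}{n}\vectorize{\eta}\hat{\Sigma}^{1/(p+2)}$. Subtracting $\vectorize{\mu}$, using $0<\tfrac{\|\vectorize{y}^{(j)}\|_2-C}{\|\vectorize{y}^{(j)}\|_2}<1$ for $j\in J$, and applying the triangle inequality bounds the $\ell_2$ error by the three terms $\|\tfrac{1}{n}\sum_j\vectorize{x}^{(j)}-\vectorize{\mu}\|_2$, $\tfrac{1}{n}\sum_{j\in J}\|\vectorize{x}^{(j)}-\vectorize{\tilde{\mu}}\|_2$, and $\tfrac{1}{n}\|\vectorize{\eta}\hat{\Sigma}^{1/(p+2)}\|_2$. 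The first (sampling) term is $\tilde{O}(\|\vectorize{\sigma}\|_2/\sqrt{n})$ with high probability by concentration of the empirical mean, whose coordinate-wise variance is $\sigma_i^2/n$; the second is $\tilde{O}((k+\sqrt{1/\rho})\|\vectorize{\sigma}\|_2/n)$ with high probability by Part 2 (\Cref{sec:clipping:error}); and the third has expectation $\tilde{O}(\|\vectorize{\sigma}\|_1/(n\sqrt{\rho}))$ by the second-moment bound of Part 3 (\Cref{sec:noise:error}) together with Jensen's inequality. Plugging in $k=\sqrt{n}$ and using $n=\tilde{\Omega}(\rho^{-1})$ to absorb the $\sqrt{1/\rho}\,\|\vectorize{\sigma}\|_2/n$ contribution into $\|\vectorize{\sigma}\|_2/\sqrt{n}$ collapses the clipping term to sampling-error order, and the additive $\tilde{O}(1)$ absorbs lower-order terms, in particular the $O(1)$-per-coordinate discretization incurred by \privatequantile (\Cref{lemma:approximate-quantile}) under the normalization $\sigma_i\geq 1$.

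The main obstacle is bookkeeping rather than a new idea: each high-probability guarantee (the median bound of Part 1, the well-concentration events used in Parts 2 and 3, and the rank-error guarantees of \privatequantile) must hold simultaneously, so I would run each at confidence $1-\Theta(\beta)$ and union bound to a single event of probability $1-\beta$. The one genuinely delicate point is the hybrid nature of the final bound: the sampling and clipping terms are controlled with high probability over the samples and the quantile mechanism, whereas the noise term is controlled only in expectation over $\vectorize{\eta}$. I would therefore condition on the good event --- on which $C^2=\tilde{O}(\|\vectorize{\sigma}\|_1)$ is fixed --- and only then take the expectation over the Gaussian noise, yielding the stated ``expected $\ell_2$ error with probability at least $1-\beta$''. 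The assumption $n=\tilde{\Omega}(\sqrt{d/\rho})$ enters through \Cref{lemma:coordinatewise:median} to guarantee the median's rank error falls below $n/6$, closing the argument.
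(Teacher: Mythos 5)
Your proposal is correct and follows essentially the same route as the paper: privacy via the $2C$ sensitivity bound plus composition and post-processing, and utility via the identical three-term decomposition (sampling, clipping, noise) with the bounds from Parts 1--3, a union bound over the good events, and $k=\sqrt{n}$ together with $n=\tilde{\Omega}(\max(\sqrt{d/\rho},\rho^{-1}))$ to collapse the clipping term. Your explicit handling of the hybrid ``expected error with probability $1-\beta$'' statement (conditioning on the good event before taking the expectation over $\vectorize{\eta}$) is if anything slightly more careful than the paper's own write-up.
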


\begin{proof}
	The discussion above argued for an expected $\ell_2$ error in the absence of ``failure events.'' 
	The probability that no such event happens is at least $1-\beta$. 
	The largest possible error is the diameter of the cube from which the input vectors come, $2M\sqrt{d}$.
	Thus, the largest contribution to the expected error from a probability $\beta$ event is $2 M\sqrt{d} \beta$.
	Choosing $\beta = \frac{1}{M\sqrt{d}}$ guarantees that the expected contribution from these events is $O(1)$.
\end{proof}

\subsection{General $\ell_p$ error}\label{sec:analysis-generalization}

To address the general case we need a definition of well-concentrated that depends on $p$.
For simplicity we assume that $p$ is a positive integer constant.
\begin{definition}
	\label{def:well:concentrated:generalized}
	For integer constant $p\geq 1$ consider a distribution $\mathcal{D}$ over $\mathbf{R}^d$, where the $i$th coordinate has mean ${\mu}_i$ and $p$th central moment ${\sigma}_i^p$.
	We say that $\mathcal{D}$ is \emph{$(\vectorize{\sigma},p)$-well concentrated} if for any vector $\vectorize{\hat{\sigma}}\in \mathbf{R}^d$ with	${\sigma}_i \leq {\hat{\sigma}}_i <  \left({\sigma}_i^{\frac{2p}{p +2 }} + \|\vectorize{\sigma}^{\frac{2p}{p + 2}}\|_1/d\right)^{\frac{p + 2}{2p}}$, the following holds for $t > \ln d$:
	\begin{align}
    & \Pr_{X \sim \mathcal{D}}\left[ \sum_{i=1}^d |X_i - {\mu}_i|^p > t \|\vectorize{\sigma}\|_p^p \right] = \exp(-{\Omega}(t))\label{assumption:concentrated_generalized} \\
	& \Pr_{X \sim \mathcal{D}}\left[ \sum_{i=1}^d \frac{(X_i - {\mu}_i)^2}{{\hat{\sigma}}_i^{4/(p+2)}} > t \|\vectorize{\sigma}^{p/(p+2)}\|_2^2 \right] = \exp(-{\Omega}(t)) \enspace . \label{assumption:scaled_concentrated_generalized}
	\end{align}
\end{definition}

\begin{theorem}\label{thm:main-detailed-generalized}
	For sufficiently large $n = \tilde{\Omega}(\max(\sqrt{d/\rho}, \rho^{-1}))$, \algorithmname with parameters $k=\sqrt{n}$ and $\rho_1 = \rho_2 = \rho_3 = \rho/3$ is $\rho$-zCDP.
	If inputs are independently sampled from a $(\vectorize{\sigma},p)$-well concentrated distribution, the mean estimate has expected $\ell_p$ error
	\[
	\tilde{O}\left(1 + \frac{||\vectorize{\sigma}||_p}{\sqrt{n}} + \frac{\|\vectorize{\sigma}\|_{2p/(p+2)}}{n\sqrt{\rho}}\right),
	\]
	where $\tilde{O}$ suppresses polylogarithmic dependencies on $1/\beta$, $n$, $d$, and the bound $M$ on the $\ell_\infty$ norm of inputs.
\end{theorem}

The proof of this theorem follows along the lines of the proof in this section and can be found in Appendix~\ref{app:generalized:lp}.
In comparison, the standard Gaussian mechanism for $\ell_2$ sensitivity $\| \vectorize{\sigma} \|_2$ has expected $\ell_p$ error $\tfrac{d^{1/p}\| \vectorize{\sigma} \|_2}{\sqrt{\rho}}$.

\subsection{Approaches for small variance}\label{sec:low-variance}

Throughout our analysis in Section~\ref{sec:analysis} we assume that $\sigma_i \geq 1$ for all $i$. 
If it is not known whether this assumption holds, one can enforce this by adding independent noise of variance 1 to each coordinate.
If the additional error due to this noise is considered too large, we can proceed in the following ways.
For simplicity, we assume that we know $\vectorize{\sigma}$ exactly, but similar approaches can be used if we only have an approximation $\vectorize{\hat{\sigma}}$.

Let $\min{\{\vectorize{\sigma}\}}$ be the smallest entry in $\vectorize{\sigma}$. 
We assume in this paragraph that all dimensions are non-constant, i.e., $\min{\{\vectorize{\sigma}\}} > 0$.
A simple technique is to scale all inputs and standard deviations by $1/\min{\{\vectorize{\sigma}\}}$ to spread out the data.
That is, $\bar{\sigma_i} = \sigma_i/\min{\{\vectorize{\sigma}\}}$ and $\bar{x}^{(j)}_i = x^{(j)}_i/\min{\{\vectorize{\sigma}\}}$ for all $i$ and $j$.
Run \algorithmname on input $\vectorize{\bar{x}}^{(1)},\dots,\vectorize{\bar{x}}^{(n)}$, and $\vectorize{\bar{\sigma}}^2$, and scale back the private estimate by $\min{\{\vectorize{\sigma}\}}$.

The approach above works well when $\min{\{\vectorize{\sigma}\}}$ is not too small, but is not ideal if some coordinates have tiny variance since it requires a huge scaling.
Instead we can change the parameters for \privatequantile.
We use the assumption that $\min{\{\vectorize{\sigma}\}} \geq 1$ only in Part 1 of Section~\ref{sec:analysis-bounding-mu}.
There we show that the estimate returned by \privatequantile~ is close to a point in $[{\mu}_i - 2{\sigma}_i, {\mu}_i + 2{\sigma}_i]$ with high probability. 
Specifically, the estimate is within distance $1$ of such a point and as such it is in $[{\mu}_i - 3{\sigma}_i, {\mu}_i + 3{\sigma}_i]$ for any ${\sigma}_i \geq 1$.
When ${\sigma}_i < 1$ we can increase the number of steps of the binary search to get closer to the interval.
For example, for a small value $\tau > 0$, we can use $T = \log_2(M/\tau)$ to return an estimate that is within distance at most $\tau$ with high probability.
In the case that there are coordinates ${\sigma}_i < \tau$ so tiny that we do not return a point in $[{\mu}_i - 3{\sigma}_i, {\mu}_i + 3{\sigma}_i]$, this might increase the clipping error and the expected error due to noise. 
For this setting, the analysis in Section~\ref{sec:clipping:error} and Section~\ref{sec:noise:error} can only use the bound $|\vectorize{\mu}_i - \tilde{\vectorize{\mu}}_i| \leq 2\sigma_i + \tau$. This results in an additional clipping error of $\frac{1}{n}\left(k + \sqrt{1/\rho}\right)\tau\sqrt{d}$. Using $\tau \leq \min(\sqrt{n/d}, n\sqrt{\rho/d})$ bounds the additional contribution of this clipping error to be $\tilde{O}(1)$.
To bound the additional error due to noise, note that $\tau \| \hat{\Sigma}^{-1/4}\|_2 = \tau /\sqrt{\|\vectorize{\sigma}\|_1}$, which means that the value of $C$ in Section~\ref{sec:noise:error} is bounded by $\tilde{O}\left(\sqrt{\|\vectorize{\sigma}\|_1} + \tau/\sqrt{\|\vectorize{\sigma}\|_1}\right)$. Carrying out the computation of the expected error due to noise with this value of $C$ yields the bound $\tilde{O}\left((\|\vectorize{\sigma}\|_1^2 + \tau \|\vectorize{\sigma}\|_1 + \tau^2)/\rho_3\right)$.
Setting $\tau \leq \|\vectorize{\sigma}\|_1$, the additional noise error has asymptotically not influence on the expected error bounds in Theorems~\ref{thm:main-detailed} and~\ref{thm:main-detailed-generalized}.

\section{Examples of well-concentrated distributions}\label{sec:well-concentrated}
In the following we give examples for some $(\vectorize{\sigma}, p)$-well concentrated distributions.
We start with a general result.

\begin{lemma}
	\label{lemma:example:well:concentrated}
	For integer constant $p \geq 1$, consider a distribution $\mathcal{D}$ over $\mathbf{R}^d$ where the $i$th coordinate is independently distributed with mean ${\mu}_i$ and standard deviation ${\sigma}_i$ such that for $X \sim \mathcal{D}$, the $p$th moment $\mathbf{E}\left[\vert X_i-{\mu}_i \vert^{p}\right] \leq K {\sigma}_i^{p}$ for some constant $K$.
	Then $\mathcal{D}$ is $(\vectorize{\sigma}, p)$-well concentrated.
\end{lemma}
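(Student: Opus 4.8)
The plan is to verify directly that the two tail conditions \eqref{assumption:concentrated_generalized} and \eqref{assumption:scaled_concentrated_generalized} of \Cref{def:well:concentrated:generalized} hold for every admissible $\vectorize{\hat\sigma}$. Both conditions normalize a sum of nonnegative coordinate contributions by a quantity that, I claim, is up to the constant $K$ exactly the expectation of that sum. So the first move is to compute these two expectations using only linearity and the hypotheses $\mathbf{E}[|X_i - \mu_i|^p] \le K\sigma_i^p$ and $\mathbf{E}[(X_i - \mu_i)^2] = \sigma_i^2$.

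For \eqref{assumption:concentrated_generalized}, set $S = \sum_i |X_i - \mu_i|^p$; linearity and the moment hypothesis give $\mathbf{E}[S] \le K\sum_i \sigma_i^p = K\|\vectorize\sigma\|_p^p$, matching the normalizer $\|\vectorize\sigma\|_p^p$ up to $K$. For \eqref{assumption:scaled_concentrated_generalized}, set $S' = \sum_i (X_i - \mu_i)^2/\hat\sigma_i^{4/(p+2)}$; using only the lower bound $\hat\sigma_i \ge \sigma_i$ from the admissibility constraint we get $\mathbf{E}[S'] = \sum_i \sigma_i^2/\hat\sigma_i^{4/(p+2)} \le \sum_i \sigma_i^{2 - 4/(p+2)} = \sum_i \sigma_i^{2p/(p+2)} = \|\vectorize\sigma^{p/(p+2)}\|_2^2$, again exactly the normalizer. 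The only algebra here is the exponent identity $2 - 4/(p+2) = 2p/(p+2)$, and the upper bound on $\hat\sigma_i$ in the constraint is not even needed for the expectation.

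The remaining and principal step is to upgrade these expectation bounds to the sub-exponential tails $\exp(-\tilde\Omega(t))$ demanded by the definition; Markov's inequality alone yields only $O(1/t)$ decay, which is too weak. Here I would invoke the standing assumption that inputs lie in $[-M,M]^d$, so that each summand is bounded --- $|X_i - \mu_i|^p \le (2M)^p$ for $S$, and $(X_i - \mu_i)^2/\hat\sigma_i^{4/(p+2)} \le (2M)^2$ for $S'$ since $\hat\sigma_i \ge \sigma_i \ge 1$ --- and apply a Bernstein/Chernoff-type concentration inequality to a sum of bounded nonnegative terms whose mean we have just controlled. This reduces the tail to roughly $\exp(-\Omega(t\,\mu/B))$, where $\mu$ is the mean and $B = \mathrm{poly}(M)$ the per-coordinate range; the ratio $\mu/B$ is precisely where the polylogarithmic-in-$M$ factors hidden by $\tilde\Omega$ must be tracked.

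The hard part will be exactly this last step: a bounded $p$th moment for a single fixed $p$ does not by itself force light tails, so making the exponent genuinely linear in $t$ --- rather than leaving a poly$(M)$ penalty that would render the bound useless in the way it is applied in Part~2 of the analysis --- requires care. The cleanest route is either to assume the coordinates are independent, so that the Bernstein bound applies term-by-term, or to restrict attention (as the definition's use of $\tilde\Omega$ implicitly does) to the regime where $t$ is polylogarithmic, while carefully tracking how the range-to-mean ratio enters the exponent. I would isolate this as a self-contained concentration sublemma and keep the bookkeeping of the constant $K$ and the factor $B/\mu$ entirely inside the $\tilde\Omega$.
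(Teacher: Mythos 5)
Your proposal follows essentially the same route as the paper's proof: center the per-coordinate summands, control their mean via the moment hypothesis, and apply Bernstein's inequality to the bounded sum to upgrade to exponential tails (the paper likewise handles property \eqref{assumption:scaled_concentrated_generalized} by reducing to the extremal case $\hat\sigma_i = \sigma_i$, which is what your use of only the lower bound on $\hat\sigma_i$ amounts to). The two caveats you flag --- that Bernstein requires independence of the coordinates, and that the range-to-mean ratio injects a $\mathrm{poly}(M)$ factor into the exponent --- are genuine and are also present, though glossed over, in the paper's own argument, which silently assumes independence and asserts via ``Chebychev'' an almost-sure bound $Y_i \leq \|\vectorize{\sigma}\|_p^p$ that does not follow from the stated hypotheses.
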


\begin{proof}
	We first prove the bound ~\eqref{assumption:concentrated_generalized} from Definition~\ref{def:well:concentrated:generalized}.
	Sample $X$ from $\mathcal{D}$ and define $Y_i = \vert X_i - {\mu}_i \vert ^p - \mathbf{E}[\vert X_i - {\mu}_i \vert ^p]$ and note that $\sum Y_i \leq \sum \left| X_i - {\mu}_i\right|^p$.
	Each $Y_i$ is zero-centered, so we may apply Bernstein's inequality (Lemma~\ref{lem:bernstein})
	\begin{align*}
		\Pr\left[ \sum_{i=1}^d Y_i > t \|\vectorize{\sigma}\|_p^p \right] \leq \exp\left(-\frac{t^2 \|\vectorize{\sigma}\|_p^{2p}/2}{\sum_{i = 1}^d \mathbf{E}[Y_i^2] + M t \|\vectorize{\sigma}\|_p^p / 3 }\right).
	\end{align*}
	We distinguish two cases depending on which term is dominating the denominator.
	In the first case,
	$$\sum_{i = 1}^d \mathbf{E}\left[Y_i^2\right] \leq \sum_{i = 1}^d \mathbf{E}\left[\left(X_i - {\mu}_i\right)^{2p}\right] \leq K \|\vectorize{\sigma}^{2p}\|_1 \leq \|\vectorize{\sigma}\|_p^{2p},$$
	where we applied the triangle inequality.
	This means that the probability of $\sum |X_i - {\mu}_i|^p$ to exceed $t \|\vectorize{\sigma}\|_p^p$ is $\exp\left(-\Omega\left(t^2\right)\right)$.
	In the second case,
	$$\frac{t^2 \|\vectorize{\sigma}\|_p^{2p}/2}{M t \|\vectorize{\sigma}\|_p^p/ 3} = \frac{3 t\|\vectorize{\sigma}\|_p^{p}}{2M}.$$
	By Chebychev's inequality almost surely $Y_i \leq C \max_{i}{\sigma}_i^p \leq \|{\sigma}\|_p^{p}$. Thus, the probability is bounded by $\exp(-\Omega(t))$ in this case.

	For the second property~\eqref{assumption:scaled_concentrated_generalized}, note that $\sum (X_i-{\mu}_i)^{2}/{\hat{\sigma}}_i^{4/(p + 2)}$ is maximized for ${\hat{\sigma}}_i = {\sigma}_i$.
	For this choice, the calculations are analogous to the ones above.
\end{proof}

We will now proceed with studying the Gaussian distribution for $\ell_2$ error and a sum of Poisson trials for $\ell_1$ error.
These distributions are the basis of the empirical evaluation in Section~\ref{sec:experiments}.

\subsection{Gaussian data}

\begin{lemma}
	Let $p = 2$ and fix $\vectorize{\mu} \in \mathbf{R}^d$ and $\Sigma \in \mathbf{R}^{d \times d}$. The multivariate normal distribution $\mathcal{N}\left(\vectorize{\mu}, \Sigma\right)$ is ($\vectorize{\sigma}$, 2)-well concentrated.
\end{lemma}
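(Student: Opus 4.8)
The plan is to specialize Definition~\ref{def:well:concentrated:generalized} to $p=2$ and verify the two tail bounds directly. Writing $\sigma_i = \sqrt{\Sigma_{ii}}$, the second central moment of coordinate $i$ is $\mathbf{E}[(X_i-\mu_i)^2]=\Sigma_{ii}=\sigma_i^2$, so the moment hypothesis of Lemma~\ref{lemma:example:well:concentrated} holds with constant $K=1$. One is tempted to simply invoke that lemma, but its proof applies Bernstein's inequality to the coordinatewise summands, which requires independence; a general covariance $\Sigma$ couples the coordinates, so I would instead give a direct argument exploiting that both quantities in Definition~\ref{def:well:concentrated:generalized} are quadratic forms in a standard Gaussian.

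Concretely, I would write $\vectorize{X}-\vectorize{\mu} = \Sigma^{1/2}\vectorize{Z}$ with $\vectorize{Z}\sim\mathcal{N}(\vectorize{0},\mathbf{I}_d)$. Then the left-hand side of \eqref{assumption:concentrated_generalized} for $p=2$ equals $\vectorize{Z}^\top \Sigma \vectorize{Z}$, with expectation $\mathrm{tr}(\Sigma)=\sum_i\sigma_i^2=\|\vectorize{\sigma}\|_2^2$; and, with $\hat D=\mathrm{diag}(\hat\sigma_1^{-1},\dots,\hat\sigma_d^{-1})$, the left-hand side of \eqref{assumption:scaled_concentrated_generalized} equals $\vectorize{Z}^\top A\vectorize{Z}$ for the PSD matrix $A=\Sigma^{1/2}\hat D\,\Sigma^{1/2}$, whose trace is
\[ \mathrm{tr}(A)=\sum_{i=1}^d \sigma_i^2/\hat\sigma_i \le \sum_{i=1}^d \sigma_i = \|\vectorize{\sigma}\|_1, \]
using the lower bound $\hat\sigma_i\ge\sigma_i$ from the definition. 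Note that for $p=2$ we have $\|\vectorize{\sigma}^{p/(p+2)}\|_2^2=\|\vectorize{\sigma}\|_1$ and $\hat\sigma_i^{4/(p+2)}=\hat\sigma_i$, so these are exactly the target scales appearing in the definition.

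It then remains to bound the upper tail of a quadratic form $\vectorize{Z}^\top B\vectorize{Z}$ for a PSD matrix $B$. Diagonalizing $B=U\Lambda U^\top$ and using rotational invariance of $\vectorize{Z}$ gives $\vectorize{Z}^\top B\vectorize{Z}=\sum_j \lambda_j W_j^2$, a weighted sum of independent $\chi^2_1$ variables with $\vectorize{W}=U^\top\vectorize{Z}\sim\mathcal{N}(\vectorize{0},\mathbf{I}_d)$. The standard moment-generating-function bound evaluated at $s=1/(4\lambda_{\max}(B))$ yields
\[ \Pr[\vectorize{Z}^\top B\vectorize{Z} > a] \le \exp\!\left(-\tfrac{a-2\,\mathrm{tr}(B)}{4\lambda_{\max}(B)}\right). \]
Since $\lambda_{\max}(B)\le\mathrm{tr}(B)$ for PSD $B$, taking $a=t\,\mathrm{tr}(\Sigma)$ for the first form and $a=t\|\vectorize{\sigma}\|_1\ge t\,\mathrm{tr}(A)$ for the second makes the exponent at most $-(t-2)/4$, giving tails $\exp(-\Omega(t))=\exp(-\tilde\Omega(t))$ in both cases. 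This establishes \eqref{assumption:concentrated_generalized} and \eqref{assumption:scaled_concentrated_generalized} for $p=2$.

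The main obstacle is the coordinate dependence introduced by a non-diagonal $\Sigma$, which blocks the independence-based route of Lemma~\ref{lemma:example:well:concentrated}. The resolution is precisely the reduction to quadratic forms together with the inequality $\lambda_{\max}(B)\le\mathrm{tr}(B)$: this guarantees the exponential rate stays $\Omega(t)$ no matter how degenerate or skewed $\Sigma$ is. In the extreme rank-one case the form collapses to a single $\chi^2_1$, whose tail is already $\exp(-\Omega(t))$, so the concentration can never be worse than in the isotropic case.
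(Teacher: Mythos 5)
Your proof is correct, but it takes a genuinely different route from the paper's. The paper argues coordinatewise: it applies the one-dimensional Gaussian tail bound to each $X_i-\mu_i$ at threshold $t\ln(d)\sigma_i^2$ (respectively $t\ln(d)\sigma_i$ after scaling), takes a union bound over the $d$ coordinates --- which requires no independence and hence handles arbitrary correlations --- and absorbs the resulting $\ln(d)$ factor into the $\tilde{\Omega}$ notation. You instead reduce both quantities to quadratic forms $\vectorize{Z}^\top B\vectorize{Z}$ in a standard Gaussian, diagonalize to a weighted sum of independent $\chi^2_1$ variables, and apply the Chernoff/MGF bound together with $\lambda_{\max}(B)\le\mathrm{tr}(B)$; your trace computations ($\mathrm{tr}(\Sigma)=\|\vectorize{\sigma}\|_2^2$ and $\mathrm{tr}(\Sigma^{1/2}\hat{D}\Sigma^{1/2})=\sum_i\sigma_i^2/\hat\sigma_i\le\|\vectorize{\sigma}\|_1$) correctly identify the target scales of the definition, and the stated MGF bound is valid (for $1<t\le 2$ the claim holds trivially by adjusting constants). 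What each approach buys: yours yields a clean $\exp(-\Omega(t))$ tail with no logarithmic loss and makes the role of correlations transparent through the spectrum of $B$, at the cost of invoking the chi-square MGF machinery; the paper's is more elementary but pays a $\ln(d)$ factor hidden in $\tilde{\Omega}$. Your observation that \Cref{lemma:example:well:concentrated} cannot simply be invoked because its Bernstein step needs independent coordinates is also the correct reading of why the paper gives a separate proof for non-diagonal $\Sigma$.
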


Before presenting the proof, we remark that the independent case with diagonal covariance matrix $\Sigma$ can easily be handled by Chernoff-type bounds.
Furthermore, Lemma~\ref{lemma:example:well:concentrated} holds for $\mathcal{N}\left(\vectorize{\mu}, \Sigma\right)$ for diagonal covariance matrix.
In the lemma, we handle the general case of (non-)diagonal covariance matrices, allowing for dependence among the coordinates.

\begin{proof}
	Consider property~\eqref{assumption:concentrated} of Definition~\ref{def:well-concentrated} and fix any $ i \in \{1, \ldots, d\}$.
	Let ${\sigma}_i^2 = \Sigma_{ii}$ and sample $X_i \sim \mathcal{N}({\mu}_i, {\sigma}_i^2)$.
	$X_i$ is normally distributed, so we can make use of \cite[Theorem 9.3]{mitzenmacher:upfal} which states that for all $t > 0$ it holds that  $$\Pr[|X - {\mu}_i| \geq t] \leq 2\text{exp}(-t^2/(2 {\sigma}_i^2)).$$
	Fix a value $t > 0$ and let $t' = (t + \ln(d)){\sigma}_i^2$.
	We proceed to bound the deviation as follows:
	\begin{align*}
		\Pr[(X_i - {\mu}_i)^2 \geq t'] &= \Pr[|X_i - {\mu}_i| \geq \sqrt{t'}]\\
		&\leq 2\exp\left(-\left((t  + \ln(d)) {\sigma}_i^2\right)/\left(2 {\sigma}_i^2\right)\right) \\
		&= 1/d \exp(-{\Omega}(t)).
	\end{align*}
	Consider the case that we sample $X_1, \ldots, X_d$ from $\mathcal{N}\left(\vectorize{\mu}, \Sigma\right)$. By a union bound over $X_1, \ldots, X_d$, we may assume that with probability at least $1 - \text{exp}\left(-{\Omega}(t)\right)$, for all $i \in \{1, \ldots, d\}$ we have $(X_i - {\mu}_i)^2 \leq (t + \ln(d)){\sigma}_i^2$, which implies that their sum is at most $(t  + \ln(d)) \|\vectorize{\sigma}\|^2_2 = \tilde{O}(t \|\vectorize{\sigma}\|^2_2)$.

	Next, consider property~\eqref{assumption:scaled_concentrated} of Definition~\ref{def:well-concentrated}.
	Sample $X \sim \mathcal{N}({\mu}_i, {\sigma}_i^2)$ and consider the transformation $(X - {\mu}_i)^2/{\sigma}_i$.
	Setting $t' = (t  + \ln(d)) {\sigma}_i$, the same calculations as above show that with probability at most $1/d \exp(-{\Omega}(t))$, $(X - {\mu}_i)^2/{\sigma}_i$ is larger than $(t + \ln(d)){\sigma}_i$.

	Consider the case that we sample $X$ from $\mathcal{N}\left(\vectorize{\mu}, \Sigma\right)$. %
	Again using a union bound, with probability at least $1 - \exp(-{\Omega}(t))$, all scaled values are within $(t + \ln(d))\vectorize{\sigma}_i$.
	Conditioned on this, $\sum_{i=1}^d (X_i - {\mu}_i)^2/{\hat{\sigma}}_i \leq (t +  \ln(d)) \|\vectorize{\sigma}\|_1$, which finishes the proof.
\end{proof}

\subsection{Binary data}

Consider length-$d$ binary strings such that bit $i$ is set independently and at random with probability $\probability_i$.
Lemma~\ref{lemma:example:well:concentrated} is not applicable because the $p$th moment is roughly equal to $\max\{\probability_i, 1-\probability_i\}$.

\begin{lemma}
	\label{lem:binary:concentrated}
	Let $d \geq 1$ be an integer, and let $\mathcal{D}_\text{binary}$ be the distribution over length-$d$ binary strings such that the bit in position $i$ is set with probability $\probability_i$. If for each $i \in \{1, \ldots, d\}$, $\sigma_i^2 = \probability_i (1-\probability_i) \geq 1/d^{2/5}$, then $\mathcal{D}_\text{binary}$ is $(\vectorize{\sigma}, 1)$-well concentrated.
\end{lemma}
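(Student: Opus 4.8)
The plan is to verify the two tail bounds of \Cref{def:well:concentrated:generalized} for $p=1$ directly, treating each coordinate as an independent Bernoulli$(\probability_i)$ variable so that $\mu_i = \probability_i$ and $\sigma_i^2 = \probability_i(1-\probability_i)$. I will repeatedly use two consequences of the hypotheses: since $\sigma_i^2 \le 1/4$ we have $\sigma_i \le 1/2$, and since $\sigma_i^2 \ge d^{-2/5}$ we have $\min_i \sigma_i \ge d^{-1/5}$, hence $\|\vectorize{\sigma}\|_1 \ge d^{4/5}$ and $\|\vectorize{\sigma}^{2/3}\|_1 \ge d^{13/15}$. As in the proof of \Cref{lemma:example:well:concentrated}, the workhorse is Bernstein's inequality (\Cref{lem:bernstein}): for each property I will center every summand, bound its variance and its almost-sure range, and read off the tail, whose two regimes correspond to the variance-dominated and range-dominated terms in Bernstein's bound.

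For property~\eqref{assumption:concentrated_generalized} I would apply Bernstein to $\sum_i |X_i - \probability_i|$, where $|X_i - \probability_i| \in [0,1]$ has mean $2\sigma_i^2$ and variance at most $\sigma_i^2$. The total mean is $2\|\vectorize{\sigma}\|_2^2 \le \|\vectorize{\sigma}\|_1$ (using $\sigma_i \le 1/2$), so the threshold $t\|\vectorize{\sigma}\|_1$ lies above the mean for every $t>1$ and the centered deviation is at least $(t-1)\|\vectorize{\sigma}\|_1$. The variance term then contributes an exponent of order $(t-1)^2\|\vectorize{\sigma}\|_1^2/\|\vectorize{\sigma}\|_2^2 \ge (t-1)^2$ and the range term an exponent of order $(t-1)\|\vectorize{\sigma}\|_1$, yielding $\exp(-\tilde{\Omega}(t))$. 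Notably this part uses only $\sigma_i \le 1/2$, not the lower bound on $\sigma_i^2$.

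For the scaled property~\eqref{assumption:scaled_concentrated_generalized} I would first record the identity $\|\vectorize{\sigma}^{1/3}\|_2^2 = \|\vectorize{\sigma}^{2/3}\|_1$ and observe, exactly as for \Cref{lemma:example:well:concentrated}, that $(X_i-\mu_i)^2/\hat{\sigma}_i^{4/3}$ is decreasing in $\hat{\sigma}_i$, so it suffices to treat the worst case $\hat{\sigma}_i = \sigma_i$. Writing $W_i = (X_i-\probability_i)^2/\sigma_i^{4/3}$, a short computation gives $\mathbf{E}[W_i] = \sigma_i^{2/3}$ (so the sum has mean exactly $\|\vectorize{\sigma}^{2/3}\|_1$), $\mathbf{E}[W_i^2] \le \sigma_i^{-2/3}$ from $\mathbf{E}[(X_i-\probability_i)^4] \le \sigma_i^2$, and the range bound $W_i \le \sigma_i^{-4/3} \le d^{4/15}$. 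This is precisely where the hypothesis is used: $\min_i\sigma_i \ge d^{-1/5}$ yields both the range bound $\max_i \sigma_i^{-4/3} \le d^{4/15}$ and the variance-ratio bound $\|\vectorize{\sigma}^{2/3}\|_1^2/\sum_i \sigma_i^{-2/3} \ge d(\min_i\sigma_i)^2 \ge d^{3/5}$, so Bernstein again gives an exponent of order $(t-1)^2 d^{3/5}$ or $(t-1)d^{3/5}$, i.e.\ $\exp(-\tilde{\Omega}(t))$.

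The hard part is property~\eqref{assumption:scaled_concentrated_generalized}, and it is the reason \Cref{lemma:example:well:concentrated} cannot simply be invoked. Dividing by $\hat{\sigma}_i^{4/3}$ inflates coordinates with $\probability_i$ near $0$ or $1$: as $\sigma_i \to 0$ the summand $W_i$ has range $\sigma_i^{-4/3}$ but mean only $\sigma_i^{2/3}$, so a single low-variance coordinate could dominate $\sum_i W_i$ and destroy concentration. The exponent $2/5$ in the assumption $\sigma_i^2 \ge d^{-2/5}$ is calibrated exactly so that $d(\min_i\sigma_i)^2 \ge d^{3/5} \to \infty$, which is the margin needed to absorb this blow-up in both Bernstein terms and retain an exponential tail; verifying that this single choice of exponent simultaneously controls the range and the variance ratio is the crux of the argument.
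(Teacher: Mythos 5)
Your proof is correct, and structurally it parallels the paper's: both verify the two properties of \Cref{def:well:concentrated:generalized} directly via exponential concentration for sums of independent bounded summands, and both use the hypothesis $\sigma_i^2 \geq d^{-2/5}$ precisely to tame the range blow-up of $(X_i-\mu_i)^2/\hat{\sigma}_i^{4/3}$ in the scaled property, landing on the same $d^{3/5}$ exponent. The two concrete differences are these. First, the paper begins by rescaling $x \mapsto d^{1/5}x$ so that all standard deviations are at least $1$ (consistent with the standing assumption of \Cref{sec:analysis}) and then works with $\vectorize{\bar{\sigma}}$; you work with the raw Bernoulli coordinates and carry the factor $\min_i \sigma_i \geq d^{-1/5}$ through explicitly --- a cosmetic difference. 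Second, the paper invokes the range-only generalized Chernoff--Hoeffding bound (\Cref{lemma:generalized:chernoff}), whereas you use Bernstein's inequality (\Cref{lem:bernstein}), which is the tool the paper reserves for \Cref{lemma:example:well:concentrated}. Your route requires slightly more bookkeeping (computing $\mathbf{E}|X_i-\probability_i| = 2\sigma_i^2$, $\mathbf{E}[(X_i-\probability_i)^4] \leq \sigma_i^2$, and explicitly centering the sums), but in exchange it is more careful than the paper on a point the paper glosses over: the paper applies the two-sided Hoeffding bound to the event that the sum exceeds $t\|\vectorize{\bar{\sigma}}\|_1$ without explicitly verifying that the mean of the sum sits below that threshold, whereas you check this via $2\|\vectorize{\sigma}\|_2^2 \leq \|\vectorize{\sigma}\|_1$. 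Your closing observation --- that the exponent $2/5$ is calibrated so that $d(\min_i\sigma_i)^2 \geq d^{3/5}$ controls both the variance ratio and the range term simultaneously --- is exactly the content of the paper's final display, so the two proofs are interchangeable.
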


\begin{proof}
	Since we assumed in Section~\ref{sec:analysis} that $\sigma_i \geq 1$, we consider the mapping $x \mapsto d^{1/5}x =: \bar{x}$.
	Consider Property~\eqref{assumption:concentrated_generalized} of Definition~\ref{def:well:concentrated:generalized}.
	Since ${\bar{\sigma}}_i^2 \geq 1$, $\|\vectorize{\bar{\sigma}}\|_1 \geq d$.
	Using a generalized Chernoff bound (Lemma~\ref{lemma:generalized:chernoff}) we conclude that $\sum |d^{1/5} \left(X_i - {\mu}_i\right)|$ exceeds $t \|\vectorize{\bar{\sigma}}\|_1$ with probability at most
	$$\exp\left(-(t \|\vectorize{\bar{\sigma}}\|_1)^2/\left(2d^{2/5}d\right)\right) = \exp\left(-\Omega(t^2)\right).$$

	Next, consider Property~\eqref{assumption:scaled_concentrated_generalized}. Define $Y_i = (d^{1/5} (X_i - \probability_i))^2/{\sigma}_i^{4/3}$.
	Note that $Y_i$ takes values in an interval of length $d_i \leq d^{2/5}$. $|\sum Y_i - \mathbf{E}[\sum Y_i]|$ exceeds $t \|\vectorize{\bar{\sigma}}^{2/3}\|_1$ with probability at most
	\begin{align*}
	\exp\left(-\frac{t^2 \|\vectorize{\bar{\sigma}}^{2/3}\|^2_1}{2 \sum_{i=1}^d d_i^2}\right) &\leq \exp\left(-\frac{t^2 \|\vectorize{\bar{\sigma}}^{2/3}\|^2_1}{2 d^{9/5}}\right) \leq \exp\left(-t^2 d^{1/5}/2\right) \\ &= \exp\left(-\Omega(t^2)\right)\enspace ,
	\end{align*}
	which finishes the proof.
\end{proof}

\section{Generic Bounds in the Absence of Variance estimates}\label{sec:generic-bounds}

\Cref{alg:our-algorithm} requires as input estimates $\vectorize{\hat{\sigma}}^2$ on the coordinate-wise variances.
If the input is $\vectorize{\sigma}$-well concentrated, \Cref{thm:main-detailed}
provided bounds on the expected $\ell_2$ error of the algorithm.
In this section, we consider the case that no such estimates are known and we run the algorithm without the scaling step.
This is similar in spirit to using the ``shifted-clipped-mean estimator''  of Huang et al.~\cite{huang_instance-optimal_2021}.
The following theorem shows that even without carrying out the random rotation (see Section 3.3 in~\cite{huang_instance-optimal_2021}), we match their bounds up to constant terms in expectation.
This result makes their algorithm useful in settings where a random rotation impacts performance negatively, e.g., when vectors are sparse.

Let $D \subseteq \mathbf{R}^d$ be the collection of $n$ vectors $\vectorize{x}^{(1)}, \ldots, \vectorize{x}^{(n)}$.
Let $w(D) = \max_{\vectorize{x}, \vectorize{y} \in D} \| \vectorize{x} - \vectorize{y} \|_2$ be the diameter of the dataset.
For simplicity we assume that $w(D) \geq 1$ and $M = O(w(D))$.
In comparison to the distributional setting studied before, there is no sampling error involved in mean estimation
and we only measure clipping error and the error due to noise.

\begin{theorem}
    \label{thm:diameter:bound}
    Let $d \geq 1, \rho > 0$, and $D$ be of size $n = \Tilde{\Omega}\left(\sqrt{\tfrac{d}{\rho}}\right)$. If Algorithm~\ref{alg:our-algorithm} is run with $k = \tilde\Theta\left(\sqrt{\frac{d}{\rho}}\right)$, $\rho_1=\rho_2=\rho_3=\rho/3$, and ${\hat\sigma}^2_i = 1$ for all $i \in \{1, \ldots, d\}$ then the expected $\ell_2$ error due to clipping and noise is
    \begin{align*}
        \tilde{O}\left(\sqrt{\frac{d}{\rho}} \frac{w(D)}{n}\right) \enspace .
    \end{align*}
\end{theorem}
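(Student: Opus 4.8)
The plan is to re-run the three-part utility analysis of \Cref{sec:analysis} in the worst-case (dataset) setting, exploiting that with $\hat{\sigma}_i^2 = 1$ the scaling matrix $\hat{\Sigma}$ is the identity. Then the scaling step is vacuous, so $\vectorize{y}^{(j)} = \vectorize{x}^{(j)} - \vectorize{\tilde{\mu}}$ and $\hat{\Sigma}^{1/(p+2)} = \mathbf{I}$. Reusing the output decomposition from \Cref{sec:analysis-proof} and noting that here the target is the empirical mean $\tfrac{1}{n}\sum_j \vectorize{x}^{(j)}$ (so there is no sampling term), the error due to clipping and noise is bounded by
$$\tfrac{1}{n} \sum_{j\in J}\|\vectorize{x}^{(j)} - \vectorize{\tilde{\mu}}\|_2 + \tfrac{1}{n}\|\vectorize{\eta}\|_2,$$
where $J$ is the set of clipped indices. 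It therefore suffices to control the per-point distance $\|\vectorize{x}^{(j)} - \vectorize{\tilde{\mu}}\|_2$, the clip threshold $C$, and $|J|$.

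The main obstacle, and the only genuinely new ingredient, is a geometric lemma: the private coordinate-wise median $\vectorize{\tilde{\mu}}$ lies within $\ell_2$-distance $O(w(D))$ of every data point. I would prove this as follows. By \Cref{lemma:approximate-quantile} each coordinate $\tilde{\mu}_i$ has rank error $\tilde{O}(\sqrt{d/\rho})$, which for $n = \tilde{\Omega}(\sqrt{d/\rho})$ is at most $n/4$; hence for every coordinate $i$ at least $n/4$ of the points satisfy $x_i^{(j')} \le \tilde{\mu}_i$ and at least $n/4$ satisfy $x_i^{(j')} \ge \tilde{\mu}_i$. Fixing a point $\vectorize{x}^{(\ell)}$ and a coordinate $i$, each of the $\ge n/4$ points $j'$ on the side of $\tilde{\mu}_i$ opposite to $x_i^{(\ell)}$ satisfies $(x_i^{(\ell)}-x_i^{(j')})^2 \ge (x_i^{(\ell)}-\tilde{\mu}_i)^2$, so averaging over all $j'$ gives $\tfrac{1}{n}\sum_{j'}(x_i^{(\ell)}-x_i^{(j')})^2 \ge \tfrac{1}{4}(x_i^{(\ell)}-\tilde{\mu}_i)^2$. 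Summing over $i$ and using that every pairwise distance is at most $w(D)$ yields $\tfrac{1}{4}\|\vectorize{x}^{(\ell)}-\vectorize{\tilde{\mu}}\|_2^2 \le \tfrac{1}{n}\sum_{j'}\|\vectorize{x}^{(\ell)}-\vectorize{x}^{(j')}\|_2^2 \le w(D)^2$, i.e. $\|\vectorize{x}^{(\ell)}-\vectorize{\tilde{\mu}}\|_2 \le 2w(D)$.

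Given this bound the remaining steps are routine. Since $C$ is a quantile of the lengths $\|\vectorize{y}^{(j)}\|_2$ and $k$ exceeds the rank error of the clipping quantile, at least one point lies above $C$, so $C \le \max_\ell\|\vectorize{x}^{(\ell)}-\vectorize{\tilde{\mu}}\|_2 \le 2w(D)$. The noise $\vectorize{\eta} \sim \mathcal{N}(\mathbf{0}, \tfrac{2C^2}{\rho_3}\mathbf{I})$ then satisfies $\|\vectorize{\eta}\|_2 = \tilde{O}(C\sqrt{d/\rho}) = \tilde{O}(w(D)\sqrt{d/\rho})$ by standard $\chi^2$ concentration, contributing $\tilde{O}(\sqrt{d/\rho}\, w(D)/n)$. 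For the clipping term, \Cref{lemma:privQuant-guarantees} gives $|J| = \tilde{O}(k + \sqrt{1/\rho}) = \tilde{O}(\sqrt{d/\rho})$ for $k = \tilde{\Theta}(\sqrt{d/\rho})$, and each summand is at most $2w(D)$, so the clipping error is also $\tilde{O}(\sqrt{d/\rho}\, w(D)/n)$. Summing the two matching terms yields the claimed $\tilde{O}(\sqrt{d/\rho}\, w(D)/n)$. Finally, to pass from these high-probability bounds to a bound in expectation I would condition on the (probability $\ge 1-\beta$) event that the median is central and $|J|$ is small, on which all bounds hold deterministically up to the Gaussian tail; on the complementary event the error is always at most $O(Md/\sqrt{\rho})$ because $C \le M\sqrt{d}$ and the data are bounded in $[-M,M]^d$, and choosing $\beta$ inverse-polynomially small absorbs this contribution into the $\tilde{O}$.
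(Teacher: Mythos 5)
Your proposal is correct and follows essentially the same route as the paper: the key geometric step --- bounding $\|\vectorize{x}^{(\ell)}-\vectorize{\tilde{\mu}}\|_2$ by $O(w(D))$ via averaging squared pairwise distances against the points lying on the far side of each coordinate of the approximate median --- is exactly the paper's $\alpha$-good lemma, and the subsequent bounds on $C$, $|J|$, and the Gaussian noise mirror the paper's reuse of the Part~2 and Part~3 calculations. The only addition is your explicit high-probability-to-expectation conversion, which the paper leaves implicit.
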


\noindent Before proceeding to the proof, we introduce some helpful notation.

\begin{definition}
    Let $\vectorize{\mu} \in \mathbf{R}^d$ be the coordinate-wise median of $D$.
    For $0 \leq \alpha < 1/2$, we say that $\vectorize{\tilde\mu} \in \mathbf{R}^d$ is $\alpha$-good if each ${\tilde{\mu}}_i$ has rank error at most $\alpha n$ from ${\mu}_i$.
\end{definition}

\begin{lemma}
 Given $0 \leq \alpha < 1/2$, let $\vectorize{\tilde{\mu}}$ be $\alpha$-good.
 For each $\vectorize{x}^{(j)} \in D$, $1 \leq j \leq n$, $\| \vectorize{x}^{(j)} - \vectorize{\tilde\mu} \|_2^2 \leq \frac{w(D)^2}{1/2 - \alpha}$.
\end{lemma}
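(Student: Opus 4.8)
The plan is to bound the squared distance from a fixed data point $\vectorize{x}^{(j)}$ to the approximate coordinate-wise median $\vectorize{\tilde{\mu}}$ by first exploiting the rank guarantee one coordinate at a time, and then stitching the coordinate bounds together through an averaging argument that ties everything back to the diameter $w(D)$. (I read the $\vectorize{\hat\mu}$ in the statement as a typo for $\vectorize{\tilde{\mu}}$.)

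First I would fix a point $\vectorize{x}^{(j)}$ and a coordinate $i$ and unpack $\alpha$-goodness: since $\tilde{\mu}_i$ has rank error at most $\alpha n$ from the true coordinate-wise median, whose rank is $n/2$, the rank of $\tilde{\mu}_i$ lies in $[(1/2-\alpha)n, (1/2+\alpha)n]$. Consequently there are at least $(1/2-\alpha)n$ data points whose $i$th coordinate sits on the side of $\tilde{\mu}_i$ opposite to $x^{(j)}_i$: if $x^{(j)}_i > \tilde{\mu}_i$ then at least $(1/2-\alpha)n$ indices $\ell$ satisfy $x^{(\ell)}_i \le \tilde{\mu}_i$, and symmetrically when $x^{(j)}_i < \tilde{\mu}_i$ (the case $x^{(j)}_i = \tilde{\mu}_i$ being trivial since the squared term vanishes). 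For every such $\ell$ the two values straddle $\tilde{\mu}_i$, so $(x^{(j)}_i - x^{(\ell)}_i)^2 \ge (x^{(j)}_i - \tilde{\mu}_i)^2$. Summing over all $\ell \in [n]$ and retaining only these $(1/2-\alpha)n$ nonnegative terms gives $\sum_{\ell=1}^n (x^{(j)}_i - x^{(\ell)}_i)^2 \ge (1/2-\alpha)n\,(x^{(j)}_i - \tilde{\mu}_i)^2$ for each coordinate $i$.

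The key step is then to sum this coordinate-wise inequality over all $i$ and swap the order of summation: the left-hand side becomes $\sum_{\ell=1}^n \|\vectorize{x}^{(j)} - \vectorize{x}^{(\ell)}\|_2^2$ and the right-hand side becomes $(1/2-\alpha)n\,\|\vectorize{x}^{(j)} - \vectorize{\tilde{\mu}}\|_2^2$. Each term $\|\vectorize{x}^{(j)} - \vectorize{x}^{(\ell)}\|_2^2$ is at most $w(D)^2$ by definition of the diameter, so the left-hand side is at most $n\,w(D)^2$, and dividing by $(1/2-\alpha)n$ yields exactly the claimed bound. The main obstacle is that the median guarantee is inherently coordinate-wise whereas the target is a global $\ell_2$ quantity controlled by the diameter; these cannot be reconciled by choosing a single witness on the far side of $\vectorize{\tilde{\mu}}$, since different coordinates generally require different witnesses. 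The averaging trick — bounding every pairwise distance by $w(D)$ after summing the coordinate-wise inequalities over all pairs — is what resolves this mismatch, at the cost of only the factor $1/(1/2-\alpha)$.
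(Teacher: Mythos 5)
Your proof is correct and follows essentially the same route as the paper: fix $\vectorize{x}^{(j)}$, bound the sum of pairwise squared distances by $n\,w(D)^2$ via the diameter, decompose coordinate-wise, and use the $\alpha$-goodness rank guarantee to argue that at least $(1/2-\alpha)n$ points straddle $\vectorize{\tilde{\mu}}_i$ on the far side, lower-bounding each coordinate's sum by $(1/2-\alpha)n\,(x^{(j)}_i-\tilde{\mu}_i)^2$. Your write-up just makes explicit the straddling justification that the paper leaves implicit (and your reading of $\vectorize{\hat\mu}$ as $\vectorize{\tilde\mu}$ matches the paper's intent).
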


\begin{proof}
    Fix $\vectorize{x}^{(j)}$ and compute
    \begin{align*}
        (n - 1) w(D)^2 &\geq \sum_{j \neq j'} \| \vectorize{x}^{(j)} - \vectorize{x}^{(j')} \|_2^2\\
        &= \sum_{i = 1}^d \sum_{j \neq j'} \left({x}_i^{(j)} - {x}_i^{(j')}\right)^2\\
        &\geq \sum_{i = 1}^d (1/2-\alpha) (n - 1) \left({x}_i^{(j)} - {\tilde\mu}_i\right)^2\\
        &=  (1/2 - \alpha)(n - 1) \|\vectorize{x}^{(j)} - \vectorize{\tilde\mu}\|_2^2.
    \end{align*}
    The lemma follows by re-ordering terms.
\end{proof}

\begin{proof}[Proof of \Cref{thm:diameter:bound}]
    We instantiate \privatequantile as the binary search based method of~\cite{huang_instance-optimal_2021} with $T=\log(M n \sqrt{\rho})$.
    Fix some value for $\alpha$, say 1/3.

    By Lemma~\ref{lemma:approximate-quantile} the coordinate-wise median computed on \Cref{alg:center-prediction} of \Cref{alg:our-algorithm} is within $\ell_2$ distance $\sqrt{d/\rho}/n$ of an $\alpha$-good point with high probability.
    Let us assume that $\vectorize{\tilde{\mu}}$ is itself $\alpha$-good as the additional error of $\sqrt{d/\rho}/n$ from the quantile search is dominated by the error from clipping and noise.
    The failure probability from Lemma~\ref{lemma:approximate-quantile} must be set sufficiently low such that the expected error from failure events is also dominated by the error from clipping and noise.
    Since the maximal possible $\ell_2$ error is $\sqrt{d}2M$ this is trivially true when $n$ is large enough such that $\beta \leq \frac{w(D)}{\sqrt{\rho}Mn}$.

    Since $\vectorize{\tilde\mu}$ is $\alpha$-good, the maximum length of a shifted vector $\vectorize{x}^{(j)} - \vectorize{\tilde\mu}$ is $O(w(D))$, so $C = O(w(D))$ on \Cref{alg:estimate:clipping} of \Cref{alg:our-algorithm}.
    As in \Cref{sec:noise:error}, the expected $\ell_2$ error due to noise is at most $1/n \cdot \sqrt{2C^2 ( \sum_{i = 1}^d {\hat{\sigma}}_i) / \rho} = \tilde{O}\left(\sqrt{\frac{d}{\rho}} \frac{w(D)}{n}\right)$.

    Analogously to the calculations carried out in \Cref{sec:clipping:error}, we can bound the error due to clipping for all vectors $\|\vectorize{y}^{(j)}\|_2 > C$.
    Setting $k = \tilde\Theta(\sqrt{d/\rho})$ shows that this clipping error is $\tilde{O}\left(\sqrt{d/\rho} \cdot w(D)/n\right).$
\end{proof}

\section{Empirical Evaluation}\label{sec:experiments}

To put \algorithmname's utility into context, we measure error in diverse experimental settings.
We use the empirical mean, i.e., the sampling error, as a baseline, since it reflects an inevitable lower bound.
Additionally we compare \algorithmname to \instanceoptimallong~\citep{huang_instance-optimal_2021}, which has been shown to perform at least as well as \coinpress~\cite{biswas_coinpress_2020} in empirical settings~\cite{huang_instance-optimal_2021}; hence representing the current state-of-the-art for differentially private mean estimation.
Our experimental evaluation focuses on the examples of $(\vectorize{\sigma}, p)$-well concentrated distributions considered in Section~\ref{sec:well-concentrated};
we evaluate the utility-privacy tradeoff for Gaussian and binary data under $\ell_2$ and $\ell_1$ error, respectively.
We run our experiments with synthetic data as input.
For the binary case, we also evaluate our error on the Kosarak dataset~\cite{benson_discrete_2018} which represents user visits (or, conversely, non-visits) to webpages, as well as the Point of Sale (POS) dataset\footnote{\url{https://github.com/cpearce/HARM/blob/master/datasets/BMS-POS.csv}} which represents user purchases.
The code used for our experiments is available at \href{https://github.com/ChristianLebeda/PLAN-experiments}{https://github.com/ChristianLebeda/PLAN-experiments}.

\paragraph{Additional experiments} Due to space constraints, we provide additional experimental results in the appendices. 
\Cref{app:correlations} evaluates the performance of \algorithmname and \instanceoptimalshort for correlated, Gaussian data. 
We demonstrate that \algorithmname is less affected by dependencies than \instanceoptimalshort.
\Cref{app:no-scaling} includes the variant of \algorithmname that does not scale the input, cf.~Section~\ref{sec:generic-bounds}, and compares it to \algorithmname and \instanceoptimalshort. In a nutshell, this variant recovers the behavior of \instanceoptimalshort even in empirical settings in which constant matters, as suggested by Theorem~\ref{thm:diameter:bound}. 
Finally, Appendix~\ref{app:gaussian-2048} discusses experiments for larger input sizes, which technically is far away from our assumptions on the input size.
The evaluation shows that \algorithmname is less robust than its competitors for largest inputs and smallest privacy budget.

\subsection{Implementation}\label{sec:implementation}
We evaluate the empirical accuracy of \algorithmname\ by instantiating \Cref{alg:our-algorithm} in Python 3.
Our implementation contains two different instantiations of \algorithmname: one version for binary data targeting $p=1$, and one version for data from multivariate Gaussian distributions for $p=2$.
The pseudo code for both instantiations of \algorithmname is shown in \Cref{code:plan}.
Both instances use the \privatequantile search by \citet{kaplan_differentially_2022}.
The implementation of \instanceoptimalshort uses the original source code from \citet{huang_instance-optimal_2021}.

\begin{center}
\begin{minipage}[htb]{\columnwidth}
\begin{lstlisting}[language=plan,label=code:plan,caption={Pseudo code for the \algorithmname instantiation for $n$ vectors in $\mathbf{R}^d$, with each coordinate being in the
	range ${[-M, M]}$, targeting the expected $\ell_p$ error with privacy budget $\rho$ and a failure probability of at most $\beta$.
	Note that budget is spent on estimating the standard deviation unlike in \Cref{alg:our-algorithm} where \vectorize{\hat{\sigma}^2} is an input parameter.
	}]
def PLAN(data, n, d, M, p, rho, beta) {
 rho1, rho2, rho3 = divideBudget(rho)
 mu = center(M, rho1*0.25, data, beta/3)
 std = estimateStd(M, rho1*0.75, data, beta/3)
 scaleFactors = std**(-2/(p + 2))
 y = (data-mu) * scaleFactors #coordinate-wise
 k = sqrt(n) + rankError(M, n, d, rho2, beta/3)
 z = clip(M, y, rho2, (n-k)/n)
 return ((z+noise(p, rho3))/scaleFactors)+mu
}
\end{lstlisting}
\end{minipage}
\end{center}

\paragraph{Estimating $\vectorize{\sigma}^2$.}
We recall that \Cref{alg:our-algorithm} assumes an estimate of the variances as input.
In the absence of public knowledge, these parameters have to be estimated on the actual data in a differentially private way, as mentioned in \Cref{code:plan}. We find estimates in the following way.

Let $X, Y \sim \mathcal{D}$ be two i.i.d. random variables. By a standard calculation, 
$\mathbf{E}[(X - Y)^2/2] = \mathbf{E}[X^2] - \mathbf{E}[X]^2 = \mathbf{Var}[X]$.
In the Gaussian case, given $X, Y \sim \mathcal{N}({\mu_i}, \sigma_i^2)$, $(X-Y)^2/2$ follows a Chi-squared distribution with one degree of freedom. 
We use \privatequantile for each coordinate to differentially privately estimate the median $\tilde{\mu}_i$ of a sequence of values $(X-Y)^2/2$.
Using the Wilson-Hilferty transformation~\cite[(18.24)]{johnson1994chi}, we then estimate the mean as $\tilde{\mu_i}/(9/7)^3$ as a post-processing step.
In the binary case, each coordinate is 1 with probability $\probability_i$. 
We estimate the variance by privately estimating the mean $\tilde{\probability}_i$ using the Gaussian mechanism with $\ell_2$-sensitivity $1/n$, and use ${\hat{\sigma}}^2_i = \tilde{\probability}_i (1 - \tilde{\probability}_i)$.
In both cases, we regularize the estimate $\vectorize{\hat{\sigma}}$ on the standard deviation by adding $\|\vectorize{\hat{\sigma}}\|_1 / d$ to each coordinate.

In \Cref{app:std-evaluation}, we generalize the estimator on Gaussian data to distributions with bounded fourth moment and provide
more details on the empirical evaluation of differentially private variance estimation.

\paragraph{Bounding the clipping universe.}
Scaling the data based on the variances estimate $\hat\sigma^2$ gives an opportunity to trim the universe used when searching for the clipping radius (\Cref{alg:our-algorithm}, \Cref{alg:estimate:clipping}).
For the $\ell_2$ error, instead of using $M\sqrt{d}$ as the upper bound to cover the entire universe, we use the bound $\sqrt{\log(n)\log(1/\beta)\|\vectorize{\hat\sigma}\|_1}$ for a given failure probability $\beta$.
We set this failure probability to 0.1.
We can use this bound because Assumption~\eqref{assumption:scaled_concentrated} in \Cref{def:well-concentrated} guarantees that the length of vectors after scaling is proportional to $\sqrt{\|\vectorize{\hat\sigma}\|_1}$.

\subsection{Experiment design}\label{sec:experiment-design}

\begin{table*}[tb]
	\begin{tabularx}{\linewidth}{lllllll}
		\toprule
		\textbf{Name} &  $n$ & $\rho$ & \universevariable  &  $d$ & \vectorize{\sigma^2} & \vectorize{\mu}\\
		\midrule
		\casethem & $4000$ &  $\{1, 0.5, 0.125\}$  & $50\sqrt{d}$ &  $\{16, 32, \compactdots, 1024\}$ & $\left(1, \compactdots, 1\right)$ & $0^d$ \\
		\caseplausible & $10\,000$ & $\{1, 0.125\}$ & \pessimisticuniverse &  $512$ & $\left(\left(\frac{d}{d}\right)^\alpha, \left(\frac{d}{d-1}\right)^\alpha, \compactdots , \left(\frac{d}{1}\right)^\alpha\right)$ & $10^d$ \\
		\caseshine & $10\,000$ & $\{1, 0.5, 0.125\}$& \pessimisticuniverse &  $\{16, 32, \compactdots, 1024\}$  & $\left(\left(\frac{d}{d}\right)^2, \left(\frac{d}{d-1}\right)^2, \compactdots , \left(\frac{d}{1}\right)^2\right)$ & $10^d$ \\
		\midrule
		\casebinary &  $4096$ & $\{1, 0.5, 0.125\}$ & $\|\vectorize{\hat{\sigma}^{2/3}}\|_2$ &  $\{256, 512, \compactdots, 2048\}$ & $\sigma_i^2 = \mu_i(1-\mu_i)$ & $\mu_i = \begin{cases} \tfrac{1}{2} & i \leq \alpha d \\ \tfrac{1}{100} & \text{else} \end{cases}$\\
		\midrule
		\casekosarak &  $75\,462$ & $\{1, 0.5, 0.25, 0.125, 0.0625\}$ & $\|\vectorize{\hat{\sigma}^{2/3}}\|_2$ &  $27\,983$ & \multicolumn{2}{c}{\emph{Fixed dataset}}  \\
		\casepos &  $515\,596$ & $\{1, 0.5, 0.25, 0.125, 0.0625\}$ & $\|\vectorize{\hat{\sigma}^{2/3}}\|_2$ &  $1\,657$ & \multicolumn{2}{c}{\emph{Fixed dataset}} \\
		\toprule
		\multicolumn{7}{c}{Global setting: $\rho_1=0.25\rho/d, \rho_2 = 0.25(\rho-\rho_1 ), \rho_3=\rho-\rho_1-\rho_2$.} \\
		\bottomrule
	\end{tabularx}
	\caption{Settings for the experiments. All experiment settings run 50 times for each algorithm. \instanceoptimalshort is run with 20 steps for the binary quantile selection. Note that $\rho_1$ is split between recentering and variance prediction for \algorithmname.}
	\label{tab:experiment-settings}
\end{table*}

\paragraph{Parameter input space.}
The following parameters need to be chosen for each execution of \algorithmname: the universe $\universevariable$, the $\ell_p$ error norm, and the privacy budget $\rho$ as well as the partitioning of $\rho$ into $\rho_1, \rho_2, \rho_3$.
We also use a failure probability $\beta$ to compute exact constants for the universe after scaling and the value of $k$ for the clipping threshold quantile, cf. \Cref{algo:set:parameters} in \Cref{alg:our-algorithm} and the discussion above.
Since \instanceoptimalshort uses a binary search for their quantile selection, the amount of steps to use also needs to be chosen.
\instanceoptimalshort sets the amount of steps to 10 by default, but an empirical investigation shows that this value is too low for many of our settings --- the binary quantile search ends early which causes inaccurate results.
To level the playing field, we use 20 steps to ensure that \instanceoptimalshort does not suffer any disadvantages from the binary quantile search failing. 
Furthermore, we noticed that in the implementation of~\instanceoptimalshort, the quantile for the clipping threshold is set too aggressively. We used as offset a value that is twice as large as their proposal, which made their implementation much more robust without increasing the median error.

\paragraph{Input data.}
We use both synthetic and real-world datasets to evaluate \algorithmname.
When generating synthetic data, the following parameters need to be chosen: the dataset size $n$, the dimensionality $d$, the means \vectorize{\mu}, and the variances \vectorize{\sigma^2}.
Since \algorithmname and \instanceoptimalshort both ignore potential correlations in data, we use covariance matrices of the form $\Sigma=\text{diag}(\vectorize{\sigma}^2)$ for the Gaussian case.
As shown in Appendix~\ref{app:correlations}, the trends observed for the diagonal case carry over to the non-diagonal, correlated case. 
In the binary case for real-world data, coordinates are correlated.

\subsubsection{Gaussian data}\label{sec:design-gaussian-data}
To show the effectiveness of \algorithmname on Gaussian data, we design three diverse experiments.
The first experiment (\casethem) reflects the parameter settings used in previous work by \citet{huang_instance-optimal_2021} where data has no skew, which is the worst case for our proposed method.
The second experiment (\caseplausible) simulates data ranging from no to significant skew across dimensions, showcasing how \algorithmname improves with increasing skew.
Finally, the third experiment (\caseshine) highlights how \algorithmname scales as dimensionality increases for data with a skew.
For each experiment, we vary $\rho$ between $1$ and $0.01$ to show how accuracy scales in higher and lower privacy regimes.
We summarize the settings used in the experiments in \Cref{tab:experiment-settings}.
Note that the settings $\rho=0.01$ corresponds to ($\varepsilon\approx 0.75$, $\delta\approx 10^{-6}$)-approximate differential privacy.

\paragraph{Budget division.}

The values we report for $\rho$ in \Cref{tab:experiment-settings} correspond to the total privacy budget for the entire experiment, also including variance estimation.
Our algorithm needs to perform two preprocessing steps: estimating  \vectorize{\mu} for re-centering, and estimating  \vectorize{\sigma^2} for scaling the noise.
We fix the initial estimation of \vectorize{\mu} and  \vectorize{\sigma^2} to use 25\% of the total privacy budget --- the same proportion used for preprocessing as in \citet{huang_instance-optimal_2021}.
In the same spirit, we set the budget to determine the clipping threshold ($\rho_2$) to 25\% of the remaining budget, and use the larger part ($\rho_3$) for the Gaussian noise.

\paragraph{Choosing valid settings.}
Just like for \instanceoptimalshort, \universevariable\ needs to be set such that \vectorize{\mu} is within the universe.
We will use two different approaches to set \universevariable: the approach from \cite{huang_instance-optimal_2021} ($\universevariable=50\sqrt{d}$), and a more pessimistic approach where we assume all values have the worst-case standard deviation across all dimensions, and create more leeway by scaling with a constant ($\universevariable=\pessimisticuniverse$).

Additionally, the rank error needs to be tuned such that \privatequantile search can be expected to return a quantile close to the requested one.
Since \algorithmname calls \privatequantile multiple times, \algorithmname needs to tolerate the worst-case rank error for all calls.
We set $n$ such that the rank error is at most $0.1n$ for each value of $\rho$.

\ourparagraph{\casethem: }{no skew}
To show that \algorithmname\ performs comparatively to \instanceoptimalshort we run it on data where variance is the same across all dimensions.
In this setting we expect \algorithmname to perform similarly to \instanceoptimalshort.
We reuse the experiment settings used by \citet{huang_instance-optimal_2021} for a fair comparison.

\ourparagraph{\caseplausible: }{varying skewness}
To show how \algorithmname improves as the input data's skew increases, we vary the skewness of the variance.
We introduce a parameter $\alpha$, and simulate a Zipfian like skew to the data, and set the variances $\vectorize{\sigma^2}=((d/d)^\alpha, (d/(d-1))^\alpha, \compactdots , (d/1)^\alpha)$ for $\alpha\in\{0, 0.5, \compactdots, 2\}$.
In this setting we expect \algorithmname to outperform \instanceoptimalshort for $\alpha> 0$.

\ourparagraph{\caseshine: }{varying dimensionality}
To show how \algorithmname's advantage scales compared to \instanceoptimalshort, we vary dimensionality as we expect an improvement up to a factor $\sqrt{d}$.
Since \algorithmname's advantage is based on data having a skew, we set $\vectorize{\sigma^2}=\left((d/d)^2, (d/(d-1))^2, \compactdots , (d/1)^2\right)$.
Note that \algorithmname's improvement is in \emph{noise error}, as sampling error is unavoidable --- we compute the error relative to the empirical mean in this experiment to showcase the difference in noise error.

\subsubsection{Binary data}\label{sec:design-binary-data}

\begin{figure*}[ht]
	\includegraphics*[width=0.9\textwidth]{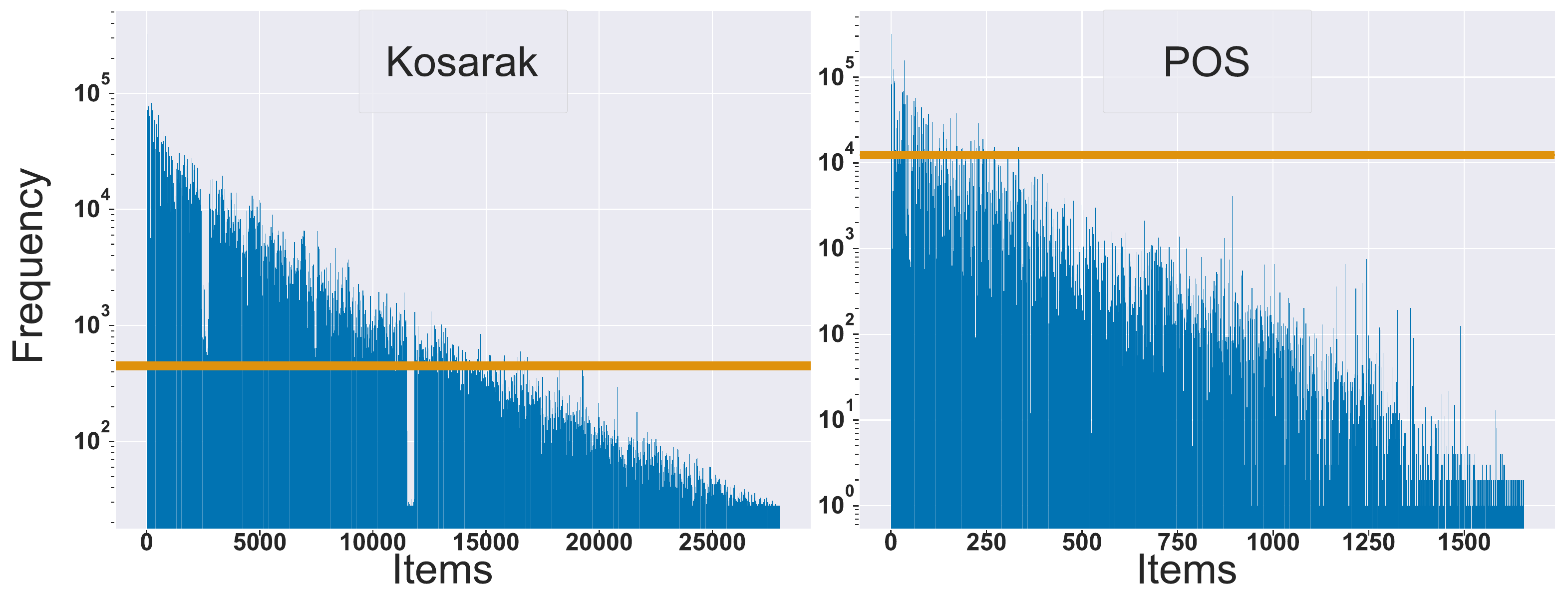}
	\caption{Histogram for \casekosarak (left) and \casepos (right).
	The orange line is the smallest allowed variance according to \Cref{lem:binary:concentrated} which we clip to.}
	\label{fig:histogram:binary}
\end{figure*}

\chtodo{What does the line represent? What does it mean that we clip to the variance?}

To diversify our experimental scope, we consider binary data represented by $n$ bitvectors of length $d$ in which each bit $i, 1 \leq i \leq d,$ is set independently to 1 with probability $\probability_i$.
We usually think about these bitvectors as sets representing a selection of items from $\{1, \ldots, d\}$.
To vary the error measure, we focus on the $\ell_1$ error. This is akin to computing the total variation distance (TVD) $1/2 \| x - y \|_1$, but we avoid the normalization of $x$ and $y$ to have unit $\ell_1$ norm.

We design three experiments: the first (\casebinary) varies the skewness in the probabilities to make controlled experiments on the accuracy of \algorithmname.
The two remaining experiments (\casekosarak, \casepos) use real-world datasets that naturally exhibit skew between coordinates.

\ourparagraph{\casebinary: }{Varying skewness}
This experiment follows the same design principle as \caseplausible: to show how skewness affects the performance of \algorithmname.
Given $n, d,$ and $\rho$, we choose two probabilities $\probability_1 = 0.5$ (high variance) and $\probability_2 = 0.01$ (low variance).
Given $\alpha \in [0, 1]$, we sample the first $\lceil \alpha d\rceil$ bits with probability $\probability_1$ each, and the remaining positions with probability $\probability_2$.
The low variance setting is slightly below the minimum threshold of $1/d^{2/5}$ discussed in~\Cref{lem:binary:concentrated} to test the robustness of our implementation. We clip all estimated variances to $1/d^{2/5}$ from below.
\chtodo{I used indexing subscripts here. Could also be $\probability_\alpha$ and $\probability_\beta$.}

\ourparagraph{\casekosarak: } {Website visits}\label{sec:design-kosarak}
The \emph{Kosarak dataset}\footnote{\url{http://fimi.uantwerpen.be/data/}} represents click-stream data of a Hungarian news portal.
There are $n = 75\,462$ users and a collection of $d = 27\,983$ websites.
In total, users clicked on 4\,194\,414 websites (each user clicked on 55.6 websites on average), and there is a large skew between the websites, see \Cref{fig:histogram:binary}.

\ourparagraph{\casepos: }{Shopping baskets}\label{sec:design-pos}
The \emph{POS dataset} %
contains merchant transactions on $d = 1\,657$ categories from $n = 515\,596$ users.
In total, there are $3\,367\,019$ transactions (around 6.5 on average per user).
Again, there is large skew in the different categories, see \Cref{fig:histogram:binary}.
The dataset is particularly challenging because the minimum variance $d^{-2/5}$ (cf. \Cref{lem:binary:concentrated}) has to be clipped on many coordinates.

\begin{figure*}[tb]
	\centering
	\includegraphics[width=\linewidth]{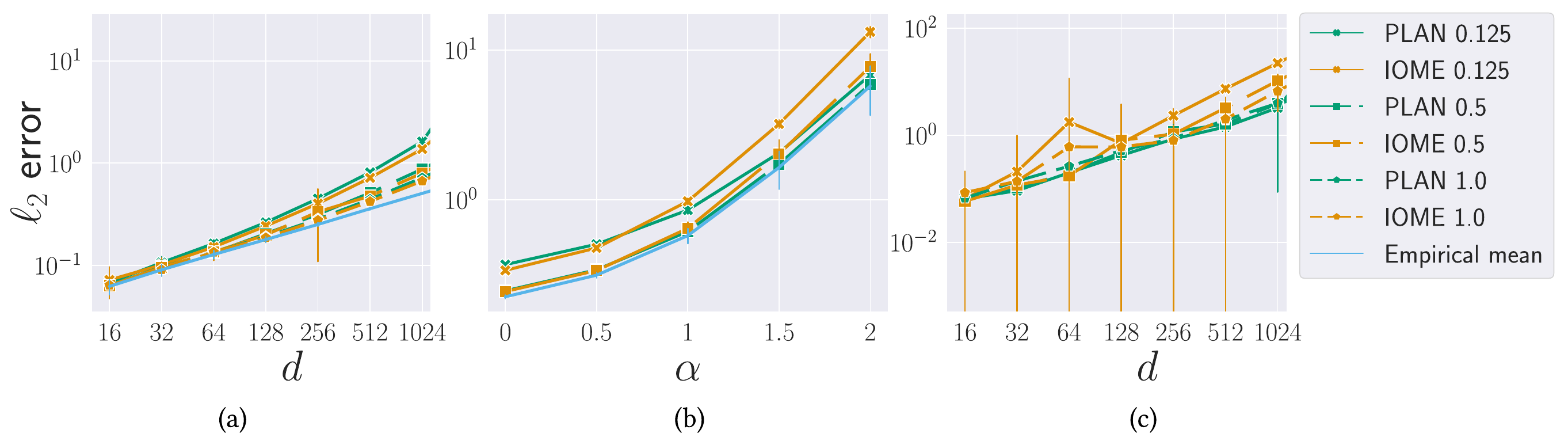}
	\caption{$\ell_2$ error for synthetic Gaussian data when varying (a) dimensions with data without a skew,
		(b) skewness of the variances, and
		(c) dimensions  for skewed data --- note that we compute error relative to the empirical mean rather than the statistical mean in this experiment as sampling error dominates in this setting. Also notice the different scales on the y-axis.}
	\label{fig:gaussian-experiments}
\end{figure*}

\begin{figure*}[tb]
	\centering
	\includegraphics[width=\linewidth]{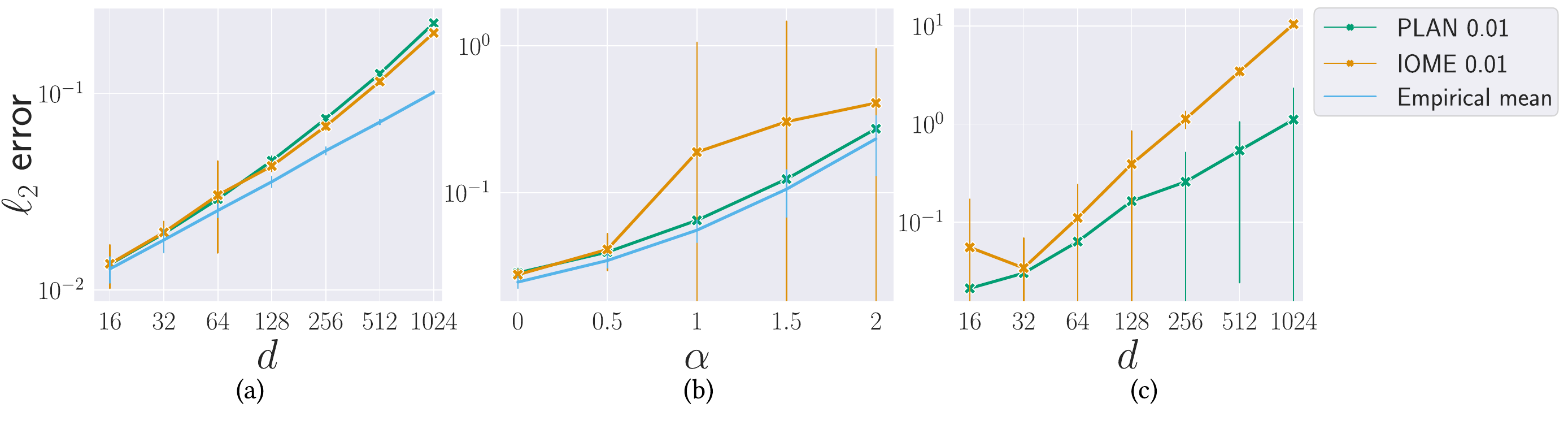}
	\caption{$\ell_2$ error for \casethem, \caseplausible, and \caseshine with $\rho=0.01$ which implies $(\varepsilon, \delta)$-approximate DP with $\varepsilon < 1$ for $\delta \approx 10^{-6}$.
		To tolerate the rank error, we reuse the parameters from the original experiments and set $n= 100\, 000$, and $d=64$ for (b).}
	\label{fig:gaussian-experiments-tiny-rho}
\end{figure*}

\begin{figure*}[tb]
	\centering
	\includegraphics[width=\linewidth]{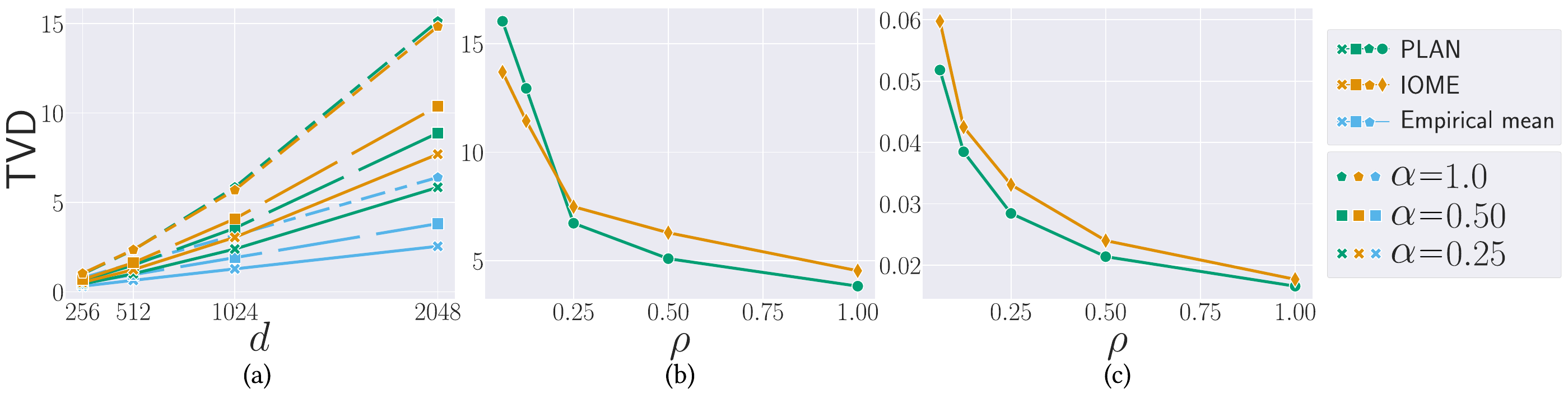}
	\caption{(a) Synthetic binary data, varying the ratio of 0s to 1s
		(b) Kosarak dataset
		(c) POS dataset}
	\label{fig:binary-experiments}
\end{figure*}

\subsection{Results}\label{sec:experiment-results}

For the Gaussian case, we ran our experiments using Python 3.11.3 on a MacBook Pro with 24GB RAM, and the Apple M2 chip (8-core CPU).
For the binary case, we had to run the experiments on a more powerful machine to support \instanceoptimalshort on the real-world datasets.
While Kosarak and POS are sparse datasets, \instanceoptimalshort requires that the entire dataset (not just the sparse representation) is loaded into memory to perform a random rotation the algorithm uses as a preprocessing step.
As a consequence, we ran the binary experiments using Python 3.6.9 on a machine with 512GB RAM, on 2x Intel(R) Xeon(R) CPU E5-2690 v4 @ 2.60GHz (2x 14-core CPU).
Even on this more powerful machine with multiprocessing using all cores, a single run of \instanceoptimalshort took at least 26 minutes on Kosarak, and at least 9 minutes on POS.
In comparison, \algorithmname spent an average of 12 and 9 seconds running on Kosarak and POS, respectively.

\subsubsection{Gaussian data}\label{sec:results-gaussian-data}

All results are shown in \Cref{fig:gaussian-experiments} and \Cref{fig:gaussian-experiments-tiny-rho}.
Additionally, in \Cref{app:gaussian-2048} we also include $d=2048$ for the experiments.
\Cref{fig:gaussian-experiments}~(a) shows \casethem, where \algorithmname and \instanceoptimalshort have comparable accuracy.
The same behavior is observed for smaller $\rho$, as shown in \Cref{fig:gaussian-experiments-tiny-rho}~(a).
Both plots show expected behavior that aligns with our theoretical results.

\Cref{fig:gaussian-experiments}~(b) shows \caseplausible, where \algorithmname performs better than \instanceoptimalshort for $\alpha>1$.
For $\alpha\leq1$ the error between \algorithmname and \instanceoptimalshort is similar.
This is expected behavior, as $\alpha=0$ represents the same input data as in \casethem.
Notice how \algorithmname approaches the empirical mean as $\alpha$ grows for both $\rho=0.5$ and $\rho=1$.
For smaller values of $\rho$, \instanceoptimalshort is less robust, as can be seen by the error bars in \Cref{fig:gaussian-experiments-tiny-rho}~(b), whereas \algorithmname retains robust error for all values of $\alpha$ using the same settings.

\Cref{fig:gaussian-experiments}~(c) shows \caseshine, where we compare against the empirical mean since sampling error is larger than the noise error for \algorithmname\ in this case.
As expected, \algorithmname increases its advantage over \instanceoptimalshort as $d$ grows.
Since we have subtracted the empirical mean the error scale is smaller, which is why we observe larger error bars in this setting.
The difference between \instanceoptimalshort and \algorithmname's error is further showcased in \Cref{fig:gaussian-experiments-tiny-rho}~(c), where \algorithmname has a lower error, but the variance increases with $d$.

\subsubsection{Binary data}\label{sec:results-binary-data}

All results are shown in \Cref{fig:binary-experiments}.
\Cref{fig:binary-experiments}~(a) shows \casebinary, where \algorithmname has an advantage over \instanceoptimalshort for $\alpha<1$ which increases as $\alpha$ decreases.
This is the expected behavior, as \algorithmname is able to exploit the skew in variance whereas \instanceoptimalshort treats every dimension the same.

\Cref{fig:binary-experiments}~(b) shows \casekosarak.
As we can see, \algorithmname outperforms \instanceoptimalshort for sufficiently large values of $\rho$.
For small $\rho$ ($\rho\leq0.125$), \algorithmname is running in an invalid setting --- our assumptions on rank error are not fulfilled in these cases.

\Cref{fig:binary-experiments}~(c) shows \casepos.
\algorithmname has a slight advantage compared to \instanceoptimal in this case, which decreases as $\rho$ grows.

\section{Related Work}\label{sec:related-work}
Our work builds on concepts from multiple areas within the literature on differential privacy.
We provide an overview of the most closely related work.

\paragraph{Statistical private mean estimation.}
There is a large, recent literature on statistical estimation for $d$-dimensional distributions under differential privacy, mainly focusing on the case of Gaussian or subgaussian distributions~\citep{alabi_privately_2022, ashtiani_private_2022, biswas_coinpress_2020, brown_covariance-aware_2021, brown_fast_2023, du_differentially_2020, duchi_fast_2023-1, hopkins_efficient_2022, karwa_finite_2018, kamath_privately_2019, kamath_private_2022, kothari_private_2022}.
The error on mean estimates is generally expressed in terms of Mahalanobis distance, which is natural if we want the error to be preserved under affine transformations.
Some of these efficient estimators are even \emph{robust} against adversarial changes to the input data~\citep{alabi_privately_2022, kothari_private_2022}.
Other estimators work even for rather heavy-tailed distributions~\citep{kamath_private_2020}.
What all these estimators have in common is that the nominal dimension $d$ influences the privacy-utility trade-off such that higher-dimensional vectors have a worse trade-off.
To our best knowledge, the algorithm among these that has been shown to work best in practical (non-adversarial) settings is the \coinpress algorithm of \citet{biswas_coinpress_2020}.

\paragraph{Adapting to the data.}
The best private mean estimation algorithms are near-optimal for worst-case $d$-dimensional distributions in view of known lower bounds~\citep{cai_cost_2021}.
However, it is natural to consider ways of improving the privacy-utility trade-off whenever the input distribution has some structure.
One way of going beyond the worst case is by privately identifying low-dimensional structure~\citep{amin_differentially_2019,dwork_analyze_2014, hardt_noisy_2014, singhal_privately_2021-1}.
Such methods effectively reduce the mean estimation problem to an equivalent problem with a dimension smaller than $d$.
However, we are not aware of any work showing this approach to be practically relevant for mean estimation.

Another approach for adapting to the data is \emph{instance optimality}, introduced by~\citet{asi_instance-optimality_2020} and studied in the context of mean estimation by \citet{huang_instance-optimal_2021} who use $\ell_2$ error (or mean square error) as the utility metric.
The goal is optimality, i.e.~matching lower bounds, for a class of inputs with a given diameter but no further structure.
\citet{huang_instance-optimal_2021} found that their private mean estimation algorithm often has smaller error than \coinpress in practice.
Because of this, and since they also aim to minimize an $\ell_p$ error, this algorithm was chosen as our main point of comparison.

Neither of the mentioned approaches takes \emph{skew} in the data distribution into account, so we believe this is a novel aspect of our work in the context of mean estimation.
However, we mention that privacy budgeting in skewed settings has recently been studied in the context of multi-task learning~\citet{krichene_multi-task_2023}.
Also, the related setting of mean estimation with heterogeneous data (where the sensitivity with respect to different clients' data can differ) was recently studied by~\citet{cummings_mean_2022}.

\paragraph{Clipping.}

An important aspect of private mean estimation for unbounded distributions, in theory and practice, is how to perform \emph{clipping} to reduce the sensitivity.
This has in particular been studied in the context of differentially private stochastic gradient descent~\citep{mcmahan_learning_2018, pichapati_adaclip_2019,andrew_differentially_2021,bu_automatic_2022}.
Though clipping introduces bias, \citet{kamath_bias-variance-privacy_2023} have shown that this is unavoidable without additional assumptions.

\citet{huang_instance-optimal_2021} used a clipping method designed to cut off a carefully chosen, small fraction of the data points.
The clipping done in \algorithmname follows the same pattern, though it is applied only after carefully scaling data according to the coordinate variances.
Thus, it corresponds to clipping to an axis-aligned ellipsoid.

To formally bound clipping error one can either express the error in terms of the diameter of the dataset or analyze the error under some assumption on the data distribution. Both approaches are explored in~\citet{huang_instance-optimal_2021}, but in this paper we have chosen to focus on the latter.

\section{Conclusion and future work}\label{sec:conclusion}

We introduce \algorithmnamelong, a family of algorithms for differentially private mean estimation of $d$-dimensional data.
\algorithmname exploits skew in data's variance to achieve better $\ell_p$ error.
In the case of $\ell_2$ error we achieve a particularly clean bound, namely error proportional to $\|\vectorize{\sigma}\|_1$.
This is never worse than the error of $\sqrt{d}\|\vectorize{\sigma}\|_2$ obtained by previous methods and gives an improvement up to a factor of up to $\sqrt{d}$ when $\vectorize{\sigma}$ is skewed.
While the privacy guarantees hold for any input, the error bounds hold for independently sampled data from distributions that follow a well-defined assumption on concentration.

Finally, we implement two \algorithmname instantiations and empirically evaluate their utility.
Practice follows theory --- \algorithmname outperforms the current state-of-the-art for skewed datasets, and is able to perform competitively for datasets without skewed variance.
To aid practitioners in implementing their own \algorithmname, we summarize some practical advice based on our lessons learned.

\paragraph{Advice for practitioners.}
When implementing a \algorithmname instantiation, practitioners should pose the following questions:
\begin{enumerate}[ref={Question~\arabic*}]
	\item\label{question:estimator} Is there a suitable estimator for the variances ${\sigma}_i^2$ of the data distribution?
	\item\label{question:universe-bound} Can a tighter bound on the clipping universe ($M\sqrt{d}$) be used?
	\item\label{question:robustness} How robust is \algorithmname for my given settings, i.e., will the rank error be too high and cause \algorithmname to fail?
\end{enumerate}
We give examples of how to answer these questions in our evaluation.
To answer \ref{question:estimator}, we derive \emph{private} variance estimators tuned to the data distribution.
As for \ref{question:universe-bound}, when the distribution is $(\vectorize{\sigma}, p)$-well concentrated, much better bounds on the universe size can be derived by using Assumption~\eqref{assumption:scaled_concentrated_generalized} in \Cref{def:well:concentrated:generalized}.
Finally, answering \ref{question:robustness}, robustness must carefully be evaluated using the assumption on minimum $\rho$ values and maximum dimensionality $d$.
Both parameters in conjunction give minimum requirements to the required sample size $n$.

\paragraph{Future work.} We conclude with some possible future directions.
\begin{itemize}
    \item An interesting avenue to explore would be to capture skew in the input vector that is not necessarily visible in the standard basis. For example, one could use private PCA to rotate the space into a basis in which coordinates are nearly independent, and then apply \algorithmname.
    \item Though our algorithm chooses optimal parameters within a class of mechanisms, we have not ruled out that an entirely different approach could have better performance. We conjecture that for any choice of $\vectorize{\sigma}$ there exists an input distribution for which our mechanism achieves an optimal trade-off up to logarithmic factors.
    \item Finally, we rely on having access to reasonable estimates of coordinate variances (the diagonal of the covariance matrix). 
    It would be interesting to study this problem in its own right.
    Covariance estimation is a common problem in differential privacy (see e.g. \cite{kamath_privately_2019}),
    but it is likely that estimating the entire covariance matrix is strictly harder than estimating the diagonal.
\end{itemize}

\begin{acks}
Lebeda, Nelson, and Pagh carried out this work at Basic Algorithms Research Copenhagen (BARC), supported by the VILLUM Foundation grant 16582. 
Nelson and Pagh were also supported by Providentia, a Data Science Distinguished Investigator grant from Novo Nordisk Fonden.
\end{acks}

\bibliography{ref}
\bibliographystyle{ACM-Reference-Format}

\appendix
\section{Useful statements from probability theory}

\begin{lemma}[Bernstein's inequality]
    \label{lem:bernstein}
    Let $X_1, \ldots, X_n$ be independent zero-mean random variables.
    Suppose that $|X_i| \leq M$ almost surely for all $i$.
    Then for all $t > 0$,
    \begin{align*}
        \Pr\left[\sum_{i = 1}^n X_i \geq t\right] \leq \exp\left(-\frac{t^2/2}{\sum_{i = 1}^n \mathbf{E}[X_i^2] + Mt/3}\right).
    \end{align*}
\end{lemma}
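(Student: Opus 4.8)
The plan is to prove Bernstein's inequality by the standard exponential-moment (Chernoff) method, deferring the choice of the free parameter to the very end. First I would fix $\lambda > 0$ and apply Markov's inequality to the nonnegative random variable $\exp(\lambda \sum_i X_i)$; by independence the moment generating function factorizes, giving
\[
\Pr\left[\sum_{i=1}^n X_i \geq t\right] \leq e^{-\lambda t}\, \mathbf{E}\left[e^{\lambda \sum_i X_i}\right] = e^{-\lambda t} \prod_{i=1}^n \mathbf{E}\left[e^{\lambda X_i}\right].
\]
This reduces the whole problem to controlling the moment generating function $\mathbf{E}[e^{\lambda X_i}]$ of a single bounded, zero-mean variable.

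For the key step I would expand the exponential as a power series and exploit that $\mathbf{E}[X_i] = 0$ annihilates the linear term, while $|\mathbf{E}[X_i^k]| \leq M^{k-2}\mathbf{E}[X_i^2]$ for $k \geq 2$ (since $|X_i| \leq M$ almost surely). This yields
\[
\mathbf{E}\left[e^{\lambda X_i}\right] \leq 1 + \frac{\mathbf{E}[X_i^2]}{M^2} \sum_{k \geq 2} \frac{(\lambda M)^k}{k!}.
\]
Here I expect the one genuinely delicate point: pinning down the constant $1/3$ that appears in the final denominator. The clean route is to use $k! \geq 2\cdot 3^{k-2}$ for $k \geq 2$, so that the tail sum is dominated by a geometric series: for $\lambda M < 3$,
\[
\sum_{k \geq 2} \frac{(\lambda M)^k}{k!} \leq \frac{(\lambda M)^2}{2}\sum_{k\geq 0}\left(\frac{\lambda M}{3}\right)^k = \frac{(\lambda M)^2}{2(1 - \lambda M/3)}.
\]
Combining this with $1 + x \leq e^x$ gives $\mathbf{E}[e^{\lambda X_i}] \leq \exp\left(\frac{\lambda^2 \mathbf{E}[X_i^2]}{2(1 - \lambda M/3)}\right)$.

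Writing $v = \sum_i \mathbf{E}[X_i^2]$ and taking the product over $i$, the tail bound becomes $\exp\left(-\lambda t + \frac{\lambda^2 v}{2(1 - \lambda M/3)}\right)$. Finally I would optimize by setting $\lambda = t/(v + Mt/3)$, which automatically satisfies the constraint $\lambda M < 3$; substituting and simplifying via the identity $1 - \lambda M/3 = v/(v + Mt/3)$ collapses the exponent to exactly $-\tfrac{t^2/2}{v + Mt/3}$, as claimed. Everything apart from the series estimate with its sharp constant is routine bookkeeping.
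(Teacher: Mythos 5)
Your proof is correct: the paper states Bernstein's inequality as a classical result (Lemma~\ref{lem:bernstein} in the appendix) without giving any proof, and your argument is the standard Chernoff/moment-generating-function derivation, with the two delicate points --- the series bound via $k! \geq 2\cdot 3^{k-2}$ yielding the factor $1/(1-\lambda M/3)$, and the optimizing choice $\lambda = t/(v + Mt/3)$ --- both handled correctly. No gaps.
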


\begin{lemma}[{\cite[Equation (18)]{winkelbauer2012moments}}]
    \label{lem:expected:error:gaussian}
    The $p$th absolute moment of a zero-centered Gaussian distribution for any $p > 0$ is
    \begin{align*}
        \mathbf{E}\left[\left|\mathcal{N}\left(0, \sigma^2\right)\right|^p\right] = \sigma^p \cdot \frac{2^{p/2}\Gamma\left(\tfrac{p + 1}{2}\right)}{\sqrt{\pi}}.
    \end{align*}

\end{lemma}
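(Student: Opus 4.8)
The plan is to evaluate the defining integral of the $p$th absolute moment directly and reduce it to the Gamma function via a standard change of variables. Writing $X \sim \mathcal{N}(0,\sigma^2)$, I would start from
\[
\mathbf{E}[|X|^p] = \int_{-\infty}^\infty |x|^p \, \frac{1}{\sqrt{2\pi}\,\sigma}\, e^{-x^2/(2\sigma^2)}\,dx,
\]
and use the symmetry of the integrand about $0$ to rewrite this as $2\int_0^\infty x^p \, \frac{1}{\sqrt{2\pi}\,\sigma}\, e^{-x^2/(2\sigma^2)}\,dx$, which removes the absolute value and restricts to the positive axis.

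Next I would apply the substitution $u = x^2/(2\sigma^2)$, the natural change of variables that turns a Gaussian-type integral into a Gamma integral. Under this substitution $x = \sigma\sqrt{2u}$ and $dx = (\sigma/\sqrt{2u})\,du$, so that $x^p = \sigma^p 2^{p/2} u^{p/2}$, and the $1/\sqrt{u}$ factor coming from $dx$ combines with $u^{p/2}$ to give $u^{(p-1)/2}$. After pulling out the constants and the factor $\sigma^p$, the remaining integral is a multiple of $\int_0^\infty u^{(p-1)/2} e^{-u}\,du$, which equals $\Gamma\!\left(\tfrac{p+1}{2}\right)$ by definition of the Gamma function. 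The $\sigma^p$ dependence emerges cleanly from the substitution, confirming the claimed scaling.

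The only real work is bookkeeping the numerical constants: the prefactor $2\cdot \tfrac{1}{\sqrt{2\pi}}$, the $2^{p/2}$ produced by $x^p$, and the $1/\sqrt{2}$ from $dx$ must be combined, and one checks that $2\cdot \frac{2^{p/2}}{\sqrt{2\pi}\,\sqrt{2}} = \frac{2^{p/2}}{\sqrt{\pi}}$, using $\sqrt{2\pi}\,\sqrt{2} = 2\sqrt{\pi}$. This yields $\mathbf{E}[|X|^p] = \sigma^p \cdot \frac{2^{p/2}\Gamma((p+1)/2)}{\sqrt{\pi}}$, exactly as stated. There is no conceptual obstacle; this is a standard computation (and indeed the statement is quoted from \cite{winkelbauer2012moments}). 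The only point requiring care is ensuring $p > 0$, so that the integral converges and the Gamma function is evaluated at a positive argument.
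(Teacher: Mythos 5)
Your computation is correct: the symmetry reduction, the substitution $u = x^2/(2\sigma^2)$, the identification of $\int_0^\infty u^{(p-1)/2}e^{-u}\,du = \Gamma\bigl(\tfrac{p+1}{2}\bigr)$, and the constant bookkeeping $2\cdot\frac{2^{p/2}}{\sqrt{2\pi}\sqrt{2}} = \frac{2^{p/2}}{\sqrt{\pi}}$ all check out, and the restriction $p>0$ is exactly what is needed for convergence. The paper gives no proof of this lemma --- it is quoted directly from the cited reference --- so there is nothing to compare against; your derivation is the standard one and fills in the omitted argument correctly.
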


\begin{lemma}[Generalized Chernoff-Hoeffding Bound~\cite{dubhashi:panconesi:2009}]
	\label{lemma:generalized:chernoff}
	Let $X := \sum_{1 \leq i \leq n} X_i$ where $X_i, 1 \leq i \leq n$ are independently distributed in $[a_i, b_i]$ for $a_i, b_i \in \mathbf{R}$.
	Then for all $t > 0$
	$$\Pr[|X-\mathbf{E}[X]| \geq t] \leq 2\exp\left(\frac{-t^2/2}{\sum_i \left(a_i - b_i\right)^2}\right).$$
\end{lemma}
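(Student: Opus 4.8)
The plan is to prove the inequality by the standard Chernoff (exponential moment) method, which in fact yields a slightly stronger bound from which the stated one follows immediately. I would begin with the upper tail. For any $\lambda > 0$, applying Markov's inequality to $e^{\lambda(X - \mathbf{E}[X])}$ gives $\Pr[X - \mathbf{E}[X] \geq t] \leq e^{-\lambda t}\,\mathbf{E}[e^{\lambda(X - \mathbf{E}[X])}]$. Since the $X_i$ are independent, the moment generating function factorizes as $\mathbf{E}[e^{\lambda(X - \mathbf{E}[X])}] = \prod_{i=1}^n \mathbf{E}[e^{\lambda(X_i - \mathbf{E}[X_i])}]$, reducing the problem to bounding the contribution of a single bounded, centered variable.

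The crux is Hoeffding's lemma: for each $i$, the zero-mean variable $Y_i := X_i - \mathbf{E}[X_i]$ lies in an interval of width $b_i - a_i$, and hence $\mathbf{E}[e^{\lambda Y_i}] \leq \exp\!\big(\lambda^2 (b_i - a_i)^2 / 8\big)$. I would prove this by setting $\psi(\lambda) = \log \mathbf{E}[e^{\lambda Y_i}]$, checking that $\psi(0) = \psi'(0) = 0$, and bounding $\psi''(\lambda)$ by the variance of the exponentially tilted distribution of $Y_i$. Because that distribution is supported on an interval of length $b_i - a_i$, its variance is at most $(b_i - a_i)^2 / 4$, so $\psi''(\lambda) \leq (b_i - a_i)^2 / 4$, and a second-order Taylor expansion gives $\psi(\lambda) \leq \lambda^2 (b_i - a_i)^2 / 8$.

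Combining these bounds yields $\Pr[X - \mathbf{E}[X] \geq t] \leq \exp\!\big(-\lambda t + \tfrac{\lambda^2}{8}\sum_i (b_i - a_i)^2\big)$, and optimizing over $\lambda$, namely taking $\lambda = 4t / \sum_i (b_i - a_i)^2$, produces $\exp\!\big(-2t^2 / \sum_i (b_i - a_i)^2\big)$. The identical argument applied to $-X_i$ controls the lower tail with the same bound, and a union bound over the two one-sided events contributes the leading factor $2$. Since $(a_i - b_i)^2 = (b_i - a_i)^2$ and $-2t^2 \leq -t^2/2$, the stated---deliberately looser---inequality follows.

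I expect the single genuine obstacle to be Hoeffding's lemma; the factorization by independence and the optimization over $\lambda$ are mechanical. The delicate point inside Hoeffding's lemma is the variance estimate $\mathrm{Var}(Y) \leq (b_i - a_i)^2 / 4$ for a distribution supported on an interval of that length, the extremal case being the two-point distribution concentrated at the endpoints. Everything else reduces to routine calculus on $\psi$.
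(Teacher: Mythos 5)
Your proof is correct. The paper gives no proof of this lemma at all---it is quoted from the cited reference (Dubhashi and Panconesi)---and your argument is exactly the canonical one: Markov's inequality applied to $e^{\lambda(X-\mathbf{E}[X])}$, factorization of the moment generating function by independence, Hoeffding's lemma proved via $\psi''(\lambda)$ being the variance of the tilted law (at most $(b_i-a_i)^2/4$ on an interval of length $b_i-a_i$), optimization at $\lambda = 4t/\sum_i (b_i-a_i)^2$, and a union bound over the two tails. Your final observation is also the right one: this derivation gives the sharp bound $2\exp\bigl(-2t^2/\sum_i(b_i-a_i)^2\bigr)$, which implies the paper's deliberately looser statement since $(a_i-b_i)^2 = (b_i-a_i)^2$ and $2t^2 \geq t^2/2$.
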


\section{Proof of Theorem~\ref{thm:main-detailed-generalized}}\label{app:generalized:lp}

\begin{theorem}
	For sufficiently large $n = \tilde{\Omega}(\max(\sqrt{d/\rho}, \rho^{-1}))$, \algorithmname with parameters $k=\sqrt{n}$ and $\rho_1 = \rho_2 = \rho_3 = \rho/3$ is $\rho$-zCDP.
	If inputs are independently sampled from a $(\vectorize{\sigma},p)$-well concentrated distribution, the mean estimate has expected $\ell_p$ error
	\[
	\tilde{O}\left(1 + \frac{||\vectorize{\sigma}||_p}{\sqrt{n}} + \frac{\|\vectorize{\sigma}\|_{2p/(p+2)}}{n\sqrt{\rho}}\right),
	\]
	 where $\tilde{O}$ suppresses polylogarithmic dependencies on $1/\beta$, $n$, $d$, and the bound $M$ on the $\ell_\infty$ norm of inputs.
\end{theorem}

\begin{proof}
We proceed in the same three steps as in the proof of Theorem~\ref{thm:main-detailed} in Section~\ref{sec:analysis}.

By Lemma~\ref{lemma:coordinatewise:median}, with probability at least $1-\beta$, all $|{\mu}_i -{\tilde{\mu}}_i| \leq {\sigma}_i$. In this case,
$\|\vectorize{\mu} - \vectorize{\tilde{\mu}}\|_p = O(\|\vectorize{\sigma}\|_p)$.

Let $J = \{ j\in \{1,\dots,n\} \; | \; \|\vectorize{y}^{(j)}\|_2 > C \}$ denote the set of indices of vectors affected by clipping in \algorithmname.
By the triangle inequality and assuming the bound of part 1 holds,
\begin{align*}
	\sum_{j\in J} \left\|(\vectorize{x}^{(j)} - \vectorize{\tilde{\mu}}) \right\|_p
	&\leq \sum_{j\in J} \left\|(\vectorize{x}^{(j)} -\vectorize{\mu}) \right\|_p + |J|\cdot\left\|\vectorize{\mu} - \vectorize{\tilde{\mu}} \right\|_p\\
	&\leq \sum_{j\in J} \left\|(\vectorize{x}^{(j)} -\vectorize{\mu}) \right\|_p + 2 |J|\cdot \|\vectorize{\sigma}\|_p \enspace .
\end{align*}

By assumption~(\ref{assumption:concentrated_generalized}) the probability that $\| \vectorize{x}^{(j)} - \vectorize{\mu} \|_p^p > t \|\vectorize{\sigma}\|_p^p$ is exponentially decreasing in $t$, so setting $t \geq \max\{\ln d, \log(n/\beta))\}$ we have $\| \vectorize{x}^{(j)} - \vectorize{\mu} \|_p = \tilde{O}(\|\vectorize{\sigma}\|_p)$ for all $j=1,\dots,n$ with probability at least $1-\beta$.
Using the triangle inequality again we can now bound
\begin{align*}
\sum_{j\in J} \| \vectorize{x}^{(j)} - \vectorize{\mu} \|_p = \tilde{O}(|J|\cdot\|\vectorize{\sigma}\|_p) \enspace .
\end{align*}

The same line of argument as in Section~\ref{sec:clipping:error} shows that $|J| = \tilde{O}(k + \sqrt{1/\rho})$.
Thus, the clipping error can be bounded by $O((k + \sqrt{1/\rho}) \| \vectorize{\sigma}\|_p)$, and setting $k = n^{-1/2}$ balances the clipping error with the sampling error $\frac{\|\vectorize{\sigma}\|_p}{\sqrt{n}}$.

Lastly, we consider the error due to noise.
First, we find a bound on the clipping threshold $C$.
By the triangle inequality, we may bound
\begin{align*}
\|\vectorize{y}^{(j)}\|_2
& = \left\| (\vectorize{x}^{(j)} - \vectorize{\tilde{\mu}})\, \hat{\Sigma}^{-\tfrac{1}{p+2}} \right\|_2 \\
& \leq \left\| (\vectorize{x}^{(j)} - \vectorize{\mu})\, \hat{\Sigma}^{-\tfrac{1}{p+2}} \right\|_2 + \left\| (\vectorize{\mu} - \vectorize{\tilde{\mu}})\, \hat{\Sigma}^{-\tfrac{1}{p+2}} \right\|_2
\end{align*}

The $i$th coordinate of the first vector $(\vectorize{x}^{(j)} - \vectorize{\mu})\, \hat{\Sigma}^{-1/(p+2)}$ equals $({x}^{(j)}_i - {\mu}_i)/{{\hat{\sigma}}_i}^{2/(p + 2)}$, so assumption (\ref{assumption:scaled_concentrated_generalized}) implies that the length of the first vector is $\tilde{O}(\sqrt{\|\vectorize{\sigma}^{2p/(p + 2)}\|_1})$ with high probability.
By using that the $i$th coordinate of $|\vectorize{\mu} - \vectorize{\tilde{\mu}}| \leq \vectorize{\sigma}$,  $(\vectorize{\mu} - \vectorize{\tilde{\mu}})\,\hat{\Sigma}^{-1/(p+2)}$ has absolute value  $\tilde{O}(\sqrt{\|\vectorize{\sigma}^{2p/(p + 2)}\|_1})$ as well.
The clipping value of $C$ is bounded by the maximum length of a vector $\|\vectorize{y}^{(j)}\|_2$, and thus $C^2 = \tilde{O}(\|\vectorize{\sigma}^{2p/(p + 2)}\|_1)$, with probability at least $1-\beta$.

The scaled noise vector $\vectorize{\eta}\, \hat{\Sigma}^{1/(p+2)}$ has distribution $\mathcal{N}(\vectorize{0},\frac{2C^2}{\rho_3}\hat{\Sigma}^{2/(p+2)})$.
Using ${\hat{\sigma}}_i <  \left({\sigma}_i^{\frac{2p}{p +2 }} + \|\vectorize{\sigma}^{\frac{2p}{p + 2}}\|_1/d\right)^{\frac{p + 2}{2p}}$ and Lemma~\ref{lem:expected:error:gaussian}, we conclude
\begin{align*}
	\mathbf{E}[\|\vectorize{\eta}\, \hat{\Sigma}^{1/(p+2)}\|_p^p]
	& = O\left(\tfrac{2^p C^p}{\rho_3^{p/2}} \sum_{i=1}^d \hat{\Sigma}_{ii}^{p/(p+2)}\right)\\
	& = O\left(\tfrac{2^p C^p}{\rho_3^{p/2}} \sum_{i=1}^d \hat{{\sigma}}_i^{2p/(p+2)}\right)\\
	& = \tilde{O}\left(\frac{2^p\|\vectorize{\sigma}\|_{2p/(p + 2)}^p}{\rho_3^{p/2}}\right) \enspace .
\end{align*}

The result of Theorem~\ref{thm:main-detailed-generalized} is achieved by putting together the different error terms as in the proof of Theorem~\ref{thm:main-detailed}. 
Finally, with probability at most $\beta$ a failure event will happen. In that case, the largest $\ell_p$ error is bounded by the diameter of $[-M, M]^d$, which is $2Md^{1/p}$.
Thus, the contribution to the expected $\ell_p$ error is $2Md^{1/p}\beta$, and can be made $O(1)$ by setting $\beta = \frac{1}{Md^{1/p}}$.
\end{proof}

\section{Choosing The Scaling Value} \label{app:calibration}

Here we discuss the choice of scaling of \algorithmname. %
For parameters $p$ and $\vectorize{\hat{\sigma}} \in \mathbf{R}^d$ we scale each coordinate $i$ by $\vectorize{\sigma}^{-2/(p + 2)}$ on
\cref{alg:recenter-and-scale} of \Cref*{alg:our-algorithm}.
When returning the mean estimate on \cref{alg:plan-mean-estimation} each coordinate $i$ is scaled back by $\vectorize{\sigma}^{2/(p + 2)}$.
The choice of scaling affects how noise is distributed across coordinates.
In this section we consider a simpler setting with independent queries with known sensitivities where we restrict ourselves to adding noise such that there is no clipping error.
We choose the scaling parameters that minimizes the $p$th moment of the noise.

Consider the problem of privately answering $d$ independent real-valued queries with different sensitivities.
We denote the sensitivity of the $i$th query as ${\Delta}_i$ and the vector of all sensitivities as $\vectorize{\Delta}$.
We want to minimize the $p$th moment of the noise.
That is, let $\vectorize{\eta} \in \mathbf{R}^d$ denote the vector where $\vectorize{\eta}_i$ is the noise added to the $i$th query.
We want to minimize $\mathbf{E}[\|\vectorize{\eta}\|_p^p]$.
As discussed in \cref{sec:introduction}, standard approaches either adds a lot of noise for queries with low or high sensitivity as we show here.
We can think of the queries as a $d$-dimensional query which we can release by adding noise from $\mathcal{N}(\vectorize{0},\frac{\bar{\Delta}^2}{2\rho}\mathbf{I})$ where the sensitivity is $\bar{\Delta}=\sqrt{\sum_{i \in [d]} {\Delta}_i^2}=\|\vectorize{\Delta}\|_2$.
This approach adds noise of the same magnitude to each coordinate which might be excessive for queries with very low sensitivity.
The privacy budget is effectively split up between queries weighted by $\vectorize{\Delta}^2$.
Alternatively we could split the privacy budget evenly and answer each query independently by adding noise from $\mathcal{N}(\vectorize{0},d \cdot \frac{\vectorize{\Delta}_i^2}{2\rho})$.
This approach is equivalent to first scaling each coordinate by ${\Delta}_i^{-1}$ and scaling back after adding noise from $\mathcal{N}(\vectorize{0},\frac{d}{2\rho}\mathbf{I})$.
However, this approach adds too much noise to coordinates with high sensitivity.
We want to balance the two approaches and as such \algorithmname uses an in-between scaling value.
We can think of our approach as allocating more privacy budget for queries with high sensitivity, but not so much that it creates large errors for queries with low sensitivity.

Let $\vectorize{x} \in \mathbf{R}^d$ be the answer to the $d$ queries and let $\scalingvec \in \mathbf{R}^d$ be a scaling vector.
Let $\vectorize{y}=\vectorize{x}S^{-1}$ denote the scaled down answers where $S$
is a diagonal matrix with $\scalingvec$ on the diagonal.
We can privately release $\vectorize{y}$ by adding noise from $\mathcal{N}(\vectorize{0}, \frac{\bar{\Delta}^2}{2\rho}\mathbf{I})$ where $\bar{\Delta}^2=\sum_{i \in [d]} {\Delta}_i^2 / s_i^2$.
When scaling back we multiply the noisy answers with $S$.
Our goal is to minimize the $p$th moment.
Specifically, we want to choose $\scalingvec$ to minimize
\[
	\mathbf{E}\left[\| \mathcal{N}(\vectorize{0},\frac{\bar{\Delta}^2}{2\rho}\mathbf{I})S \|_p^p\right] \enspace ,
\]
where $\bar{\Delta}$ and $S$ are defined as above.

\begin{lemma} \label{lem:expected-pth-moment-generic}
	The $p$th moment of the algorithm above is
	\[
		\mathbf{E}\left[\| \mathcal{N}(\vectorize{0},\frac{\bar{\Delta}^2}{2\rho}S^2) \|_p^p\right] = \frac{\Gamma\left(\frac{p+1}{2}\right)}{\rho^{p/2} \sqrt{\pi}} \sum_{i \in [d]} (\bar{\Delta} \cdot \scalingvec_i)^p
	\]
\end{lemma}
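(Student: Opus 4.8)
The plan is to reduce the $p$th moment of the $d$-dimensional scaled Gaussian to a sum of one-dimensional absolute Gaussian moments, and then invoke Lemma~\ref{lem:expected:error:gaussian} coordinate by coordinate. First I would observe that, since $S$ is the diagonal matrix with $\scalingvec$ on the diagonal, the vector $\mathcal{N}(\vectorize{0},\frac{\bar{\Delta}^2}{2\rho}\mathbf{I})S$ has covariance $\frac{\bar{\Delta}^2}{2\rho}S^2$, so its $i$th coordinate is marginally distributed as a univariate Gaussian $\mathcal{N}(0,\frac{\bar{\Delta}^2}{2\rho}\scalingvec_i^2)$. By the definition of the $\ell_p$ norm and linearity of expectation, $\mathbf{E}[\|\cdot\|_p^p] = \sum_{i\in[d]} \mathbf{E}[|\eta_i|^p]$, where $\eta_i$ denotes the $i$th coordinate. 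Thus the problem decouples completely across coordinates.

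Next I would apply Lemma~\ref{lem:expected:error:gaussian} to each coordinate with standard deviation $\frac{\bar{\Delta}\,|\scalingvec_i|}{\sqrt{2\rho}}$, obtaining
\[
\mathbf{E}[|\eta_i|^p] = \left(\frac{\bar{\Delta}\,|\scalingvec_i|}{\sqrt{2\rho}}\right)^{p} \cdot \frac{2^{p/2}\Gamma\left(\frac{p+1}{2}\right)}{\sqrt{\pi}} \enspace .
\]
The one simplification worth flagging is that the factor $2^{p/2}$ from the Gaussian moment formula cancels exactly against the $2^{p/2}$ appearing in $(2\rho)^{p/2}$ in the denominator, leaving $\frac{\Gamma\left(\frac{p+1}{2}\right)}{\rho^{p/2}\sqrt{\pi}}(\bar{\Delta}\,\scalingvec_i)^p$ (using $\scalingvec_i > 0$, so $|\scalingvec_i| = \scalingvec_i$). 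Summing over $i\in[d]$ pulls the common constant out front and yields the claimed expression.

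There is essentially no obstacle here; the statement is a routine calculation. The only points requiring care are that $S$ acts by coordinate-wise multiplication, so that each marginal is genuinely a univariate Gaussian with the stated variance, and that the powers of $2$ are tracked so that they cancel cleanly. Notably, independence of the coordinates is not needed, since the $p$th moment of the $\ell_p$ norm is additive over coordinates by linearity of expectation regardless of any dependence among the $\eta_i$.
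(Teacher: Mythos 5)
Your proof is correct and follows essentially the same route as the paper's: decompose $\mathbf{E}[\|\cdot\|_p^p]$ coordinate-wise by linearity of expectation and apply Lemma~\ref{lem:expected:error:gaussian} to each marginal, with the $2^{p/2}$ factors cancelling. Your added remarks (that independence is not needed and that each marginal has standard deviation $\bar{\Delta}\,\scalingvec_i/\sqrt{2\rho}$) are accurate but not a departure from the paper's argument.
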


\begin{proof}
	It follows from linearity of expectation by summing over each coordinate.
	By Lemma~\ref{lem:expected:error:gaussian} we have
	\begin{align*}
	  \mathbf{E}\left[\vert \mathcal{N}(\vectorize{0},\frac{\bar{\Delta}^2}{2\rho}s_i) \vert^p \right] & = \left(\frac{\bar{\Delta}^2}{2\rho} s_i^2\right)^{p/2} \frac{2^{p/2}\Gamma\left(\frac{p+1}{2}\right)}{\sqrt{\pi}} \\
	  & = \frac{\Gamma\left(\frac{p+1}{2}\right)}{\rho^{p/2} \sqrt{\pi}} (\bar{\Delta} \cdot s_i)^p
	\end{align*}
\end{proof}

Since the fraction outside the sum is not affected by our choice of $\scalingvec$ we just have to find $\scalingvec$ that minimizes $\sum_{i \in [d]} (\bar{\Delta} \cdot s_i)^p$.

\begin{lemma} \label{lem:scale-for-min-error}
	The $p$th moment of the noise is minimized for
	\[
		s_i \propto {\Delta}_i^{2/(p + 2)} .
	\]
\end{lemma}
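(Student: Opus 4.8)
The plan is to reduce the problem to a clean minimization and then apply Hölder's inequality, whose equality condition pins down the optimal scaling exactly. By Lemma~\ref{lem:expected-pth-moment-generic} the objective is $\frac{\Gamma((p+1)/2)}{\rho^{p/2}\sqrt{\pi}}\sum_i (\bar{\Delta}\cdot s_i)^p$, and since the prefactor is independent of $\scalingvec$ it suffices to minimize
\[
F(\scalingvec) = \bar{\Delta}^p \sum_{i\in[d]} s_i^p = \left(\sum_{i\in[d]} {\Delta}_i^2/s_i^2\right)^{p/2}\left(\sum_{i\in[d]} s_i^p\right),
\]
writing $A = \sum_i {\Delta}_i^2/s_i^2$ and $B = \sum_i s_i^p$. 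First I would observe that $F$ is invariant under $\scalingvec \mapsto c\scalingvec$ for any $c>0$ (this scales $A$ by $c^{-2}$ and $B$ by $c^p$, leaving $A^{p/2}B$ unchanged), which is exactly why the optimum can only be determined up to a positive constant, matching the ``$\propto$'' in the statement.

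The key step is to bound $A^{p/2}B$ from below by a quantity that does not depend on $\scalingvec$. I would apply Hölder's inequality to the identity
\[
\sum_{i\in[d]} {\Delta}_i^{2p/(p+2)} = \sum_{i\in[d]} \left({\Delta}_i^2 s_i^{-2}\right)^{p/(p+2)}\left(s_i^p\right)^{2/(p+2)},
\]
using the conjugate exponents $(p+2)/p$ and $(p+2)/2$, whose reciprocals sum to $1$. This yields $\sum_i {\Delta}_i^{2p/(p+2)} \le A^{p/(p+2)} B^{2/(p+2)}$, and raising both sides to the power $(p+2)/2$ gives the global lower bound $A^{p/2}B \ge \|\vectorize{\Delta}\|_{2p/(p+2)}^p$, a constant depending only on the sensitivities.

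To finish, I would invoke the equality condition of Hölder: equality holds precisely when ${\Delta}_i^2 s_i^{-2} \propto s_i^p$, i.e. $s_i^{p+2}\propto {\Delta}_i^2$, which rearranges to $s_i \propto {\Delta}_i^{2/(p+2)}$, as claimed; a direct substitution confirms this choice attains the bound. The main obstacle is establishing \emph{global} rather than merely stationary optimality in the presence of the scale invariance. A Lagrange-multiplier computation produces the candidate $s_k^{p+2} = (B/A){\Delta}_k^2$ very quickly, but turning this stationary point into a genuine global minimum requires an additional convexity or boundary argument (one does have $F\to\infty$ as any $s_i\to 0^+$ or $s_i\to\infty$). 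The Hölder route sidesteps this subtlety, since it simultaneously proves the lower bound and characterizes all minimizers, so I would favor it; the only remaining work is the routine check that the exponents are conjugate and that substituting $s_i\propto {\Delta}_i^{2/(p+2)}$ achieves equality.
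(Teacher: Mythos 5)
Your proof is correct, and it takes a genuinely different route from the paper's. The paper proves this lemma by Lagrange multipliers: it normalizes to $\bar{\Delta}=1$, minimizes $\sum_i s_i^p$ subject to $\sum_i ({\Delta}_i/s_i)^2 = 1$, and reads off $s_i \propto {\Delta}_i^{2/(p+2)}$ from the stationarity condition, asserting (without a separate argument) that the unique stationary point is the global minimum. Your Hölder argument instead writes the scale-invariant objective as $A^{p/2}B$ with $A=\sum_i {\Delta}_i^2/s_i^2$ and $B=\sum_i s_i^p$, and the decomposition $\sum_i {\Delta}_i^{2p/(p+2)} = \sum_i ({\Delta}_i^2 s_i^{-2})^{p/(p+2)}(s_i^p)^{2/(p+2)}$ with conjugate exponents $(p+2)/p$ and $(p+2)/2$ is exactly right: it yields the $\scalingvec$-independent lower bound $A^{p/2}B \geq \|\vectorize{\Delta}\|_{2p/(p+2)}^p$, and the equality condition ${\Delta}_i^2 s_i^{-2} \propto s_i^p$ recovers the claimed scaling. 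What your approach buys is twofold: it closes the global-optimality gap that the Lagrange computation leaves open (a single stationary point plus scale invariance does not by itself certify a global minimum without a coercivity or convexity argument, as you note), and it characterizes \emph{all} minimizers rather than just exhibiting one. As a bonus, your lower bound $\|\vectorize{\Delta}\|_{2p/(p+2)}^p$ is precisely the value that the paper derives separately by substitution in Lemma~\ref{lem:balanced-pth-moment}, so the Hölder route proves both lemmas in one stroke. The paper's approach is more elementary in the sense of requiring only calculus, and it generalizes mechanically to other constrained variants, but as written it is the less complete of the two arguments.
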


\begin{proof}
	First notice that multiplying all entries in $\scalingvec$ by the same scalar does not change the result since the magnitude of noise added to $y$ would be scaled by the inverse of the scalar.
	As such, we can restrict our search to $\scalingvec$ such that ${\bar{\Delta}^2=\sum_{i \in [d]} ({\Delta}_i /s_i)^2 = 1}$.
	We also restrict our search to $\scalingvec$ with no negative values since negative values would simply mirror the input and has no impact on the sensitivity.
	As such, any $\scalingvec$ that satisfies those constraints and minimizes $\sum_{i \in [d]}s_i^p$ also minimizes the error. 
	
	We use the method of Lagrange multipliers to find $\scalingvec$. The method is used to find local maxima or minima of a function subject to equality constraints. In our case, it turns our that there is only one minimum which implies that it is the global minimum. We want to minimize $\sum_{i \in [d]} s_i^p$ subject to the restriction $\sum_{i \in [d]} ({\Delta}_i / s_i)^2 - 1 = 0$. The minimum is at the stationary point of the Lagrangian function
	\[
     \mathcal{L}(\scalingvec,\lambda) = \sum_{i \in [d]}s_i^p + \lambda\left(\sum_{i \in [d]} ({\Delta}_i/s_i)^2 - 1\right) ,
    \]
	where $\lambda \in \mathbf{R}$ is the Lagrange multiplier.

	We start by finding the partial derivative of the Lagrangian function with respect to $s_i$.

    \[
     \frac{\partial}{\partial s_i} \mathcal{L}(\scalingvec, \lambda) = \frac{\partial}{\partial s_i} \scalingvec_i^p + \lambda (\Delta_i/\scalingvec_i)^2 = p s_i^{p - 1} - 2\lambda {\Delta}_i^2/s_i^3 \enspace .
    \]

    We then find the root of this function with respect to $s_i$, that is,

	\begin{align*}
		p s_i^{p - 1} - 2\lambda {\Delta}_i^2/s_i^3 & = 0 \\
	s_i^{p + 2} & = 2\lambda {\Delta}_i^2 / p \\
		s_i & = {\Delta}_i^{2/(p+2)} \cdot \gamma \enspace ,
	\end{align*}
	where $\gamma=(2\lambda/p)^{1/(p+2)}$ is a scalar. This finishes the proof.
\end{proof}

\begin{lemma} \label{lem:balanced-pth-moment}
	Let $\scalingvec \in \mathbf{R}^d$ be defined as in Lemma~\ref{lem:scale-for-min-error}.
	Then the $p$th moment of the mechanism described in this section is
	\[
		\mathbf{E}\left[\| \mathcal{N}(\vectorize{0},\frac{\bar{\Delta}^2}{2\rho}\mathbf{I})S \|_p^p\right] = \frac{\Gamma\left(\frac{p+1}{2}\right)}{\rho^{p/2} \sqrt{\pi}} \|\vectorize{\Delta}\|_{2p/(p+2)}^p
	\]
\end{lemma}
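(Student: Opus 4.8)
The plan is to reduce the claim directly to the closed form already established in Lemma~\ref{lem:expected-pth-moment-generic}, so that the only remaining work is to evaluate the sum $\sum_{i\in[d]}(\bar{\Delta}\cdot \scalingvec_i)^p$ for the optimal scaling. First I would substitute $\scalingvec_i = \gamma\,{\Delta}_i^{2/(p+2)}$ from Lemma~\ref{lem:scale-for-min-error}, where $\gamma$ is the free normalization scalar, and note that by the scale-invariance argument used in that lemma the final value cannot depend on $\gamma$; this serves as a useful sanity check on the computation to follow.

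Next I would compute the effective sensitivity $\bar{\Delta}^2 = \sum_{i\in[d]} {\Delta}_i^2/\scalingvec_i^2$ in closed form. Substituting the scaling gives $\bar{\Delta}^2 = \gamma^{-2}\sum_{i\in[d]} {\Delta}_i^{2-4/(p+2)}$, and the key simplification is that the exponent collapses to $2-4/(p+2) = 2p/(p+2)$, so that $\bar{\Delta}^2 = \gamma^{-2}\,\|\vectorize{\Delta}\|_{2p/(p+2)}^{2p/(p+2)}$. Taking the $p$th power yields $\bar{\Delta}^p = \gamma^{-p}\,\|\vectorize{\Delta}\|_{2p/(p+2)}^{p^2/(p+2)}$.

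Then I would assemble $\sum_{i\in[d]}(\bar{\Delta}\,\scalingvec_i)^p = \bar{\Delta}^p\sum_{i\in[d]} \scalingvec_i^p$, using $\sum_{i\in[d]} \scalingvec_i^p = \gamma^p\,\|\vectorize{\Delta}\|_{2p/(p+2)}^{2p/(p+2)}$. Multiplying the two factors, the powers of $\gamma$ cancel exactly (confirming the sanity check above), and the two norm exponents add to $\tfrac{p^2+2p}{p+2} = p$. This gives $\sum_{i\in[d]}(\bar{\Delta}\,\scalingvec_i)^p = \|\vectorize{\Delta}\|_{2p/(p+2)}^{p}$; substituting this into the formula of Lemma~\ref{lem:expected-pth-moment-generic} produces the claimed identity.

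There is no genuine obstacle here: the statement follows from Lemma~\ref{lem:expected-pth-moment-generic} by a single substitution together with the optimal scaling of Lemma~\ref{lem:scale-for-min-error}. The only thing to watch is the exponent bookkeeping — both the simplification $2-4/(p+2)=2p/(p+2)$ and the final cancellation $p^2/(p+2)+2p/(p+2)=p$ — and verifying that the normalization scalar $\gamma$ drops out, which it must by the scale-invariance of the objective noted in the proof of Lemma~\ref{lem:scale-for-min-error}.
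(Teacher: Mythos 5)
Your proof is correct and takes essentially the same route as the paper: both reduce the claim to Lemma~\ref{lem:expected-pth-moment-generic} and evaluate $\sum_{i\in[d]}(\bar{\Delta}\,\scalingvec_i)^p$ under the optimal scaling $\scalingvec_i \propto \Delta_i^{2/(p+2)}$. The only cosmetic difference is that the paper fixes the normalization $\bar{\Delta}=1$ and solves for $\gamma$ explicitly, whereas you keep $\gamma$ free and verify that it cancels, which is a slightly cleaner way to organize the same exponent bookkeeping.
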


\begin{proof}
	From Lemma~\ref{lem:expected-pth-moment-generic} we know that the above equality holds if $\sum_{i \in [d]} (\bar{\Delta} / s_i)^p = \|\vectorize{\Delta}\|_{2p/(p+2)}^p$.
	Continuing the calculations from the proof of Lemma~\ref{lem:scale-for-min-error} we see that for $\bar{\Delta} = 1$ we have

	\begin{align*}
		\sum_{i \in [d]} ({\Delta}_i / s_i)^2 & = \bar{\Delta} \\
		\frac{1}{\gamma^2} \sum_{i \in [d]} {\Delta}_i^{2p/(p+2)} & = 1 \\
		\gamma & = \sqrt{\sum_{i \in [d]} {\Delta}_i^{2p/(p+2)}}
	\end{align*}

	Finally, we find that the $p$th moment is proportional to

	\begin{align*}
		\sum_{i \in [d]} s_i^{p} & = \sum_{i \in [d]} \left({\Delta}_i^{2/(p+2)} \sqrt{\sum_{i \in [d]} {\Delta}_i^{2p/(p+2)}}\right)^p \\
		& = \left(\sum_{i \in [d]} {\Delta}_i^{2p/(p+2)}\right)^{p/2} \sum_{i \in [d]} {\Delta}_i^{2p/(p+2)} \\
		& = \left(\sum_{i \in [d]} {\Delta}_i^{2p/(p+2)}\right)^{1 + p/2} = \|\vectorize{\Delta}\|_{2p/(p+2)}^p \enspace .
	\end{align*}
\end{proof}

\begin{figure}[t]
	\includegraphics[width=0.8\linewidth]{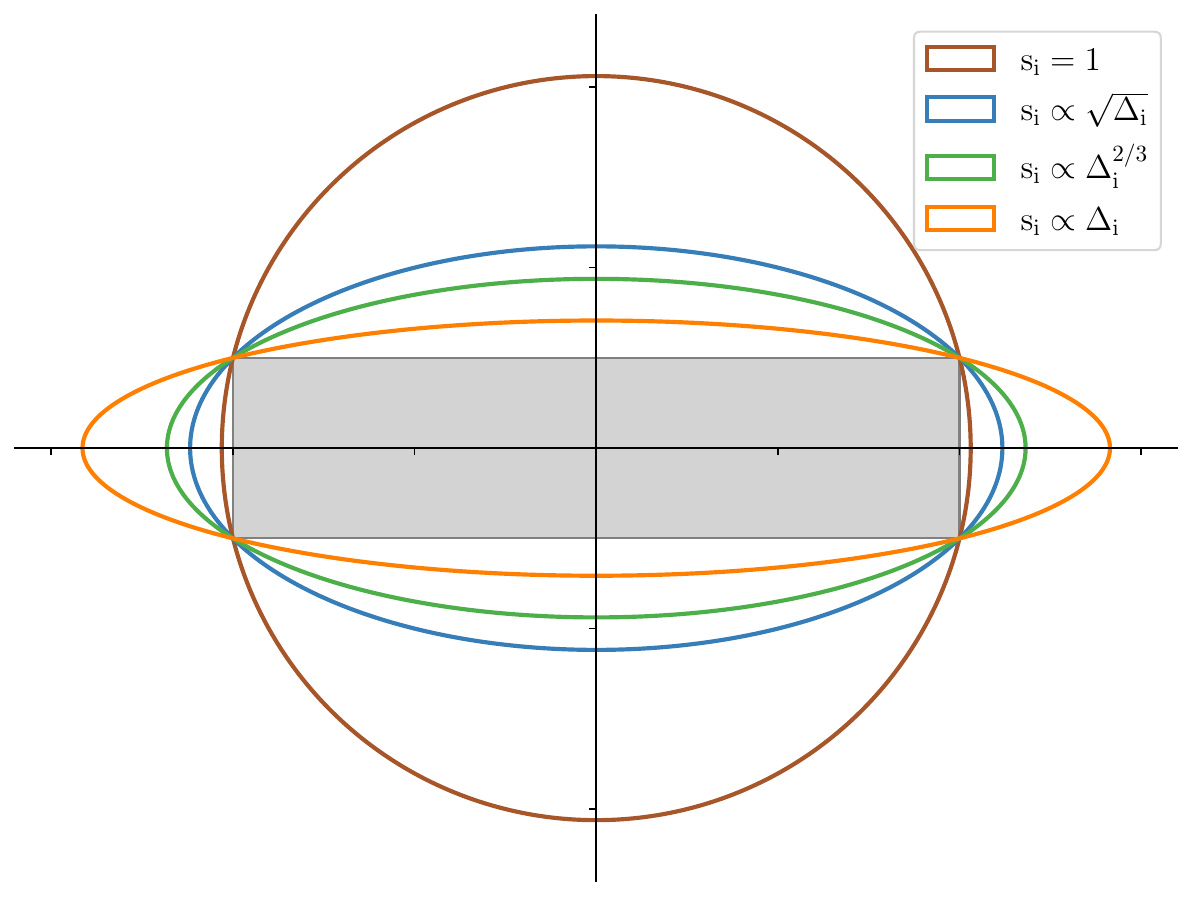}
	\captionsetup{justification=centering}
	\caption{Shape of noise for different scaling parameters.}
	\label{fig:scaling-box}
\end{figure}

Figure~\ref{fig:scaling-box} shows an example of this scaling in two dimensions, where the sensitivity of one query is 4 times as high as the other.
The green and blue lines show the shape of noise when optimizing for the first and second moment, respectively.
In both cases we use most of the privacy budget for the horizontal axis which allows us to add less noise than the orange line where the budget is split evenly.
But we are still able to add less noise to the vertical axis as opposed to the brown line without scaling.

We use the scaling as described above for \algorithmname.
However, in our setting we do not use sensitivities of queries.
Instead we scale each coordinate based on the estimate of the standard deviation.
The intuition is the same: we spend more privacy budget on coordinates with high standard deviation but we still add less noise to coordinates with low standard deviation.

\section{Algorithms for Variance Estimation}\label{app:std-evaluation}
While \algorithmname (Algorithm~\ref{alg:our-algorithm}) assumes that estimates on the standard deviations are known, such estimates have to be computed in a differentially private manner.
Two such ways were described in Section~\ref{sec:experiments} and we will provide more details and empirical results in this section.

We remark that the standard attempt to estimate the variance from a mean estimate $\tilde{\mu}$ is
\begin{align*}
	\left(1/n \sum_{i = 1}^n {x_i^2}\right) - \tilde{\mu}^2
\end{align*}

If $x_i \in [-M, M]$, the sensitivity of this function is $M^2/n$, which, depending on the application, means that too much noise must be added.

\subsection{A Generic Variance Estimation Algorithm}

\begin{algorithm}[thb]
	\caption{\textsc{VarianceEstimate}}
	\label{alg:variance:estimation}
 \begin{algorithmic}[1]
	\State \textbf{Input:} Samples $x^{(1)},\dots,x^{(n)}\in \mathbf{R}$ from $\mathcal{D}$

	\State \textbf{Parameters:} $M, \rho, k$

	\State Split $x^{(1)}, \ldots, x^{(n)}$ into $n' = \lfloor \tfrac{n}{2k}\rfloor$ groups $G_1, \ldots, G_{n'}$.

	\State For each $i \in \{1, \ldots, n'\}$: For $G_i = (\hat{x}_i^{(1)}, \ldots, \hat{x}_i^{(2k)})$, let $\hat{y}^{(i)} = \sum_{j = 1}^k \left(\hat{x}_i^{(2j)} - \hat{x}_i^{(2j + 1)}\right)^2 / 2$.

	\State \textbf{return} $\tilde{\sigma}^2 \leftarrow \frac{\privatequantile_{\rho}^{M^2}(\hat{y}^{(1)},\dots, \hat{y}^{(n')}, 1/2)}{k}.$

	\label{alg:results}
 \end{algorithmic}
 \end{algorithm}

Given a distribution $\mathcal{D}$ with mean $\mu$ and variance $\sigma^2$, we showed in Section~\ref{sec:experiments} that for $X, Y \sim \mathcal{D}$, $\mathbf{E}\left[(X-Y)^2/2\right] = \sigma^2$.
Algorithm~\ref{alg:variance:estimation} is a generalization of the approach used for Gaussian in Section~\ref{sec:experiments}.
For Gaussian data, we made use of the fact that $(X-Y)^2/2$ is $\chi_2$ distributed and it is well-known how to translate an approximate median to an approximate mean.

\begin{lemma}
	Let $\beta > 0, \rho > 0$, and $n = \tilde{\Omega}(\rho^{-1/2})$.
	Let $\mathcal{D}$ be a distribution over $\mathbf{R}$ with mean $\mu$ and variance $\sigma^2 \geq 1$.
	For a constant $\kappa$, assume that $\mathbf{E}[(X-Y)^4] \leq \kappa \sigma^4$.
	With probability at least $1 - \beta$, Algorithm~\ref{alg:variance:estimation} using $k = 16\kappa$ returns an estimate $\hat{\sigma}^2$ such that $\sigma^2/2 \leq \hat{\sigma}^2 \leq 3\sigma^2/2$.
\end{lemma}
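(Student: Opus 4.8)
The plan is to show that the private median of the group statistics $\hat{y}^{(1)},\dots,\hat{y}^{(n')}$ concentrates \emph{multiplicatively} around $k\sigma^2$, so that dividing by $k$ recovers $\sigma^2$ up to the stated factor. First I would pin down the first two moments of a single group statistic. Since $X,Y\sim\mathcal{D}$ are independent and identically distributed, $\mathbf{E}[(X-Y)^2/2]=\mathbf{E}[X^2]-\mu^2=\sigma^2$, so by linearity $\mathbf{E}[\hat{y}^{(i)}]=k\sigma^2$. For the variance, each summand $Z=(X-Y)^2/2$ satisfies $\mathbf{E}[Z^2]=\mathbf{E}[(X-Y)^4]/4\le\kappa\sigma^4/4$ by the fourth-moment assumption, hence $\mathrm{Var}(Z)\le\kappa\sigma^4/4$; since a group consists of $k$ disjoint (hence independent) pairs, $\mathrm{Var}(\hat{y}^{(i)})\le k\kappa\sigma^4/4$.

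Next I would convert this into a constant per-group success probability by Chebyshev's inequality, which is exactly where the choice $k=16\kappa$ enters:
\[
\Pr\!\left[\,\bigl|\hat{y}^{(i)}-k\sigma^2\bigr|\ge\tfrac{1}{2}k\sigma^2\,\right]\le\frac{k\kappa\sigma^4/4}{(k\sigma^2/2)^2}=\frac{\kappa}{k}=\frac{1}{16}.
\]
Thus each $\hat{y}^{(i)}$ lands in the interval $I=[\tfrac12 k\sigma^2,\tfrac32 k\sigma^2]$ except with probability at most $1/16$. Because at least a $15/16$ fraction of the mass of $\hat{y}^{(i)}$ lies in $I$, the population median must lie in $I$ as well (otherwise more than half the mass would sit on one side of $I$).

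The remaining step is to argue that the private quantile returns a value inside $I$. By a Chernoff bound (cf.\ \Cref{lemma:generalized:chernoff}) applied to the indicators that $\hat{y}^{(i)}\in I$, with probability at least $1-\beta$ at most $n'/8$ of the $\hat{y}^{(i)}$ fall below $\tfrac12 k\sigma^2$ and at most $n'/8$ fall above $\tfrac32 k\sigma^2$; this needs only $n'=\Omega(\log(1/\beta))$, which holds since $k=16\kappa$ is constant and $n'=\Theta(n)=\tilde\Omega(\rho^{-1/2})$. Consequently there is a margin of at least $3n'/8$ in rank on either side of the median that is free of points outside $I$. \Cref{lemma:approximate-quantile} guarantees that $\privatequantile_\rho^M$ returns a value whose rank error is $\tilde{O}(\rho^{-1/2})$ and whose additive distance to a point of bounded rank error is $M2^{-T}$, negligible for the polylogarithmic choice of $T$. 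For $n=\tilde\Omega(\rho^{-1/2})$ the rank error is smaller than $3n'/8$, so the returned value stays inside $I$, and dividing by $k$ yields $\tilde\sigma^2\in[\sigma^2/2,3\sigma^2/2]$.

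The main obstacle is making the two guarantees of the private quantile interlock with a \emph{multiplicative} target. The fourth-moment assumption alone yields only constant per-group accuracy through Chebyshev, and the role of averaging $k=16\kappa$ pairs per group is precisely to amplify this into a wide, point-free dead zone around the median whose width dominates both the rank error $\tilde{O}(\rho^{-1/2})$ and the quantization error $M2^{-T}$. Some care is needed to reserve a little slack in $I$ for the additive error, since when $\sigma^2=1$ the interval $I$ has width only $\Theta(k)$; shrinking the target deviation slightly in the Chebyshev step (at the cost of a marginally worse constant) handles this cleanly.
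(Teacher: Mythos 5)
Your proposal is correct and follows essentially the same route as the paper's proof: unbiasedness of the group statistic, Chebyshev with $k=16\kappa$ to get a constant per-group failure probability, a Chernoff bound to ensure a constant-fraction rank margin around the median, and the rank-error guarantee of \privatequantile under $n=\tilde\Omega(\rho^{-1/2})$ to land inside the interval. Your explicit handling of the additive discretization error $M2^{-T}$ is slightly more careful than the paper, which leaves that point implicit.
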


\begin{proof}
	Fix a group $G_i$, $1 \leq i \leq n'$.
	Since $\mathbf{E}[(X-Y)^2/2] = \sigma^2$, we know that $\mathbf{E}\left(\hat{y}^{(i)}\right) = k \sigma^2$.
	By Chebychev's inequality,
	\begin{align*}
		\Pr(|\hat{y}^{(i)} - \mathbf{E}[\hat{y}^{(i)}]| > k\sigma^2/2) \leq \frac{\text{Var}(\hat{y}^{(i)})}{(k \sigma^2/2)^2} \leq \frac{k \cdot \mathbf{E}[(X-Y)^4]}{(k \sigma^2/2)^2} \leq \frac{1}{4},
	\end{align*}
	using our assumption on $\mathbf{E}[(X - Y)^4]$ and our choice of $k$.

	As in the proof of Lemma~\ref{lemma:coordinatewise:median},  with probability at least $1 - \exp(-\Omega(n'))$, there are more than $2/3n'$ groups $i$ for which
	\begin{align*}
		\frac{|\hat{y}^{(i)} - \mathbf{E}[\hat{y}^{(i)}]|}{k} \leq \sigma^2/2.
	\end{align*}

	As long as the private quantile selection returns an element $\hat{\sigma}^2$ with rank error at most $n'/6$, $\sigma^2/2 \leq \hat{\sigma}^2 \leq 3\sigma^2/2$. 
	By Lemma~\ref{lemma:approximate-quantile} assuming $n > K \sqrt{\log(M^2)\log(\log(M^2)/\beta)/(2\rho)} = \tilde\Omega(\rho^{-1/2})$, this is true with probability at least $1 - \beta$.
\end{proof}

In contrast to the naïve estimator mentioned above, this estimator has only a logarithmic dependency on the input universe.
In contrast to it, it does not improve from increased sample size above a \emph{minimum sample size} in relation to the parameter $k$ that is necessary to guarantee that the rank error is at most $n/(6 \cdot 2k)$.

By running Algorithm~\ref{alg:variance:estimation} on each coordinate independently with target probability $1 - \beta/d$, and using a union bound, we may summarize:

\begin{corollary}
	\label{cor:variance:estimate}
	Let $\beta > 0, \rho > 0$, and $n = \tilde{\Omega}(\sqrt{d/\rho})$.
	Let $\mathcal{D}$ be a distribution over $\mathbf{R}^d$ with mean $\mu_i$ and variance $\sigma_i^2 \geq 1$ on each coordinate $i \in \{1, \ldots, d\}$.
	For a constant $\kappa$, assume that $\mathbf{E}[(X-Y)^4] \leq \kappa \sigma^4$.
	For each $i$, with probability at least $1 - \beta$, Algorithm~\ref{alg:variance:estimation} on each coordinate using $k = 16\kappa$ returns an estimate $(\hat{\sigma}_1^2,\ldots, \hat{\sigma}_d^2)$  such that $\sigma_i^2/2 \leq \hat{\sigma_i}^2 \leq 3\sigma_i^2/2$.
\end{corollary}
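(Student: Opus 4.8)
The plan is to reduce the $d$-coordinate statement to the single-coordinate Lemma stated just above by running Algorithm~\ref{alg:variance:estimation} independently on each coordinate and combining the per-coordinate guarantees with a union bound. First I would fix the per-coordinate privacy budget to $\rho' = \rho/d$, so that by composition (Lemma~\ref{lem:composition-zcdf}) the whole procedure satisfies $\rho$-zCDP. The utility hypothesis of the preceding Lemma is then invoked with this $\rho'$: its sample-size requirement $n = \tilde{\Omega}((\rho')^{-1/2})$ becomes $n = \tilde{\Omega}(\sqrt{d/\rho})$, matching the assumption of the corollary.

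Next, I would apply the single-coordinate Lemma to each coordinate $i$ with target failure probability $\beta/d$ in place of $\beta$, and with the coordinate-wise fourth-moment bound $\mathbf{E}[(X_i - Y_i)^4] \leq \kappa \sigma_i^4$. This yields, for each fixed $i$, that with probability at least $1 - \beta/d$ the estimate returned on that coordinate satisfies $\sigma_i^2/2 \leq \hat{\sigma}_i^2 \leq 3\sigma_i^2/2$. Since the coordinates are processed independently, a union bound over the $d$ coordinates shows that all of these events hold simultaneously with probability at least $1 - d \cdot (\beta/d) = 1 - \beta$, which is precisely the claim. The assumption $\sigma_i^2 \geq 1$ on every coordinate, inherited from the global scaling assumption of Section~\ref{sec:analysis}, is exactly what the Lemma needs.

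There is essentially no hard step here; the argument is a standard budget-splitting-plus-union-bound reduction, and the only point requiring a little care is bookkeeping inside the $\tilde{\Omega}$. Concretely, I would verify that replacing the per-coordinate failure probability $\beta$ by $\beta/d$ only changes the Lemma's sample bound $n > K\sqrt{1/\rho'}\log(1/\beta')\log(M^2)$ through a $\log(d/\beta)$ factor, and that splitting $\rho$ into $d$ equal pieces contributes the $\sqrt{d}$ in $\sqrt{1/\rho'} = \sqrt{d/\rho}$; both the extra $\log d$ and any constant overhead are absorbed by the $\tilde{\Omega}$, so the stated requirement $n = \tilde{\Omega}(\sqrt{d/\rho})$ indeed suffices.
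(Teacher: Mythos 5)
Your proposal is correct and matches the paper's own (one-line) justification exactly: the paper also obtains the corollary by running Algorithm~\ref{alg:variance:estimation} independently on each coordinate with target failure probability $\beta/d$ and taking a union bound, with the $\sqrt{d/\rho}$ sample requirement arising from the per-coordinate budget $\rho/d$. Your write-up simply makes explicit the budget-splitting and $\tilde{\Omega}$ bookkeeping that the paper leaves implicit.
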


The result from the corollary can be used to find an estimate that \emph{almost} satisfies the conditions in~\eqref{assumption:variance-estimates}. Given $\mathcal{D}$, use the estimator to compute individual estimates $\hat{\sigma_i}^2$. Next, set $\bar{\sigma_i} \gets 2 \hat{\sigma_i} + \|\vectorize{\hat{\sigma}}\|$. With probability at least $1 - \beta$, $\sigma_i \leq \bar{\sigma}_i \leq 2 (\sigma_i + \|\vectorize{\hat{\sigma}}\|_1 )$.
Using such an estimate in the analysis carried out in Section~\ref{sec:noise:error}, the expected error due to noise is increased by a factor $2$.

While hidden in the $\tilde{\Omega}(.)$ notation, only having $n' = n/(2k)$ samples to choose a private quantile might be incompatible with the rank error of the input domain and the dimensionality.
In this case, we can use more groups to ``boost'' $n'$, but we need to adjust $\rho$ because each element is potentially present multiple times.
To cover variances that are smaller than $1$, let $\sigma^2_\text{min}$ be a minimum bound on the variance.
Then, use \privatequantile with $T = \Theta(\log (M/\sigma^2_\text{min}))$ to quantize the input space in steps of $\sigma^2_\text{min}$, which gives a logarithmic depends on $1/\sigma^2_\text{min}$.

\subsection{Variance estimation for Gaussian Data}

In the case that we know that the data is distributed as $\mathcal{N}(\mu, \sigma^2)$, we can tune the variance estimation more towards the distribution as follows. Given an estimate $\tilde{\mu}$ on $\mu$, we can estimate the variance as follows:

$$\tilde{\sigma} \leftarrow \privatequantile_{\rho}^M(x^{(1)},\dots, x^{(n)}, .841) - \tilde{\mu}$$

It is a well-known property of the Gaussian distribution that the .841 quantile is approximately the value $\mu + \sigma$.
However, when estimating the quantile privately we have to adjust for the rank error of the quantile selection, so aiming for this exact quantile may be unwise.

We run the following experiment: we sample $n = 10000$ from $\mathcal{N}(10, \sigma^2)$ with $\sigma^2 \in \{0.001, 1\}$.
For $\rho \in \{10^{-3}, 10^{-2}\}$ we compare (i) three different methods that use different quantiles of
the input data ($.75, .841,$ and  $.9$) to (ii) two different instantiations of \Cref{alg:variance:estimation} for $k = 1$ and $k=4$.
Since we know that $(X-Y)^2$ is $\chi_2$ distributed, we use the mean to median transformation
and divide the approximate median by $(1 - (2/(9k)))^3$.
Each parameter setting is run 100 times and we report on the average relative error $\tfrac{|\hat{\sigma}^2 - \sigma^2|}{\sigma^2}$.
\Cref{table:variance:estimate} reports on empirical results for the variance estimation.
We summarize that \Cref{alg:variance:estimation} is more accurate than direct estimation for both values of $k$, and guarantees very small relative error even for small $\sigma^2$.

\begin{table}[t]
	\begin{tabularx}{0.8\columnwidth}{rrlr}
	\toprule
   $\rho$ & $\sigma^2$ & \textbf{Method} & \textbf{Relative error} \\
   \midrule
    \multirow{10}{*}{0.001} & 0.001 & direct-075 & 0.332 \\
    & 0.001 & direct-0841 &  0.033	\\
    & 0.001 & direct-09 & 0.279 \\
    & 0.001 & general (k=1) & 0.027 \\
    & 0.001 & general (k=4) & \textbf{0.017} \\
    \cmidrule{2-4}
    & 1 & direct-075 & 0.308 \\
    & 1 & direct-0841 & 0.042 \\
    & 1 & direct-09 & 0.314 \\
    & 1 & general (k=1) & 0.025 \\
    & 1 & general (k=4) & \textbf{0.012} \\
   \midrule
   \multirow{10}{*}{0.01} & 0.001 & direct-075 & 0.313 \\
    & 0.001 & direct-0841 & 0.024 \\
    & 0.001 & direct-09 & 0.305 \\
    & 0.001 & general (k=1) & 0.011 \\
    & 0.001 & general (k=4) & \textbf{0.007} \\
    \cmidrule{2-4}
    & 1 & direct-075 & 0.322 \\
    & 1 & direct-0841 & 0.011 \\
    & 1 & direct-09 & 0.265 \\
    & 1 & general (k=1) & 0.020 \\
    & 1 & general (k=4) & \textbf{0.006} \\
   \bottomrule
  \end{tabularx}
   \caption{Comparison of variance estimation algorithms.
   Variants named direct directly use a quantile of the input
   to estimate the standard deviation. Variants named general use
   \Cref{alg:variance:estimation} with $k=1$ and $k=4$, and
   use the median to mean conversion for $\chi_2$ distributed random variables.}
   \label{table:variance:estimate}
\end{table}

\section{Experimental evaluation on correlated data}
\label{app:correlations}

\begin{figure*}[tb]
	\centering
	\includegraphics[width=.8\linewidth]{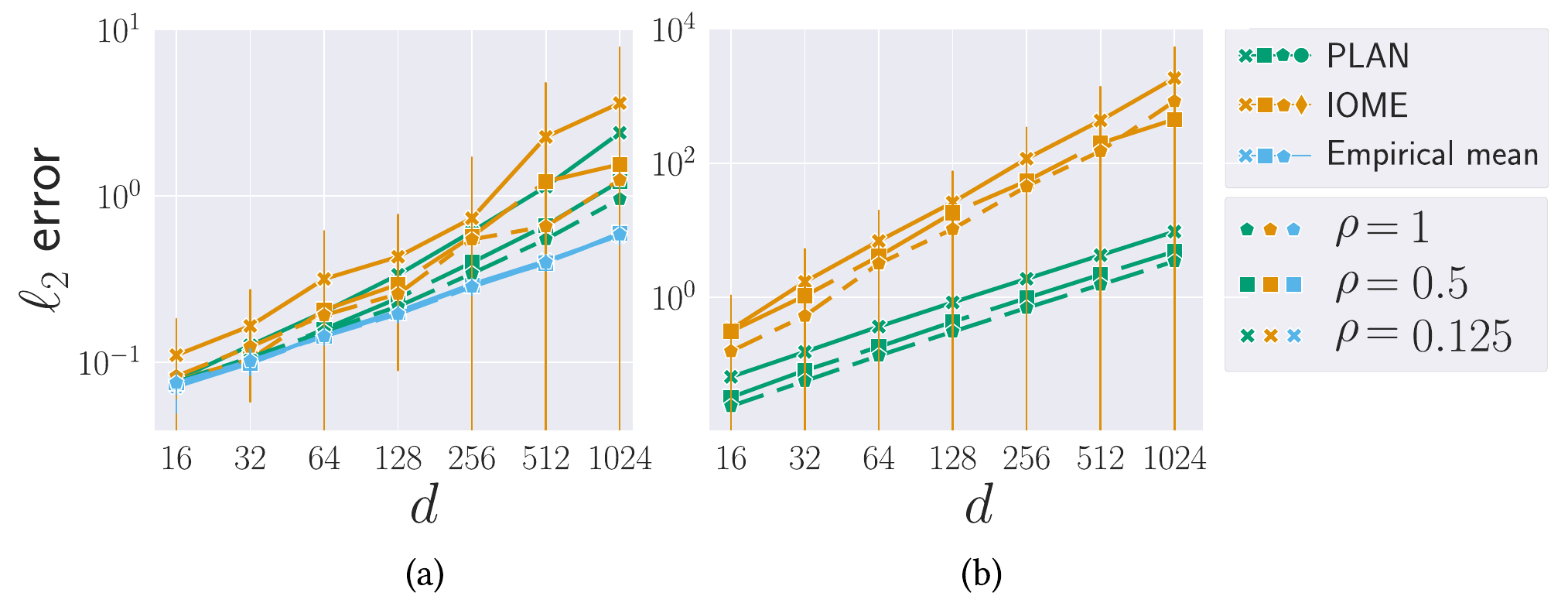}
	\caption{$\ell_2$ error for correlated, synthetic data: (a) \casethem' (random correlation, no skew) and (b) \caseshine' (correlation coefficient .5, skewed data). %
	}
	\label{fig:gaussian-correlations-experiments}
\end{figure*}

The experiments for Gaussian data in~Section~\ref{sec:experiments} focused on diagonal matrices, i.e., uncorrelated vectors.
To validate the trends of our results for correlated data, we design the following additional synthetic test cases:

\begin{itemize}
	\item \casethem': Given $ d \geq 1$, let
     $\Lambda = (\Lambda_{i, j})_{1 \leq i, j \leq d}$ with $$\Lambda_{i, j} \sim \textbf{Uniform}(0, 1/\sqrt{d})$$ 
	for $1 \leq i < j \leq d$.
	Let $\Sigma = \Lambda \cdot \Lambda^T + I_d$, where $I_d$ is the $d \times d$ identity matrix.
	$\Sigma_{i, j}$ is uniformly distributed in $[0, 1)$ for $i \neq j$, and uniform in $[1, 2)$ on the diagonal. 
	Vectors are drawn from $\mathcal{N}(0^d, \Sigma)$, and the setup is identical to \casethem in Table~\ref{tab:experiment-settings}.
	\item \caseshine':  
	Let $\vectorize{\sigma}=(\sigma_1, \ldots, \sigma_d)$ be defined as for the case \caseshine in Table~\ref{tab:experiment-settings}.
	We set $\Sigma = (\Sigma_{i, j})_{1 \leq i, j \leq d}$ with $\Sigma_{i, j} = 0.5 \sigma_i \sigma_j$ for $i \neq j$, and $\Sigma_{i, i} = \sigma_i^2$,
	so we have a correlation coefficient of .5 between the $i$-th and the $j$-th coordinate.
	The remaining parameters are chosen as for \caseshine in Table~\ref{tab:experiment-settings}.	
\end{itemize}

Figure~\ref{fig:gaussian-correlations-experiments} shows the results of the experiments.%
\footnote{We did not evaluate an analogous setting to \caseplausible in Section~\ref{sec:experiments}, because (i) the case $\alpha = 0$ and $\alpha = 2$ is covered by our two experiments, and (ii) the way of generating different $\Sigma$ in the two cases does not allow for a smooth transition from one setting to the other. For example, \casethem' cannot use $\Sigma$ from \caseshine' since $\Sigma$ would not have full rank. }
For non-skewed data with ``random'' correlations (\casethem'), we observe the same trends as for uncorrelated data in Figure~\ref{fig:gaussian-experiments}(a); 
\algorithmname and \instanceoptimalshort provide similar average $\ell_2$ error for a given $d$, with \instanceoptimal having a slightly larger error than in the uncorrelated setting, moving the competitors slightly further away from each than in the uncorrelated setting.

In the setting of skewed data with a constant correlation coefficient, see Figure~\ref{fig:gaussian-correlations-experiments}(b), \algorithmname behaves very similarly to the uncorrelated case, depicted in Figure~\ref{fig:gaussian-experiments}(c). 
In contrast, \instanceoptimalshort has a much larger average $\ell_2$ error compared to its results on uncorrelated data. 
Closer inspection of the results showed that this is likely due to the re-centering step for the individual coordinates. 
Ignoring these outliers for $d = 1024$, the median error of \instanceoptimalshort (subtracting the empirical mean) is $22.61, 45.19, 66.5$ for $\rho = 1, .5, .125$, respectively. 
\algorithmname has median error $3.41, 4.76, 9.40$ in comparison.

\section{Experimental Evaluation of \algorithmname Without Scaling}\label{app:no-scaling}
Section~\ref{sec:generic-bounds} focused on the setting in which we run \algorithmname without variance estimates, i.e., avoid the scaling step.
In this section, we carry out the experiments from Section~\ref{sec:experiments} with this instantiation of \algorithmname as competitor.
From the result of~\Cref{thm:diameter:bound}, we expect this variant of \algorithmname to  behave asymptotically equivalent to \instanceoptimalshort.

\begin{figure*}[tb]
	\centering
	\includegraphics[width=\linewidth]{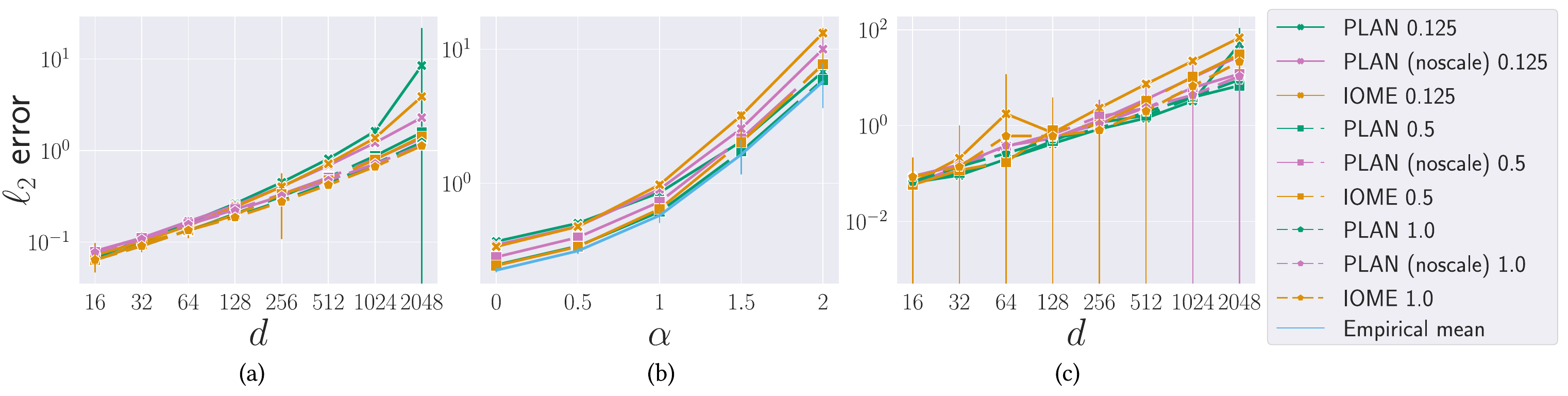}
	\caption{$\ell_2$ error for synthetic Gaussian data using \algorithmname without scaling: We vary (a) dimensions with data without a skew,
	(b) skewness of the variances, and
	(c) dimensions  for skewed data, cf.~Figure~\ref{fig:gaussian-experiments}.}\label{fig:gaussian-no-scaling}
\end{figure*}

\begin{figure*}[tb]
	\centering
	\includegraphics[width=\linewidth]{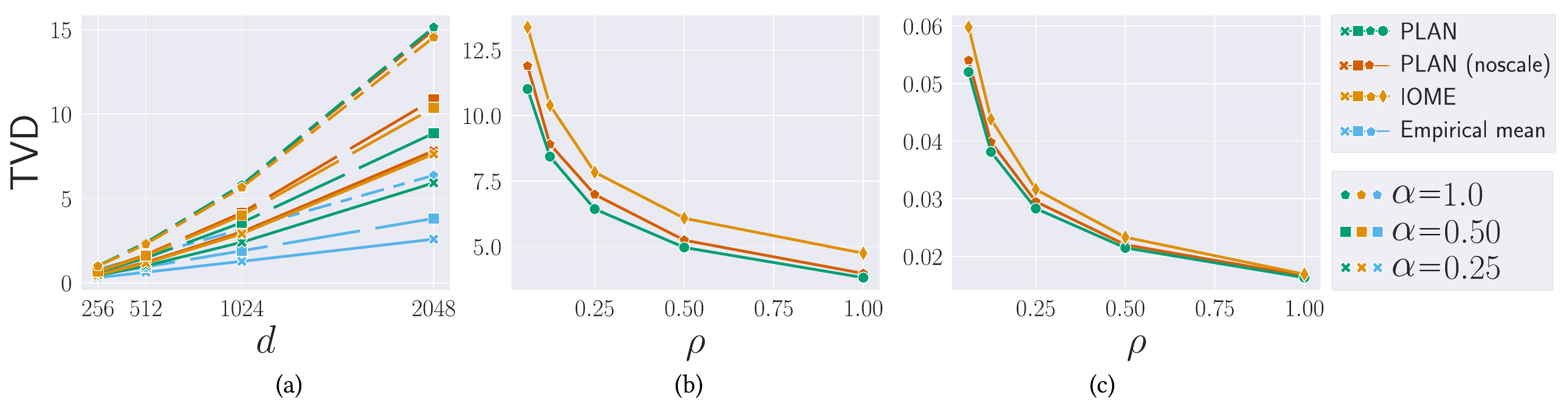}
	\caption{Experiments on synthetic, and real-world binary data.}
	\label{fig:binary-no-scaling}
\end{figure*}

Figure~\ref{fig:gaussian-no-scaling} reports on the results on Gaussian data, comparable with Figure~\ref{fig:gaussian-experiments}.
As we can see from the experiments, \algorithmname without scaling recovers the behavior from \instanceoptimalshort, presenting a slight improvement over \instanceoptimalshort's error.
Unsurprisingly, it improves over \algorithmname in the non-skewed case (which unnecessarily spends privacy budget on an inaccurate variance estimation). However, as soon as data has skew, the coordinate-wise scaling of \algorithmname improves the error.

Figure~\ref{fig:binary-no-scaling} summarizes the empirical utility of \algorithmname without scaling for synthetic, and the real-world binary datasets. As in the Gaussian case, \algorithmname without scaling recovers the behavior from \instanceoptimalshort. It again improves the average error slightly. 
For binary data, their is only a slight gap to the proposed variance-aware scaling.

\section{Empirical Evaluation of \algorithmname with Extended Dimensions}\label{app:gaussian-2048}

In \Cref{fig:gaussian-experiments-2048} we present the plots from the Gaussian case in \Cref{sec:experiment-results}, but here including $d=2048$.
For symmetry we also include \caseplausible in \Cref{fig:gaussian-experiments-2048}~(b), with no changes from the original plot in \Cref{fig:gaussian-experiments}.
In \Cref{fig:gaussian-experiments-2048}~(a) we see \algorithmname and \instanceoptimalshort have comparable accuracy until $d=2048$, when \algorithmname performs worse (for $\rho = 0.125$). This is because for such small $\rho$, the assumptions on the rank error no longer hold.
This is expected behavior as \algorithmname spends some additional budget estimating $\vectorize{\sigma}^2$ in comparison to \instanceoptimalshort.
Similarly in \Cref{fig:gaussian-experiments-2048}~(c) we observe that the error reported by \algorithmname has a high variance due to the rank error increasing. Again, this only happens for the smallest $\rho$ value.

\begin{figure*}[tb]
	\centering
	\includegraphics[width=\linewidth]{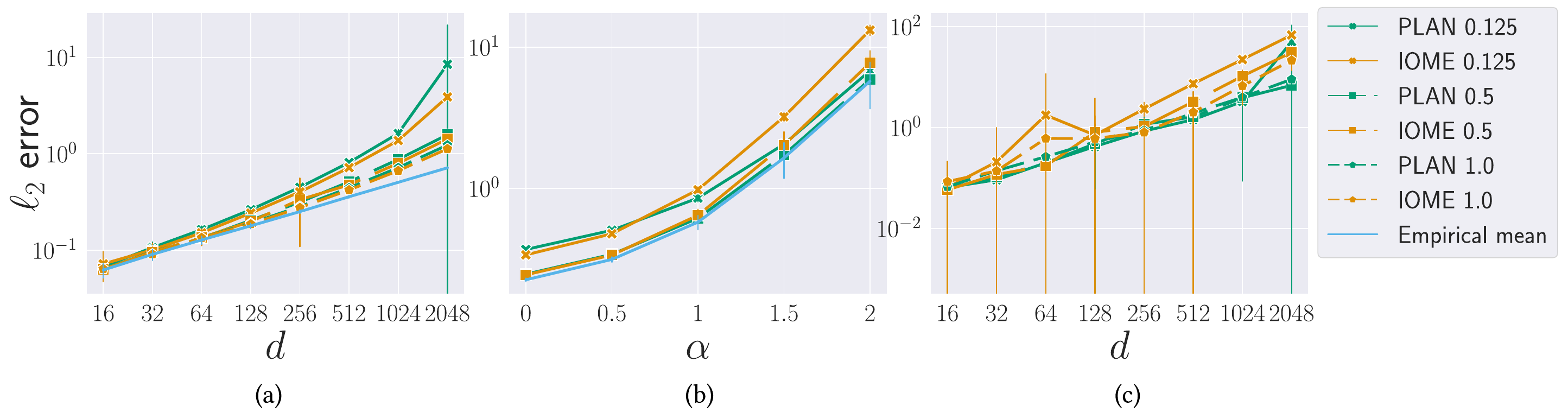}
	\caption{$\ell_2$ error for synthetic Gaussian data when varying (a) dimensions with data without a skew,
		(b) skewness of the variances, and
		(c) dimensions  for skewed data --- note that we compute error relative to the empirical mean rather than the statistical mean in this experiment as sampling error dominates in this setting. Also notice the different scales on the y-axis.}
	\label{fig:gaussian-experiments-2048}
\end{figure*}

\end{document}